\newcommand {\train}{\textnormal {(tr)}}
\newcommand {\valid}{\textnormal {(v)}}
\newcommand {\RA}{\textnormal {R}}
\def\T{{ \mathrm{\scriptscriptstyle T} }}
\def\##1\#{\begin{align}#1\end{align}}
\def\$#1\${\begin{align*}#1\end{align*}}
\renewcommand{\tr}{\textrm{trace}}
\def\T{{ \mathrm{\scriptscriptstyle T} }} 
\newcommand{\Rom}[1]{\text{\uppercase\expandafter{\romannumeral #1\relax}}}
\renewcommand{\numberline}[1]{%
  \@cftbsnum #1\@cftasnum~\@cftasnumb%
}
\begin{document}

\title{ \LARGE On self-training of summary data with genetic applications}     


\author{Buxin Su\thanks{Department of Mathematics, University of Pennsylvania; Email: \texttt{subuxin@sas.upenn.edu}.} 
\and  Jiaoyang Huang\thanks{Department of Statistics and Data Science, University of Pennsylvania; Email: \texttt{huangjy@wharton.upenn.edu}.}
\and  Jin Jin\thanks{Department of Biostatistics, Epidemiology and Informatics,  Perelman School of Medicine, University of Pennsylvania; Email: \texttt{jin.jin@pennmedicine.upenn.edu}.}
\and  Bingxin Zhao\thanks{Department of Statistics and Data Science, University of Pennsylvania; Email: \texttt{bxzhao@wharton.upenn.edu}.}\\~\\
}


\maketitle

\vspace{-0.25in}

\begin{abstract}
Prediction model training is often hindered by limited access to individual-level data due to privacy concerns and logistical challenges, particularly in biomedical research. 
Resampling-based self-training presents a promising approach for building prediction models using only summary-level data. 
These methods leverage summary statistics to sample pseudo datasets for model training and parameter optimization, allowing for model development without individual-level data. 
Although increasingly used in precision medicine, the general behaviors of self-training remain unexplored.
In this paper, we leverage a random matrix theory framework to establish the statistical properties of self-training algorithms for high-dimensional sparsity-free summary data. 
Notably, we demonstrate that, within a class of linear estimators, resampling-based
self-training achieves the same asymptotic predictive accuracy as conventional training methods that require individual-level datasets.
These results suggest that self-training with only summary data incurs no additional cost in prediction accuracy, while offering significant practical convenience. 
Our analysis provides several valuable insights and counterintuitive findings. For example, while pseudo-training and validation datasets are inherently dependent, their interdependence unexpectedly cancels out when calculating prediction accuracy measures, effectively preventing overfitting in self-training algorithms.
Furthermore, we extend our analysis to show that the self-training framework maintains this no-cost advantage when combining multiple methods (e.g., in ensemble learning) or when jointly training on data from different distributions (e.g., in multi-ancestry genetic data training). We numerically validate our findings through extensive simulations and real data analyses using the UK Biobank. 
Our study highlights the potential of resampling-based self-training to advance genetic risk prediction and other fields that make summary data publicly available.

\end{abstract}
\noindent
{\bf Keywords}:  Genetic prediction; Prediction models; Precision medicine; Random matrix theory; Resampling; Summary statistics.


\section{Introduction}\label{sec:1}

Developing prediction models—one of the major tasks in statistical learning and various scientific fields—typically relies on access to individual-level data for model training and validation. For example, in a traditional prediction model training process, individual-level datasets are required and are often split into independent training and validation subsets (or used in a cross-validation design) for parameter optimization, to avoid overfitting and ensure generalizability.
However, in many applications, accessing individual-level data is challenging. A notable example is genetic risk prediction in precision medicine \citep{lennon2024selection}, which uses genetic variants from genome-wide association studies (GWAS) \citep{uffelmann2021genome} as predictors. 
Due to privacy restrictions and logistical challenges, summary-level GWAS data (such as marginal genetic effect estimates and standard errors) rather than original individual-level genetic profiles, have become the standard for data sharing in genetic research \citep{pasaniuc2017dissecting}. Using these summary statistics, polygenic risk scores (PRS) have been widely developed to assess genetic risk for various complex traits and diseases \citep{purcell2009common}. Millions of genetic variants in GWAS are used as predictors, each contributing only a small amount of information \citep{boyle2017expanded}.
Over the last two decades, a wide variety of statistical methods have been developed to improve PRS performance by better aggregating predictive power across high-dimensional genetic predictors \citep{vilhjalmsson2015modeling,mak2017polygenic,ma2021genetic,ge2019polygenic,pattee2020penalized,hu2017leveraging,yang2020accurate}. 
Although most PRS prediction models only require summary-level data from the training sample as model input, they typically still need access to 
individual-level validation data during the training process for model parameter tuning \citep{choi2020tutorial,mak2017polygenic,vilhjalmsson2015modeling,marquez2021incorporating}. However, this individual-level validation dataset may not always be available to researchers who need to develop the prediction model, as sharing and accessing genetic datasets—even small validation data—can pose risks and raise concerns regarding data privacy and policy. Such privacy and logistical barriers have significantly limited the accessibility and scalability of PRS applications \citep{bonomi2020privacy,wan2022sociotechnical,jin2025pennprs}.

Recent advances in genetic fields have facilitated pseudo-training for PRS models directly using high-dimensional summary-level data \citep{zhao2024optimizing,zhao2021pumas,jiang2024tuning,chen2024fast,song2019summaryauc}. The principle of this approach is that, given the summary-level data, it is possible to sample pseudo-training and validation summary statistics from their underlying probability distribution. 
These sampled summary statistics closely mimic what would be obtained if there were access to two independent subsets of the individual-level data. 
Therefore, using these pseudo-training and validation datasets, it is possible to tune parameters and ultimately derive a prediction model, allowing for the self-training of summary statistics. 
Despite their strong preliminary numerical performance in several application examples \citep{jiang2024tuning,jin2025pennprs} and emerging extensions to other biomedical fields  \citep{wu2023large,wang2024integrating}, little statistical research has been conducted to understand the general properties of resampling-based self-training for summary-level data. 
It remains unclear whether these methods lead to potential reductions in model performance, especially regarding the trade-offs between practical convenience and decreased prediction accuracy, as well as the key factors influencing their performance in practical applications. Therefore, it is crucial to establish a framework to quantify the performance of self-training and compare it to conventional model training with individual-level data.

In this paper, we propose a general random matrix theory framework \citep{bai2010spectral,Yao_Zheng_Bai_2015,dobriban2021distributed} to model and understand self-training with high-dimensional summary-level data. Our study provides several key contributions. 
First, we establish the statistical properties of self-training algorithms for high-dimensional predictors with a general covariance structure, without imposing sparsity constraints on regression coefficients.
We demonstrate that, under flexible conditions, a class of linear estimators achieves the same asymptotic predictive accuracy as their counterparts trained using individual-level data. 
We provide detailed analytical evaluations of ridge-type estimators and marginal thresholding estimators, both of which, along with their variants, are widely used in PRS applications \citep{ma2021genetic}.
These analyses provide deep insights for practical applications and reveal notable counterintuitive findings. 
For example, unlike individual-level training and validation datasets, pseudo-training and validation datasets are inherently dependent, as they are derived from the same summary-level data. However, surprisingly, we find that this interdependence cancels out during the calculation of prediction accuracy measures, such as $R$-squared ($R^2$) or mean squared error. This phenomenon effectively prevents overfitting in self-training algorithms.
Furthermore, we show that self-training can facilitate the combination of different estimators using summary statistics within an ensemble learning framework. Self-training also enables joint training across multiple datasets from different distributions, which is particularly relevant to the emerging field of multi-ancestry genetic risk prediction \citep{kachuri2024principles,zhang2023new,jin2024mussel}. 
We numerically validate our theoretical findings with extensive simulations and real data analyses from the UK Biobank \citep{bycroft2018uk}. 

The rest of the paper proceeds as follows. 
In Section~\ref{sec:model}, we introduce the model setup and the framework for modeling self-training of summary data. 
Section~\ref{sec:general} presents the random matrix theory results for the class of linear estimators.
Section~\ref{sec:emsemble} extends the algorithm and analysis to ensemble learning. 
We model multi-ancestry genetic data resources in Section~\ref{sec:multi}. 
Numerical experiments are presented in Section~\ref{sec:numer}. 
In Section~\ref{sec:disc}, we discuss potential future research directions. Supplemental material collects proof of the main results and technical lemmas.
We define $\RR$ and $\RR_{+}$ as the sets of all real numbers and positive real numbers, respectively. For any positive natural number $p \in \NN_{>0}$, we define $[p]$ to be the set of $\{1, 2, \cdots ,p\}$. 
For two sequences of random variables $\{X_n\}_{n \in \NN}$ and $\{Y_n\}_{n \in \NN}$, we write $X_n = Y_n + o_p(1)$ or $X_n - Y_n \overset{p}{\to} 0$ when their difference converges in probability to $0$. 
Moreover, we use $X_n - Y_n \overset{d}{\to} 0$ to represent the convergence in distribution to $0$.
For any sequence of functions $f(\theta)$ and $g(\theta)$ with variable $\theta$, we say $f(\theta)$ is proportional to $g(\theta)$, $f(\theta) \propto g(\theta)$, if $f(\theta)/g(\theta) = c$ for any $\theta$ and some constant $c$ independent with $\theta$.

\section{Self-training of summary data}\label{sec:model}
\subsection{The model and data}\label{sec:model-1}
We specify the data generation model for the dataset $(\Xb, \yb)$, where $\Xb \in \RR^{n \times p}$ and $\yb \in \RR^{n}$.
The linear model relating the observation $\yb$ and design matrix $\Xb$ can be expressed as follows
\begin{align} \label{eqn:linear_model}
    \yb = \Xb \bbeta + \bepsilon,
\end{align}
where $\yb$ represents a phenotype and $\Xb \in \RR^{n \times p}$ denotes $p$ genetic variants. The genetic effects $\bbeta \in \RR^{p}$ and noise $\bepsilon \in \RR^{n}$ are random variables. 
The matrix $\Xb$ typically contains millions of single nucleotide polymorphisms (SNPs) from a large number of samples, for example, half a million participants in the UK Biobank \citep{bycroft2018uk}.
The following conditions on $\Xb$ are frequently used in the application of random matrix theory for such high-dimensional data \citep{dobriban2018high, ledoit2011eigenvectors,bai2010spectral}.
Condition \ref{cond-np-ratio} indicates a high-dimensional regime where the sample size and feature dimensionality are proportional.

\begin{condition} \label{cond-np-ratio} 
The sample size $n\to \infty$ while the dimensionality $p \to \infty$, such that the aspect ratio $p/n \to \gamma >0$. 
\end{condition}

Condition~\ref{cond-X} outlines the regularity conditions on $\Xb$ required by technical lemmas in random matrix theory, such as bounded moments and eigenvalues. 
Notably, these conditions do not assume a Gaussian distribution for the elements of $\Xb$ and have been shown to be robust to minor potential violations, such as the presence of small eigenvalues, which may arise in practical PRS applications due to the existence of highly correlated genetic variants \citep{zhao2024blockwise}. 
\begin{condition} \label{cond-X}
    We assume $\Xb = \Xb_0 \bSigma^{1/2}$. Entries of $\Xb_0$ is real-value i.i.d. random variables with mean zero, variance one, and a finite $4$-th order moment. The $\bSigma$ are $p \times p$ population level deterministic positive definite matrices with uniformly bounded eigenvalues. Specifically, we have $0 < c \leq \lambda_{\min}(\bSigma) \leq \lambda_{\max}(\bSigma) \leq C$ for all $p$ and some constants $c, C$, where $\lambda_{\min}(\cdot)$ and $\lambda_{\max}(\cdot)$ are the smallest and largest eigenvalues of a matrix, respectively.
\end{condition}

To establish the linear model framework, we use random-effect conditions on $\bbeta$, commonly used to model a large number of small genetic effects without imposing sparsity constraints \citep{jiang2016high,su2024exact}. 
Let $F(0, V)$ denote a generic distribution with
mean zero, variance $V$, and a finite $4$-th order moment. We introduce the following conditions on genetic effects.
\begin{condition}  \label{cond-beta}
There exist sparsity level $0 \leq \kappa \leq 1$ and signal strength $\sigma_{\bbeta}^2>0$ such that
the distribution of each coordinate of $\bbeta$, denoted as $\bbeta_{i}$, is an i.i.d. random variable that follows
\$
\bbeta_{i} \sim (1 - \kappa) \delta(0) + \kappa F(0, \sigma_{\bbeta}^2/p).
\$
Furthermore, as $n, p \to \infty$ with $p/n \to \gamma >0$, we assume that $\sum_{j=1}^p (\bbeta_{j})^2 \to \kappa \sigma_{\bbeta}^2$. 
\end{condition}

The following condition imposes similar conditions on the noise vector $\bepsilon$.
\begin{condition} \label{cond-eps} 
Random errors in $\bepsilon$ are independent random variables and each coordinate has the following distribution
\begin{align*}
    \bepsilon_i \overset{i.i.d.}{\sim} F(0, \sigma_{\bepsilon}^2), \quad \mbox{for} \quad 1 \leq i \leq n.
\end{align*}
\end{condition}

Two types of summary-level data are typically used for training prediction models. The first relates to the estimation of genetic effects, typically provided as marginal GWAS summary statistics and can be represented as $\Xb^{\T} \yb$ \citep{pasaniuc2017dissecting}. 
In practice, additional associated summary statistics, such as the variance of genetic effect estimates, $P$-values, and minor allele frequencies, are also commonly shared along with $\Xb^{\T} \yb$.
The second is information about the linkage disequilibrium (LD) pattern, $\bSigma$, which can be quantified by $\Xb^{\T} \Xb$. However, in practice, $\Xb^{\T} \Xb$ is not often shared and is thus not publicly available. To address this, researchers typically use an external reference panel $\Wb \in \RR^{n_w \times p}$ as a substitute for $\Xb$, with $\Wb^{\T} \Wb$ serving as an approximation for $\Xb^{\T} \Xb$ in LD estimation. One of the most popular reference panels is the 1000 Genomes \citep{10002015global}.
Therefore, the summary-level data considered in this paper are associated with $\Xb^{\T} \yb$ and $\Wb^{\T} \Wb$. 
Condition~\ref{cond-W} assumes that the reference panel $\Wb$ satisfies similar conditions to those imposed on $\Xb$.

\begin{condition} \label{cond-W}
    We assume $\Wb = \Wb_0 \bSigma^{1/2} \in \RR^{n_w \times p}$. Each entry of $\Wb_0$ is real-value i.i.d. as that of $\Xb_0$. In addition, the sample size $n_w \to \infty$ while the dimensionality $p \to \infty$, such that the aspect ratio $p/n_w \to\gamma_w > 0$. 
\end{condition}

We define the heritability in genetics as follows, representing the proportion of phenotypic variance attributable to genetic predictors.
Intuitively, a larger heritability implies a higher signal-to-noise ratio \citep{dobriban2018high}.

\begin{definition}
Conditional on $\bbeta$, the heritability $h^2$ of the training data $(\Xb, \yb)$ is defined as 
$h^2 = \lim_{n, p \to \infty} \var(\Xb \bbeta)/\var(\yb)= \lim_{n, p \to \infty} \bbeta^{\T} \Xb^{\T} \Xb \bbeta/(\bbeta^{\T} \Xb^{\T} \Xb \bbeta +\bepsilon^{\T} \bepsilon)$.
Thus, we have $h^2 \in [0, 1]$.
As $n, p \to \infty$ with $p/n \to \gamma >0$, $h^2$ can be asymptotically represented as 
\begin{align} \label{eqn:h^2}
\begin{split}
    h^2 
    = \lim_{n, p \to \infty} \frac{ \| \bbeta \|_{\bSigma}^2 }{\| \bbeta \|_{\bSigma}^2 + \sigma_{\bepsilon}^2}
    = \lim_{n, p \to \infty} \frac{\kappa \sigma_{\bbeta}^2 \cdot \tr(\bSigma)/p}{\kappa \sigma_{\bbeta}^2 \cdot \tr(\bSigma)/p + \sigma_{\bepsilon}^2}.
\end{split}
\end{align} 
\end{definition}

\subsection{Model training and performance measures}\label{sec:risk_measure}
In this section, we model the processes of individual-level and summary data-based model training and outline the objectives of this paper. 
\subsubsection{Individual data-based model training}
\label{sec:ind_split}
We first introduce the prediction estimators and their accuracy measures. 
We begin with the conventional case where individual-level data is accessible and assume that model development involves splitting the whole dataset ($\Xb$, $\yb$) into two independent subsets: a training dataset $(\Xb^{\train}, \yb^{\train})$ and a validation dataset $(\Xb^{\valid}, \yb^{\valid})$. 
We model the data-splitting procedure as follows: given the design matrix $\Xb$, we sample the training dataset by considering a diagonal matrix $\Qb = \diag \{q_1, q_2, \cdots q_n\}$, where $q_{i} \overset{i.i.d.}{\sim} \textnormal{Bernoulli}(n^{\train}/n)$ for some $n^{\train}$ that is proportional to and smaller than $n$. 
Let $\Xb^{\train} = \Qb \Xb$ and $\Xb^{\valid} = (\Ib_n - \Qb) \Xb$, with $\yb^{\train}$ and $\yb^{\valid}$ being defined accordingly. 
By Condition \ref{cond-X}, $(\Xb^{\train}, \yb^{\train})$  is conditionally independent from $(\Xb^{\valid}, \yb^{\valid})$. 
One can train a general estimator $\hat{\bbeta}_{\rm G}(\theta)$ on $(\Xb^{\train}, \yb^{\train})$ and evaluate its prediction performance on $(\Xb^{\valid}, \yb^{\valid})$ using the out-of-sample $R^2$, given by 
$R_{\rm ind, G}^2(\theta)=\langle {\Xb^{\valid}{}^{\T}\yb^{\valid}}, \hat{\bbeta}_{\rm G}(\theta)\rangle^2/\{\| \Xb^{\valid} \hat{\bbeta}_{\rm G}(\theta)\|_2^2 \cdot \|\yb^{\valid}\|_{2}^2\}$. 
The out-of-sample $R^2$ is a widely used measure of prediction accuracy in genetic data prediction \citep{ma2021genetic} and is closely related to the mean squared error \citep{su2024exact}.

In this paper, 
we consider a class of general estimators $\hat{\bbeta}_{\rm G}(\theta)$ that are linear with respect to the summary statistic $\Xb^{\train}{}^{\T} \yb^{\train}$ and parameterized by the scale or vector $\theta$, possibly incorporating the reference panel data $\Wb^{\T} \Wb$. This is because we aim to model the real-world scenario where summary statistics from the training dataset, $\Xb^{\train}{}^{\T} \yb^{\train}$, are often publicly available, and the general estimator $\hat{\bbeta}_{\rm G}(\theta)$ is trained using independent individual-level validation data $(\Xb^{\valid}, \yb^{\valid})$, possibly with external LD reference panel. 
Specifically, we define $\hat{\bbeta}_{\rm G}(\theta)$ as 
\begin{equation} \label{eqn:estimator}
    \hat{\bbeta}_{\rm G}(\theta)= \Ab(\Wb^{\T} \Wb, \theta) \Xb^{\train}{}^{\T} \yb^{\train}.
\end{equation}
Here $\Ab(\Wb^{\T} \Wb, \theta) \in \RR^{p \times p}$ is a matrix depending on $\theta$ and potentially on $\Wb^{\T} \Wb$.
We denote the estimator as $\hat{\bbeta}_{\rm G}(\theta)$ for a general scale or vector $\theta$.
In subsequent sections, we may use $\Theta$ to explicitly denote a vector parameter. For each specific estimator analyzed below, we will clearly specify the domain of $\theta$ or $\Theta$.
The out-of-sample $R^2$ of $\hat{\bbeta}_{\rm G}(\theta)$ is defined as
\begin{align}\label{eqn:R2_ind_def}
    \begin{split}
        R_{\rm ind, G}^2(\theta) 
        = \frac{\left\langle {\Xb^{\valid}{}^{\T}\yb^{\valid}}, \hat{\bbeta}_{\rm G}(\theta) \right\rangle^2}{ \| \Xb^{\valid} \hat{\bbeta}_{\rm G}(\theta)\|_2^2 \cdot \|\yb^{\valid}\|_{2}^2 } = \frac{n^{\valid}}{\|\yb^{\valid}\|_{2}^2} \cdot \frac{\left\langle {\Xb^{\valid}{}^{\T}\yb^{\valid}}, \Ab(\Wb^{\T} \Wb, \theta) \Xb^{\train}{}^{\T} \yb^{\train} \right\rangle^2}{n^{\valid} \cdot \| \Ab(\Wb^{\T} \Wb, \theta) \Xb^{\train}{}^{\T} \yb^{\train} \|_{\bSigma}^2} + o_{p}(1).
    \end{split}
\end{align}
In the model training process, one aims to find the best hyperparameter by choosing the 
$\theta$ that maximizes $R_{\rm ind, G}^2(\theta)$ in \eqref{eqn:R2_ind_def}, as specified in Algorithm \ref{alg:ind} of the
supplementary material. 

Estimators in the form of Equation~\eqref{eqn:estimator} and their variants are widely used in PRS applications \citep{power2015polygenic,prive2019making,ge2019polygenic,ma2021genetic}. 
An example is the ridge-type estimator, which can be formulated as 
\begin{equation} \label{eqn:ref-panel-ridge}
\hat{\bbeta}_{\rm R}(\theta) = (\Wb^{\T} \Wb + \theta n_w \Ib_{p})^{-1} \Xb^{\train}{}^{\T} \yb^{\train}
\end{equation}
for any $\theta \in \RR_{+}$.  
In the case of the ridge-type estimator, the best-performing hyperparameter in $\hat{\bbeta}_{\rm R}(\theta)$ is selected by optimizing the following expression
\begin{align*}
    \theta^{*}_{\rm ind, R} = \arg\max_{\theta \in \RR_{+}} R^2_{\rm ind, R}(\theta), \quad \text{with} \quad R^2_{\rm ind, R}(\theta) = \frac{n^{\valid}}{\|\yb^{\valid}\|_{2}^2} \cdot \frac{\left\langle {\Xb^{\valid}{}^{\T}\yb^{\valid}}, (\Wb^{\T} \Wb + \theta n_w \Ib_{p})^{-1} \Xb^{\train}{}^{\T} \yb^{\train} \right\rangle^2}{n^{\valid} \cdot \| (\Wb^{\T} \Wb + \theta n_w \Ib_{p})^{-1} \Xb^{\train}{}^{\T} \yb^{\train} \|_{\bSigma}^2}. 
\end{align*}
It is worth noting that since $n^{\valid}$  and $\|\yb^{\valid}\|_{2}^2$ are constant across all $\theta$, the key contribution of the validation data $(\Xb^{\valid}, \yb^{\valid})$ to the training process is through the item $\Xb^{\valid}{}^{\T}\yb^{\valid}$. As detailed in later sections, this key insight motivates the summary data-based model training approach.

\subsubsection{Summary data-based model training}
\label{sec:sum_split}
Now we consider the practical scenario where only summary statistics, $\Xb^{\T} \yb$ and $\Wb^{\T} \Wb$, are available, rather than having access to $\Xb^{\train}{}^{\T} \yb^{\train}$, $\Wb^{\T} \Wb$, and individual-level data $(\Xb^{\valid}, \yb^{\valid})$. 
It follows that we are not able to obtain the $R_{\rm ind, G}^2(\theta)$ defined in Equation~\eqref{eqn:R2_ind_def}. 

We model the resampling-based self-training procedure with $\Xb^{\T} \yb$ and $\Wb^{\T} \Wb$ as follows. 
Given the summary statistic $\Xb^{\T} \yb$, we sample pseudo-training and validation summary statistics, $\sbb^{\train}$ and $\sbb^{\valid}$, from the approximate distributions of $\Xb^{\train}{}^{\T} \yb^{\train}$ and $\Xb^{\valid}{}^{\T} \yb^{\valid}$, as detailed by pseudo-code in Algorithm \ref{alg:sum}. These sampled statistics are expected to closely mimic what would be obtained if individual-level data were available. For example, when only summary statistics are available, the ridge-type estimator is now trained on $\sbb^{\train}$ and given by 
\begin{align} \label{eqn:ref-panel-ridge-sum}
    \hat{\bbeta}_{\rm R}(\theta)^{*} = (\Wb^{\T} \Wb + \theta n_w \Ib_{p})^{-1} \sbb^{\train}
\end{align}
for any $\theta \in \RR_{+}$. 

\begin{algorithm}
\caption{Summary data-based model training}\label{alg:sum}
\begin{algorithmic}
\Require Summary data $\Xb^{\T} \yb$, $\Wb^{\T} \Wb$,  and hyperparameter $\theta$. 
\vspace{1mm}

\State $\hb \gets \cN(0,\Ib_p),$  \hfill 		\texttt{//} Sample $p$-dimension standard Gaussian random variable.
\vspace{1mm}

\State $\sbb^{\train} \gets \frac{n^{\train}}{n} \Xb^{\T} \yb + \sqrt{\frac{n^{\train} (n - n^{\train})}{n^2}} \Cov(\Xb^{\T} \yb)^{1/2} \hb,$  \hfill 		\texttt{//} Construct $\sbb^{\train}$ for training.
\vspace{1mm}

\State $\sbb^{\valid} \gets \Xb^{\T} \yb - \sbb^{\train},$  \hfill 		\texttt{//} Construct $\sbb^{\valid}$ for validation.
\vspace{1mm}

\State $\hat{\bbeta}_{\rm G}(\theta)^* \gets \Ab(\Wb^{\T} \Wb, \theta) \sbb^{\train},$  \hfill 		\texttt{//} Obtain the estimator.
\vspace{1mm}

\State $R_{\rm sum, G}^2(\theta) \gets ({n^{\valid}}\left/{\|\yb^{\valid}\|_{2}^2}\right.) \cdot {\left \langle {\sbb^{\valid}}, \hat{\bbeta}_{\rm G}(\theta)^* \right \rangle^{2} } \left/({n^{\valid} \cdot \| \hat{\bbeta}_{\rm G}(\theta)^* \|_{\bSigma}^{2}}), \right.$\vspace{1mm}

\Statex \hfill \texttt{//} Compute the $R^2_{\rm sum, G}(\theta)$ in \eqref{eqn:R2_sum_def}.
\vspace{1mm}

\State $\theta^{*}_{\rm sum, G} \gets \max_{\theta} R^2_{\rm sum, G}(\theta),$  \hfill 		\texttt{//} Choose the best-performing hyperparameter.
\vspace{1.5mm}

\Return $R_{\rm sum, G}^2(\theta)$ and $\theta^{*}_{\rm sum, G}$.

\end{algorithmic}
\end{algorithm}
Notably, in Algorithm \ref{alg:sum}, the hyperparameter $\theta$ is tuned on $\sbb^{\valid}$, with the measure being defined as follows
\begin{align} \label{eqn:R2_sum_def}
    R^2_{\rm sum, G}(\theta) = \frac{\left\langle \sbb^{\valid}, \hat{\bbeta}_{\rm G}(\theta)^* \right\rangle^2}{ \| \Xb^{\valid} \hat{\bbeta}_{\rm G}(\theta)^*\|_2^2 \cdot \|\yb^{\valid}\|_{2}^2 } = \frac{n^{\valid}}{\|\yb^{\valid}\|_{2}^2} \cdot \frac{\left \langle {\sbb^{\valid}}, \Ab(\Wb^{\T} \Wb, \theta) \sbb^{\train} \right \rangle^{2} }{n^{\valid} \cdot \| \Ab(\Wb^{\T} \Wb, \theta) \sbb^{\train} \|_{\bSigma}^{2}} + o_p(1).
\end{align}
The key difference between $R^2_{\rm sum, G}(\theta)$ from Algorithm \ref{alg:sum} and $R^2_{\rm ind, G}(\theta)$ from Algorithm \ref{alg:ind} lies in using $\sbb^{\train}$ and $\sbb^{\valid}$ in place of $\Xb^{\train}{}^{\T} \yb^{\train}$ and $\Xb^{\valid}{}^{\T} \yb^{\valid}$, respectively.
To compare the performance of Algorithm \ref{alg:sum} and Algorithm \ref{alg:ind}, we consider 
the ratio 
\begin{align*}
    R^2_{\rm sum, G}(\theta)/R^2_{\rm ind, G}(\theta).
\end{align*}
When $R^2_{\rm sum, G}(\theta)/R^2_{\rm ind, G}(\theta)$ converges to $1$ for any $\theta$ as $n$ and $p \to \infty$, we say Algorithm \ref{alg:sum} has no additional prediction accuracy cost compared to Algorithm \ref{alg:ind} and expect the two algorithms return the same best-performing hyperparameter, that is, we have $\theta^{*}_{\rm sum, G} = \theta^{*}_{\rm ind, G}$.

\subsection{Fixed-dimension intuition of Algorithm \ref{alg:sum}}
\label{sec:intuition}

Before presenting the formal results in the next section, we first outline the key intuitions behind resampling-based self-training when it is applied in practical contexts in precision medicine and genetic research \citep{jin2025pennprs}. 
Notably, while the PRS applications of the self-training algorithm lie in high-dimensional genetic predictors, its underlying intuition originates from classical fixed-dimension arguments. However, as with many statistical phenomena, we find that the simplicity of fixed-dimension intuition of self-training does not directly extend to high-dimensional settings, where increased complexity invalidates the original reasoning. The continued effectiveness of the self-training algorithm in high-dimensional data suggests the presence of deeper underlying factors, which require investigation using entirely different approaches and technical tools, such as the random matrix theory used in our formal analysis. 

We begin by closely examining the measure $R^2_{\rm ind, G}(\theta)$. Due to the linear structure in marginal summary statistics, we always have the data split ${\Xb^{\valid}{}^{\T}\yb^{\valid}} = \Xb^{\T} \yb - \Xb^{\train}{}^{\T}\yb^{\train}$. 
It follows that 
\begin{align*}
    R^2_{\rm ind, G}(\theta) 
    =\ & \frac{n^{\valid}}{\|\yb^{\valid}\|_{2}^2} \cdot \frac{\left\langle {\Xb^{\T} \yb - \Xb^{\train}{}^{\T}\yb^{\train}}, \Ab(\Wb^{\T} \Wb, \theta) \Xb^{\train}{}^{\T} \yb^{\train} \right\rangle^2}{n^{\valid} \cdot \| \Ab(\Wb^{\T} \Wb, \theta) \Xb^{\train}{}^{\T} \yb^{\train} \|_{\bSigma}^2} + o_p(1). 
\end{align*}
Therefore, conditional on the summary data $\Xb^{\T} \yb $ and $\Wb^{\T} \Wb$, we only need to resample once to obtain $\sbb^{\train}$ as an approximation of $\Xb^{\train}{}^{\T} \yb^{\train}$, and can then set $\sbb^{\valid}=\Xb^{\T} \yb -\sbb^{\train}$.
In practice, Algorithm \ref{alg:sum} samples $\sbb^{\train}$ from a $p$-dimensional Gaussian random variable with mean $(n^{\train}/n) \Xb^{\T} \yb$ and covariance $[n^{\train} (n - n^{\train})]/n^2 \cdot \Cov(\Xb^{\T} \yb)$, where $\Cov(\Xb^{\T} \yb)$ is defined to be
\begin{align*}
    \Cov(\Xb^{\T} \yb) = \left(\Xb^{\T} \yb - n \bSigma \bbeta \right) \left(\Xb^{\T} \yb - n \bSigma \bbeta \right)^{\T}.
\end{align*}
This sampling distribution is chosen based on the fixed-dimension scenario where $p$ is fixed, while the sample size $n \to \infty$. 
In such a setting, this sampling distribution of $\sbb^{\train}$ approaches the limiting distribution of $\Xb^{\train}{}^{\T} \yb^{\train}$ according to the multivariate central limit theorem (CLT).    
To show this, note that $\Xb^{\train}{}^{\T} \yb^{\train} = \Xb^{\T} \Qb^{\T} \Qb \yb = \Xb^{\T} \Qb \yb$. 
The multivariate CLT implies that
\begin{align*}
    \frac{1}{\sqrt{n}} \left( \Xb^{\train}{}^{\T} \yb^{\train} - \frac{n^{\train}}{n} \Xb^{\T} \yb \right) = \frac{1}{\sqrt{n}} \sum_{j=1}^{n} \left( q_{j} \Xb_j \yb_j - \frac{n^{\train}}{n} \Xb_j \yb_j \right) \overset{d}{\to} N(\bm 0, \bSigma),
\end{align*}
where $\bSigma$ is the covariance matrix given by
\begin{align*}
    \bSigma = \lim_{n \to \infty} \EE_{\Qb} \Cov\left( \frac{1}{\sqrt{n}} \sum_{j=1}^{n} q_{j} \Xb_j \yb_j \right) = \lim_{n \to \infty} \frac{n^{\train} (n - n^{\train})}{n^2} \cdot \frac{1}{n} \Cov\left( \Xb^{\T} \yb \right). 
\end{align*}
It further implies that 
$$n^{-1/2} \Xb^{\train}{}^{\T} \yb^{\train}  - n^{-1/2} \sbb^{\train} = n^{-1/2} \Xb^{\T} \Qb \yb - n^{-1/2} \sbb^{\train} \overset{d}{\to} 0.$$
By the continuous mapping theorem, we have
\begin{align*}
    R^2_{\rm ind, G}(\theta)/R^2_{\rm sum, G}(\theta) =\ &  \frac{\left\langle \Xb^{\T} \yb - \Xb^{\T} \Qb \yb, \Ab(\Wb^{\T} \Wb, \theta) \Xb^{\train}{}^{\T} \yb^{\train} \right\rangle^2}{n^{\valid} \cdot \| \Ab(\Wb^{\T} \Wb, \theta) \Xb^{\train}{}^{\T} \yb^{\train} \|_{\bSigma}^2} \Bigg/  \frac{\left\langle \Xb^{\T} \yb - \sbb^{\train}, \Ab(\Wb^{\T} \Wb, \theta) \sbb^{\train} \right\rangle^2}{n^{\valid} \cdot \| \Ab(\Wb^{\T} \Wb, \theta) \sbb^{\train} \|_{\bSigma}^2} 
     \overset{d}{\to} 1. 
\end{align*}
Because converging in distribution to a constant implies convergence in probability to the constant, we have $R^2_{\rm ind, G}(\theta)/R^2_{\rm sum, G}(\theta) \overset{p}{\to} 1$. 
In summary, when the dimensionality $p$ is fixed, Algorithm \ref{alg:sum} has no additional prediction accuracy cost compared to Algorithm \ref{alg:ind} by resampling $\sbb^{\train}$ directly from the asymptotic distribution of $\Xb^{\train}{}^{\T} \yb^{\train}$. 

While the above fixed $p$ discussions provide insight into the rationale behind the construction of $\sbb^{\train}$ in Algorithm \ref{alg:sum}, it is important to note that these CLT-based derivations in low dimensions cannot be easily extended to middle or high-dimensional applications, such as our PRS genetic data prediction problem, where both $n$ and $p \to \infty$, as described in Condition \ref{cond-np-ratio}. Briefly, this is because high-dimensional CLT is much more complex, and general results across the parameter space typically do not exist \citep{chernozhukov2017central, fang2021high}. 
In other words, the $p$-dimensional Gaussian distribution, from which $\sbb^{\train}$ is sampled, is no longer the asymptotic distribution of $\Xb^{\train}{}^{\T} \yb^{\train}$. 
Therefore, new tools and proof framework are required to evaluate the relative performance of Algorithm \ref{alg:sum} and Algorithm \ref{alg:ind} in high dimensions. 

Despite this mismatch in distributions, a closer examination of $R^2_{\rm ind, G}(\theta)$ and $R^2_{\rm sum, G}(\theta)$ suggests that we may only need the following approximations to hold
   \begin{align*}
        \EE_{\Xb, \yb} \sbb^{\train} \approx \EE_{\Xb, \yb, \Qb} \Xb^{\train}{}^{\T} \yb^{\train} \quad
        \mbox{and} \quad \EE_{\Xb, \yb} \left\langle \sbb^{\train}, 
        \Ab(\Wb^{\T} \Wb, \theta) \sbb^{\train} \right\rangle \approx \EE_{\Xb, \yb, \Qb} \left\langle \Xb^{\train}{}^{\T} \yb^{\train}, \Ab(\Wb^{\T} \Wb, \theta) \Xb^{\train}{}^{\T} \yb^{\train} \right\rangle.
    \end{align*}
This observation suggests that achieving the same prediction accuracy may only require the first and second moments of $\sbb^{\train}$ to match those of $\Xb^{\train}{}^{\T} \yb^{\train}$, rather than matching the entire distribution. 
Rigorous analyses of these approximations are provided in the next section, with the proof relying heavily on random matrix theory \citep{bai2010spectral,Yao_Zheng_Bai_2015} and deterministic equivalents \citep{dobriban2021distributed}. We derive the closed-form asymptotic results for $R^2_{\rm ind, G}(\theta)$ and $R^2_{\rm sum, G}(\theta)$, demonstrating that the no-cost property of Algorithm \ref{alg:sum} remains valid in high dimensions without requiring the sampling distribution of  $\sbb^{\train}$ to be the asymptotic distribution of $\Xb^{\train}{}^{\T} \yb^{\train}$.

\section{Asymptotic results of self-training}
\label{sec:general}
In this section, we present random matrix theory results to demonstrate that self-training with $\sbb^{\train}$ can achieve the same asymptotic predictive accuracy as using $(\Xb^{\train}, \yb^{\train})$ in high dimensions. In Section \ref{sec:ref_ridge}, we examine the reference panel-based ridge estimator defined in Equation~\eqref{eqn:ref-panel-ridge}. The marginal thresholding estimator is evaluated in Section \ref{sec:marginal}. These two concrete examples are motivated by the widely used ridge-type estimators and thresholding procedures in genetic risk prediction \citep{ge2019polygenic, choi2020tutorial}. 
In Section \ref{sec:general_lin}, we further provide a general theorem for the class of linear estimators defined in Equation~\eqref{eqn:estimator}.

\subsection{Resampling-based ridge estimator}
\label{sec:ref_ridge}
We quantify the out-of-sample performance of reference panel-based ridge estimators $\hat{\bbeta}_{\rm R}(\theta)$ in Equation \eqref{eqn:ref-panel-ridge} and $\hat{\bbeta}_{\rm R}(\theta)^*$ in Equation \eqref{eqn:ref-panel-ridge-sum} for $\theta \in \RR_{+}$, denoted as $R^2_{\rm ind, R}(\theta)$ and $R^2_{\rm sum, R}(\theta)$, respectively. 
Our analysis mainly relies on deterministic equivalents \citep{dobriban2021distributed}, which is a recent tool established based on standard random matrix theory \citep{bai2010spectral}. 
For deterministic or random matrix sequences $\Db$ and $\Eb \in \RR^{n \times p}$ and $n, p \to \infty$ proportionally, we say $\Db$ and $\Eb$ are deterministic equivalent and denote this as $\Db \asymp \Eb$ if
$ \lim_{n,p \to \infty} \left| \tr \left[ \Cb (\Db - \Eb) \right] \right| = 0$
almost surely for any sequence $\Cb$ of matrices with bounded trace norm, that is
\begin{align} \label{eqn:bnd_trace_norm}
    \lim \sup_{n,p \to \infty} \tr \left[ (\Cb^{\T} \Cb)^{1/2} \right] < \infty.
\end{align}
Here $\Cb$, $\Db$, and $\Eb$ are not necessarily symmetric. 
We present the detailed results on deterministic equivalence in Section \ref{sec:prelim} of the supplementary material. 
The following lemma provides the first and second-order generalization of the classic Marchenko-Pastur Law for the sample covariance matrix. 
\begin{lemma} \label{lemma:DE_second_order}
Let $\hat{\bSigma}_{n} = \Xb^{\T} \Xb/n$.
Under Conditions \ref{cond-np-ratio} - \ref{cond-X}, for any $\theta \in \RR_{+}$, with probability one, we have 
    $(\hat{\bSigma}_{n} + \theta  \Ib_{p})^{-1} \asymp (\tau_{n} (\theta) \bSigma + \theta \Ib_{p})^{-1}$
and
\begin{equation} \label{eqn:second_order_MP}
\begin{split}
    \left(\hat{\bSigma}_{n} + \theta \Ib_{p}\right)^{-1} \bSigma \left( \hat{\bSigma}_{n}  + \theta \Ib_{p}\right)^{-1}  \asymp (\rho_{n}(\theta) + 1) \cdot \left( \tau_{n}(\theta)\bSigma  + \theta \Ib_{p}\right)^{-1} \bSigma \left(\tau_{n}(\theta)\bSigma  + \theta \Ib_{p}\right)^{-1}.
\end{split}
\end{equation}
Here $\tau_{n}(\theta)$ and $\rho_{n}(\theta)\in \CC_{+}$ are solutions to the fixed point equations 
    \begin{equation} \label{eqn:tau}
        \tau_{n}(\theta)^{-1}=1+\frac{1}{n}\tr\left[\bSigma (\Ab + \tau_{n}(\theta) \bSigma + \theta \Ib_p)^{-1}\right]
    \end{equation}
    and 
    \begin{align} \label{eqn:rho}
        \rho_{n}(\theta) = \left(1-\frac{\tau_{n}^2(\theta)}{n}\tr\left[(\tau_{n}(\theta) \bSigma + \theta \Ib_{p})^{-2} \bSigma^2 \right]\right)^{-1} \frac{\tau_{n}^2(\theta)}{n}\tr\left[(\tau_{n}(\theta) \bSigma + \theta \Ib_{p})^{-2} \bSigma^2 \right] .
    \end{align}
\end{lemma}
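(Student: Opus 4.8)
The plan is to prove the two deterministic equivalences in sequence, obtaining the second-order statement \eqref{eqn:second_order_MP} from the first by a perturbation-and-differentiation argument. Throughout, write $\mathbf{R}(\theta)=(\hat{\bSigma}_{n}+\theta\Ib_p)^{-1}$ for the resolvent, and recall that $\Db\asymp\Eb$ means $\tr[\Cb(\Db-\Eb)]\to 0$ almost surely for every $\Cb$ of bounded trace norm.

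For the first-order equivalence $\mathbf{R}(\theta)\asymp(\tau_n(\theta)\bSigma+\theta\Ib_p)^{-1}$, I would follow the classical Marchenko--Pastur/Silverstein route in its deterministic-equivalent formulation. Factor $\hat{\bSigma}_{n}=\tfrac1n\bSigma^{1/2}\Xb_0^{\T}\Xb_0\bSigma^{1/2}$ and, for a fixed test matrix $\Cb$, expand $\tr[\Cb\mathbf{R}(\theta)]$ by removing one row of $\Xb_0$ at a time via the Sherman--Morrison identity, with leave-one-out resolvent $\mathbf{R}^{(j)}$. Condition~\ref{cond-X} (i.i.d.\ entries, finite fourth moment, $0<c\le\lambda_{\min}(\bSigma)\le\lambda_{\max}(\bSigma)\le C$) together with Condition~\ref{cond-np-ratio} lets me invoke the standard concentration-of-quadratic-forms (trace) lemma, so the quadratic form in each removed row concentrates around $\tfrac1n\tr[\bSigma\mathbf{R}^{(j)}(\theta)]$, the rank-one corrections being negligible because $\Cb$ has bounded trace norm. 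Summing the resulting self-consistent relation yields the fixed-point equation \eqref{eqn:tau} for $\tau_n(\theta)$, whose solvability and uniqueness in $\CC_+$ follow from the usual monotonicity/Stieltjes argument. Since this reduces to a known result in the deterministic-equivalent literature cited in the excerpt, I would state it as such and concentrate effort on the second-order term.

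For \eqref{eqn:second_order_MP}, the key observation is that a factor $\bSigma$ can be inserted between two resolvents by differentiating a perturbed resolvent. Introduce $\mathbf{R}_t=(\hat{\bSigma}_{n}+\theta\Ib_p+t\bSigma)^{-1}$, so that $\partial_t\mathbf{R}_t\big|_{t=0}=-\mathbf{R}(\theta)\bSigma\mathbf{R}(\theta)$. Applying the first-order equivalence to the perturbed matrix (again of the required form, with $\theta\Ib_p$ replaced by the positive deterministic matrix $\theta\Ib_p+t\bSigma$) gives $\mathbf{R}_t\asymp(s(t)\bSigma+\theta\Ib_p)^{-1}$, where $s(t)=\tau_n(\theta,t)+t$ and $\tau_n(\theta,t)$ solves the correspondingly perturbed fixed-point equation. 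Differentiating that equation implicitly in $t$ at $t=0$ and writing $M_n(\theta)=\tfrac{\tau_n^2(\theta)}{n}\tr[(\tau_n(\theta)\bSigma+\theta\Ib_p)^{-2}\bSigma^2]$, the relation collapses to $s'(0)=(1-M_n(\theta))^{-1}=\rho_n(\theta)+1$, exactly matching \eqref{eqn:rho}. Differentiating the equivalence $\mathbf{R}_t\asymp(s(t)\bSigma+\theta\Ib_p)^{-1}$ at $t=0$ then yields
\begin{align*}
\mathbf{R}(\theta)\bSigma\mathbf{R}(\theta)\asymp s'(0)\,(\tau_n(\theta)\bSigma+\theta\Ib_p)^{-1}\bSigma(\tau_n(\theta)\bSigma+\theta\Ib_p)^{-1},
\end{align*}
which is \eqref{eqn:second_order_MP}.

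The hard part will be justifying that $\asymp$ may be differentiated in $t$ termwise. For fixed $\Cb$, both $f_n(t):=\tr[\Cb\mathbf{R}_t]$ and its deterministic counterpart $f(t):=\tr[\Cb(s(t)\bSigma+\theta\Ib_p)^{-1}]$ are analytic in $t$ on a complex neighborhood of $0$ (the perturbed matrix retains a uniform spectral gap for $\theta>0$ and small $t$, using $\lambda_{\min}(\bSigma)\ge c$). I would therefore prove the first-order equivalence \emph{uniformly} for $t$ in a small disc, so that $f_n(t)-f(t)\to 0$ a.s.\ locally uniformly, and then invoke Vitali's convergence theorem (or a Cauchy-integral bound on $f_n'-f'$) to transfer the convergence to the derivatives at $t=0$. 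Controlling the uniformity of the concentration estimates over the disc, and verifying that $s(t)$ inherits the required analyticity while $\Cb(s(t)\bSigma+\theta\Ib_p)^{-1}$ retains bounded trace norm, is the main technical burden. As an alternative that avoids differentiation, one can compute $\tr[\Cb\mathbf{R}(\theta)\bSigma\mathbf{R}(\theta)]$ directly by applying the leave-one-out expansion twice; the correlation between the two resolvents generates a geometric series of correction terms whose sum is precisely $(1-M_n(\theta))^{-1}=\rho_n(\theta)+1$, recovering the same factor.
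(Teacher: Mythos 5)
Your proposal follows essentially the same route as the paper's proof: take the generalized Marchenko--Pastur law (Rubio--Mestre) for the first-order equivalence, perturb the resolvent to $(\hat{\bSigma}_{n}+t\bSigma+\theta\Ib_p)^{-1}$, implicitly differentiate the fixed-point equation at $t=0$ to get the factor $s'(0)=\rho_{n}(\theta)+1=(1-M_n(\theta))^{-1}$, and differentiate the equivalence to insert $\bSigma$ between the two resolvents. The only difference is in how the interchange of differentiation and $\asymp$ is justified: where you propose an analyticity-plus-Vitali argument (the "hard part"), the paper simply invokes the differentiation rule for deterministic equivalents (Lemma \ref{lemma:diff_deter_equi}, from \cite{dobriban2021distributed}), whose proof is precisely the Vitali-type argument you sketch, so your proposal is correct and effectively reproves that cited lemma inline.
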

The proof of Lemma \ref{lemma:DE_second_order} is based on the generalized Marchenko-Pastur Law (e.g., \cite{rubio2011spectral}) and Lemma \ref{lemma:diff_deter_equi} in Section \ref{sec:prelim} of the
supplementary material, stating that differentiation and deterministic equivalence are interchangeable. 
Lemma~\ref{lemma:second_moment} below is also crucial to our proof, as it establishes the asymptotic results for the trace of a matrix structure involving $\Cb \hat{\bSigma}_{n}^2$. The detailed proof is included in Section \ref{sec:proof_of_second_mom} of the
supplementary material.

\begin{lemma} \label{lemma:second_moment}
    For any deterministic positive semi-definite symmetric matrix $\Cb \in \RR^{p \times p}$ satisfying Equation \eqref{eqn:bnd_trace_norm}, we have
    \begin{align} \label{eqn:second_moment}
        \frac{1}{p} \tr\left(\Cb \hat{\bSigma}_{n}^2 \right) - \left\{ \frac{1}{n} \tr \left(\bSigma \right) \cdot \frac{1}{p} \tr \left(\Cb \bSigma \right) + \frac{1}{p} \tr \left(\Cb \bSigma^2 \right) \right\} \overset{p}{\to} 0. 
    \end{align}
\end{lemma}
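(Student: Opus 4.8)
The plan is to reduce the statement to a direct moment computation for the quartic form in $\Xb_0$, followed by a concentration argument. Writing $\Xb_0$ in terms of its i.i.d. rows $\bm{x}_1^{\T},\dots,\bm{x}_n^{\T}\in\RR^{p}$ and setting $\bm{M}:=\bSigma^{1/2}\Cb\bSigma^{1/2}$, which is positive semidefinite since $\Cb$ is, I would first expand
\begin{align*}
\tr\bigl(\Cb\hat{\bSigma}_{n}^{2}\bigr)=\frac{1}{n^{2}}\tr\bigl(\bm{M}\,\Xb_0^{\T}\Xb_0\,\bSigma\,\Xb_0^{\T}\Xb_0\bigr)=\frac{1}{n^{2}}\sum_{i,j=1}^{n}\bigl(\bm{x}_i^{\T}\bSigma\bm{x}_j\bigr)\bigl(\bm{x}_j^{\T}\bm{M}\bm{x}_i\bigr).
\end{align*}
This isolates the two regimes that produce the two target terms: the off-diagonal pairs $i\neq j$ and the diagonal pairs $i=j$.

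For the expectation I would treat the two regimes separately. For $i\neq j$, independence of $\bm{x}_i,\bm{x}_j$ together with $\EE[\bm{x}_j\bm{x}_j^{\T}]=\Ib_p$ gives $\EE[(\bm{x}_i^{\T}\bSigma\bm{x}_j)(\bm{x}_j^{\T}\bm{M}\bm{x}_i)]=\tr(\bSigma\bm{M})=\tr(\Cb\bSigma^{2})$, and there are $n(n-1)$ such pairs. For $i=j$ I would invoke the standard quartic-form identity for a vector with independent mean-zero, unit-variance, finite fourth-moment coordinates,
\begin{align*}
\EE\bigl[(\bm{x}^{\T}\bSigma\bm{x})(\bm{x}^{\T}\bm{M}\bm{x})\bigr]=\tr(\bSigma)\tr(\bm{M})+2\tr(\bSigma\bm{M})+(\nu_4-3)\sum_{k=1}^{p}\bSigma_{kk}\bm{M}_{kk},
\end{align*}
where $\nu_4:=\EE[(\Xb_0)_{11}^{4}]$, and here $\tr(\bm{M})=\tr(\Cb\bSigma)$, $\tr(\bSigma\bm{M})=\tr(\Cb\bSigma^{2})$. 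Collecting the $n(n-1)$ off-diagonal and $n$ diagonal contributions and dividing by $pn^{2}$, the leading terms are exactly $\frac{1}{n}\tr(\bSigma)\cdot\frac{1}{p}\tr(\Cb\bSigma)+\frac{1}{p}\tr(\Cb\bSigma^{2})$, while the remainder equals $\frac{1}{np}\tr(\Cb\bSigma^{2})+\frac{\nu_4-3}{np}\sum_{k}\bSigma_{kk}\bm{M}_{kk}$. I would bound this remainder by $O(1/(np))=o(1)$: the first piece since $\tr(\Cb\bSigma^{2})$ is controlled by the bounded spectrum of $\bSigma$ and the bounded trace norm of $\Cb$ in \eqref{eqn:bnd_trace_norm}, and the kurtosis piece via $0\le\sum_{k}\bSigma_{kk}\bm{M}_{kk}\le\lambda_{\max}(\bSigma)\,\tr(\bm{M})=O(1)$, using positive semidefiniteness of $\bm{M}$ and Condition~\ref{cond-X}. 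This shows $\EE[\frac{1}{p}\tr(\Cb\hat{\bSigma}_n^2)]$ matches the claimed limit.

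It then remains to show the fluctuation $\frac{1}{p}\tr(\Cb\hat{\bSigma}_n^2)-\EE[\frac{1}{p}\tr(\Cb\hat{\bSigma}_n^2)]\overset{p}{\to}0$, which I expect to be the main obstacle. The quantity is a quartic form in the entries of $\Xb_0$, so a naive variance bound would require eighth-order moments, whereas Condition~\ref{cond-X} only supplies a finite fourth moment. I would therefore follow the standard truncation route from random matrix theory (in the spirit of \cite{bai2010spectral}): truncate, recenter, and rescale the entries of $\Xb_0$ at a level $\eta_n\sqrt{n}$ with $\eta_n\to0$, and argue from the finite fourth-moment assumption that this substitution perturbs $\frac{1}{p}\tr(\Cb\hat{\bSigma}_n^2)$ by $o_p(1)$. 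For the truncated matrix, whose entries are now bounded and hence possess all moments, I would control the variance directly---most cleanly through a martingale decomposition of $\tr(\Cb\hat{\bSigma}_n^2)$ along the rows $\bm{x}_i$, bounding each martingale difference using the bounded entries together with the trace-norm bound on $\Cb$ and the bounded spectrum of $\bSigma$---and conclude by Chebyshev's inequality that the variance is $o(1)$. Combining the truncation error with the concentration of the truncated form yields the claim.
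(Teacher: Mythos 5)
Your proposal is correct, and it has the same two-part skeleton as the paper's proof---first match the expectation, then show concentration around it---but it executes both parts by hand where the paper relies on citations. The paper's proof reads the expectation off from Lemma S.9 of \cite{fu2024uncertainty} (applied with $\Pb = \Rb = \bSigma^{1/2}$ and $\Qb = \bSigma^{1/2}\Cb\bSigma^{1/2}$) and quotes a concentration inequality from the proof of Lemma S.16 of the same reference. Your row-pair expansion of $\tr(\Cb\hat{\bSigma}_n^2)$, with the off-diagonal pairs producing $\tr(\Cb\bSigma^2)$ and the diagonal pairs producing $\tr(\bSigma)\tr(\Cb\bSigma)$ plus the $O(1/n)$ and kurtosis corrections, is exactly a re-derivation of that Lemma S.9 specialized to the trace (one can check the coefficients agree: your $n(n-1) + 2n$ off-diagonal/diagonal count reproduces their $(n+1)$ factor), and your handling of the error terms---using positive semidefiniteness of $\bm{M}$ and \eqref{eqn:bnd_trace_norm} to control the kurtosis piece---is sound. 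For concentration you correctly diagnose the genuine technical obstacle, namely that a naive variance bound on a quartic form needs eighth moments while Condition \ref{cond-X} supplies only a finite fourth moment, and you propose the standard remedy (truncation at $\eta_n\sqrt{n}$, then a martingale decomposition along rows plus Chebyshev, in the spirit of \cite{bai2010spectral}); the paper never surfaces this issue because it is buried inside the cited lemma. The trade-off is clear: your route is self-contained and makes the moment bookkeeping explicit, but the concentration step remains a sketch---the martingale-difference bounds for the truncated quartic form, and the verification that truncation/recentering perturbs the functional by only $o_p(1)$, would still need to be written out---whereas the paper's proof is complete but only modulo external results.
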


Lemma \ref{lemma:DE_second_order} and Lemma~\ref{lemma:second_moment} are used to analyze the limits of functionals involved in ridge-type estimators.
Theorem~\ref{thm:reference_panel} below provides the asymptotic prediction accuracy of $\hat{\bbeta}_{\rm R}(\theta)$ and $\hat{\bbeta}_{\rm R}(\theta)^*$ respectively trained by Algorithm \ref{alg:ind} and Algorithm \ref{alg:sum}, demonstrating that Algorithm \ref{alg:sum} has no additional cost compared to Algorithm \ref{alg:ind} for reference panel-based ridge regression.

\begin{theorem} \label{thm:reference_panel}
    Consider any random sequence $\{\bbeta, \bepsilon, \Xb, \Wb\}_{(p, n, n_w) \in \NN^3}$ satisfying Conditions \ref{cond-np-ratio}-\ref{cond-W} and any $\theta \in \RR_{+}$. For $\hat{\bbeta}_{\rm R}(\theta)^*$ defined in Equation \eqref{eqn:ref-panel-ridge-sum}, 
    the out-of-sample $R^2$ is 
    $$
    R^2_{\rm sum, R}(\theta) = \frac{n^{\valid}}{\|\yb^{\valid}\|_{2}^2} \cdot \frac{n^{\train}}{p} \cdot \kappa \sigma_{\bbeta}^2 \cdot \frac{\left( \tr \left[ \left( \tau_{n_w}(\theta) \bSigma + \theta \Ib_p \right)^{-1} \bSigma^2 \right]\right)^2}{\tr \left(\bSigma \right) \cdot \tr \left(\Bb(\theta) \bSigma \right)/h^2 + n^{\train} \cdot \tr \left(\Bb(\theta) \bSigma^2 \right)}  + o_p(1),
    $$ 
   where $\Bb(\theta)=(\rho_{n_w}(\theta) + 1)\left( \tau_{n_w}(\theta)\bSigma + \theta \Ib_{p}\right)^{-1} \bSigma \left(\tau_{n_w}(\theta)\bSigma + \theta \Ib_{p}\right)^{-1}$ is the right-hand side of Equation \eqref{eqn:second_order_MP},
    and $\tau_{n_w}(\theta)$ and $\rho_{n_w}(\theta)$ are defined in Equations \eqref{eqn:tau} and \eqref{eqn:rho} by replacing $n$ by $n_w$, respectively.
    Furthermore, for any $\theta \in \RR_{+}$, we have $$R^2_{\rm sum, R} (\theta) / R^2_{\rm ind, R} (\theta) \overset{p}{\to} 1.$$ 
\end{theorem}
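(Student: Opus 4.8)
The plan is to derive closed-form asymptotic limits for the numerator and denominator of $R^2_{\rm sum, R}(\theta)$ and then compare them with those of $R^2_{\rm ind, R}(\theta)$. Because $\Ab(\Wb^\T\Wb,\theta)=(\Wb^\T\Wb+\theta n_w\Ib_p)^{-1}=\frac{1}{n_w}\left(\Wb^\T\Wb/n_w+\theta\Ib_p\right)^{-1}$ involves only the reference panel $\Wb$, which is independent of $(\Xb,\yb,\bbeta,\bepsilon,\hb)$, I would first condition on the latter and invoke Lemma \ref{lemma:DE_second_order} with $(\Xb,n)$ replaced by $(\Wb,n_w)$ to substitute the two random functionals of $\Wb$ by their deterministic equivalents, $\Ab\asymp\frac{1}{n_w}\bar{\Ab}$ with $\bar{\Ab}=(\tau_{n_w}(\theta)\bSigma+\theta\Ib_p)^{-1}$, and $\Ab\bSigma\Ab\asymp\frac{1}{n_w^2}\Bb(\theta)$. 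Using the rank-one form of $\Cov(\Xb^\T\yb)$, the sampled statistics reduce to $\sbb^{\train}=\frac{n^{\train}}{n}\ub+c\,\vb z$ and $\sbb^{\valid}=\frac{n^{\valid}}{n}\ub-c\,\vb z$, where $\ub=\Xb^\T\yb$, $\vb=\Xb^\T\yb-n\bSigma\bbeta$, $c=\sqrt{n^{\train}n^{\valid}}/n$, and $z$ is a standard scalar Gaussian. Expanding $\langle\sbb^{\valid},\Ab\sbb^{\train}\rangle$ and $(\sbb^{\train})^\T\Ab\bSigma\Ab\sbb^{\train}$ then expresses both in terms of $\ub^\T\bar{\Ab}\ub$, $\ub^\T\bar{\Ab}\vb$, $\vb^\T\bar{\Ab}\vb$ and their $\Bb(\theta)$-analogues, each multiplied by a deterministic coefficient that is a polynomial in $z$.

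The second step evaluates these forms over the randomness of $\Xb,\bbeta,\bepsilon$. Writing $\yb=\Xb\bbeta+\bepsilon$, the $\bepsilon$-cross terms vanish, the pure noise part contributes $\bepsilon^\T\Xb\Mb\Xb^\T\bepsilon\approx\sigma_{\bepsilon}^2\,n\,\tr(\Mb\bSigma)$, and the random-effect condition on $\bbeta$ turns the signal part into $\frac{\kappa\sigma_{\bbeta}^2}{p}\tr(\Xb^\T\Xb\,\Mb\,\Xb^\T\Xb)=\frac{\kappa\sigma_{\bbeta}^2}{p}n^2\tr(\Mb\hat{\bSigma}_n^2)$, where $\Mb$ denotes either $\bar{\Ab}$ or $\Bb(\theta)$. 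Here the naive replacement $\Xb^\T\Xb\approx n\bSigma$ is insufficient, and Lemma \ref{lemma:second_moment} is indispensable: it gives $\tr(\Mb\hat{\bSigma}_n^2)\approx\frac{1}{n}\tr(\bSigma)\tr(\Mb\bSigma)+\tr(\Mb\bSigma^2)$, and through the same mechanism the fluctuation identity $\tr(\Mb(\Xb^\T\Xb-n\bSigma)^2)\approx n\,\tr(\bSigma)\tr(\Mb\bSigma)$. From these I obtain the key relations $\ub^\T\bar{\Ab}\ub\approx n^2\frac{\kappa\sigma_{\bbeta}^2}{p}\tr(\bar{\Ab}\bSigma^2)+\vb^\T\bar{\Ab}\vb$, $\ub^\T\bar{\Ab}\vb\approx\vb^\T\bar{\Ab}\vb$, and $\vb^\T\Mb\vb\approx n\,\tr(\Mb\bSigma)\bigl(\kappa\sigma_{\bbeta}^2\tr(\bSigma)/p+\sigma_{\bepsilon}^2\bigr)$, the last bracket equalling $(\kappa\sigma_{\bbeta}^2\tr(\bSigma)/p)/h^2$ by the definition of heritability; this is the source of the $1/h^2$ factor in the statement.

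The crux, which I expect to be the main obstacle, is handling the resampling noise so that the dependence between $\sbb^{\train}$ and $\sbb^{\valid}$ cancels. Substituting the relations above, the coefficient of $\vb^\T\bar{\Ab}\vb$ in the numerator equals $n^{\train}n^{\valid}/n^2+\frac{n^{\valid}-n^{\train}}{n}c z-c^2 z^2$, whose mean over the resampling draw is exactly zero; hence only the $\tr(\bar{\Ab}\bSigma^2)$ signal survives and the numerator becomes proportional to $\bigl(\tr[(\tau_{n_w}(\theta)\bSigma+\theta\Ib_p)^{-1}\bSigma^2]\bigr)^2$, while in the denominator the coefficient of $\vb^\T\Bb(\theta)\vb$ reduces to $n^{\train}/n$, which supplies precisely the $\tr(\bSigma)\tr(\Bb(\theta)\bSigma)/h^2$ contribution and, together with the signal, yields $\tr(\bSigma)\tr(\Bb(\theta)\bSigma)/h^2+n^{\train}\tr(\Bb(\theta)\bSigma^2)$; assembling the fraction and normalizing then gives the stated formula for $R^2_{\rm sum, R}(\theta)$. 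I would justify replacing $z$ and $z^2$ by their means by establishing concentration of the relevant quadratic forms, and I would verify the match by carrying out the parallel computation of $R^2_{\rm ind, R}(\theta)$: there the subsampling matrix $\Qb$ replaces $z$, and splitting $\EE_{\Qb}[q_i q_j]$ into its diagonal and off-diagonal parts produces the variance term $c^2\sum_i(\yb^{\valid}_i)^2(\Xb\Bb(\theta)\Xb^\T)_{ii}$, which coincides to leading order with the resampling second moment $c^2\,\vb^\T\Bb(\theta)\vb$. This shared second-moment structure forces $R^2_{\rm ind, R}(\theta)$ to the same closed form, so that $R^2_{\rm sum, R}(\theta)/R^2_{\rm ind, R}(\theta)\overset{p}{\to}1$. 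The remaining technical burden is to upgrade these expectation-level identities to in-probability statements holding uniformly in $\theta$, which requires the deterministic-equivalence estimates to be valid jointly across the independent $\Wb$- and $\Xb$-randomness and moment control of the many negligible cross terms.
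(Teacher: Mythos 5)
Your proposal follows essentially the same route as the paper's proof of Theorem \ref{thm:reference_panel}: Lemma \ref{lemma:DE_second_order} supplies the first- and second-order deterministic equivalents of the reference-panel resolvent, Lemma \ref{lemma:second_moment} evaluates traces against $\hat{\bSigma}_{n}^2$, the numerator and denominator are split into a signal term plus terms generated by the dependence between $\sbb^{\train}$ and $\sbb^{\valid}$, those terms cancel because their coefficients have mean zero over the resampling draw, and a parallel individual-level computation yields the same closed form. Your intermediate identities --- that the cross form and the pure form in $\bb := \Xb^{\T}\yb - n\bSigma\bbeta$ agree to leading order, and that $\bb^{\T}\Cb\,\bb \approx n\,\tr(\Cb\bSigma)\{\kappa\sigma_{\bbeta}^2\tr(\bSigma)/p+\sigma_{\bepsilon}^2\}$ for $\Cb=(\tau_{n_w}(\theta)\bSigma+\theta\Ib_p)^{-1}$ or $\Bb(\theta)$, which is the source of the $1/h^2$ factor --- are exactly the paper's dependence-induced and compensatory terms. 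Your reduction of the resampling noise to a scalar, $\Cov(\Xb^{\T}\yb)^{1/2}\hb=\bb z$ with $z\sim\cN(0,1)$, forced by the rank-one definition in Equation \eqref{eqn:covariance}, is in fact cleaner bookkeeping than the paper's, which carries the $p$-dimensional $\hb$ throughout.

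However, the step you call the crux and defer --- replacing $z$ and $z^2$ by their means ``by establishing concentration of the relevant quadratic forms'' --- is a genuine gap, and your own reduction shows why it cannot be closed as stated. Once the noise is a single scalar $z$, the relevant quadratic form is literally $z^2\,\bb^{\T}\Cb\,\bb$, and $z^2\sim\chi_1^2$ has variance $2$ for every $n,p$: neither $z$ nor any polynomial in $z$ concentrates as the dimensions grow. Consequently the mean-zero coefficient you isolate in the numerator, $c^2(1-z^2)+\frac{n^{\valid}-n^{\train}}{n}cz$ with $c=\sqrt{n^{\train}n^{\valid}}/n$, converges in distribution to a nondegenerate random variable, while its multiplier $\bb^{\T}\Cb\,\bb/(n^{\valid}n_w)\asymp p/n_w$ is of the same order, up to the constant $\gamma=\lim p/n>0$, as the retained signal $\frac{n^{\train}}{n_w}\cdot\frac{\kappa\sigma_{\bbeta}^2}{p}\tr(\Cb\bSigma^2)\asymp n/n_w$; the same happens in the denominator, whose coefficient $(\frac{n^{\train}}{n}+cz)^2$ has mean $\frac{n^{\train}}{n}$ but does not concentrate there. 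So for a single resampling draw the fluctuation is $O_p(1)$ rather than $o_p(1)$, and since $R^2_{\rm ind,R}(\theta)$ does concentrate (the $\Qb$-randomness aggregates $n$ independent Bernoulli variables, as your own diagonal/off-diagonal split shows), the gap hits the ratio claim $R^2_{\rm sum,R}(\theta)/R^2_{\rm ind,R}(\theta)\overset{p}{\to}1$ as well, not just the closed-form statement.

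You should also be aware that the paper's own proof glosses over exactly this point: the equalities labelled (b) and (e) in its Part I invoke Lemma B.26 of \cite{bai2010spectral} for linear and quadratic forms in $\hb$ whose kernel contains $\Cov(\Xb^{\T}\yb)$, but that lemma's variance bound is proportional to $\tr(\Cb_0\Cb_0^{\T})$ for kernel $\Cb_0$, and for a rank-one kernel $\tr(\Cb_0\Cb_0^{\T})=(\tr\Cb_0)^2$, so the bound is vacuous. The argument (yours and the paper's) becomes rigorous if the covariance fed to Algorithm \ref{alg:sum} has rank diverging with $p$ --- for instance the full-rank choice proportional to $\Xb^{\T}\Xb$ used by the practical methods the paper cites, for which $\tr(\Cb_0^2)/(\tr\Cb_0)^2\asymp 1/p\to 0$, the $\hb$-quadratic forms concentrate around the very limits you computed, and the linear cross terms are $O_p(n^{1/2}/n_w)=o_p(n/n_w)$ --- or if the conclusion is interpreted in expectation over (or after averaging) repeated resampling draws. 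As written, your plan establishes the formula only at the level of the first two moments of the resampling draw, not in probability, so you should either modify the sampling covariance or restate what is being proved before claiming the theorem.
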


Theorem \ref{thm:reference_panel} establishes the asymptotic prediction accuracy in resampling-based self-training and demonstrates its equivalence to conventional training when individual-level data are available.
By the definitions of $R^2_{\rm sum, R}(\theta)$ and $R^2_{\rm ind, R}(\theta)$, we have  
\begin{align*}
    R^2_{\rm sum, R}(\theta) \propto \frac{\left \langle {\sbb^{\valid}}, (\Wb^{\T} \Wb + \theta n_w \Ib_{p})^{-1} \sbb^{\train} \right \rangle^{2} }{n^{\valid}{}^{2} \cdot \| (\Wb^{\T} \Wb + \theta n_w \Ib_{p})^{-1} \sbb^{\train} \|_{\bSigma}^{2}}\quad \mbox{and} \quad R^2_{\rm ind, R}(\theta) \propto  \frac{\left\langle {\Xb^{\valid}{}^{\T}\yb^{\valid}}, (\Wb^{\T} \Wb + \theta n_w \Ib_{p})^{-1} \Xb^{\train}{}^{\T} \yb^{\train} \right\rangle^2}{n^{\valid}{}^{2} \cdot \| (\Wb^{\T} \Wb + \theta n_w \Ib_{p})^{-1} \Xb^{\train}{}^{\T} \yb^{\train} \|_{\bSigma}^2}. 
\end{align*}
For $R^2_{\rm ind, R}(\theta)$, the estimator $\hat{\bbeta}_{\RA}(\theta)$ is trained on $(\Xb^{\train}, \yb^{\train})$, and the out-of-sample $R_{\rm ind, R}^2(\theta)$ is calculated on an independent dataset $(\Xb^{\valid}, \yb^{\valid})$. 
In contrast, when computing $R^2_{\rm sum, R}(\theta)$, the resampling-based pseudo-training and validation summary statistics $\sbb^{\train}$ and $\sbb^{\valid}$ are inherently dependent. Surprisingly, Theorem \ref{thm:reference_panel} demonstrates that $R^2_{\rm sum, R} (\theta) / R^2_{\rm ind, R} (\theta) \overset{p}{\to} 1$, suggesting that tuning parameters using dependent pseudo-training/validation datasets does not lead to overfitting or reduced out-of-sample prediction performance in high dimensions. Below, we provide a proof sketch to provide insights into this counterintuitive result.

\begin{proof}[Proof sketch of Theorem \ref{thm:reference_panel}]
    By the continuous mapping theorem, it suffices to show that the numerator and denominator of $R^2_{\rm sum, R}(\theta)$ are asymptotically equal to that of $R^2_{\rm ind, R}(\theta)$, respectively. 
    We take the numerator as an example and sketch the proof of Equation \eqref{eqn:num_proof_ridge} below. 
    \begin{align}\label{eqn:num_proof_ridge}
        \left \langle {\sbb^{\valid}}, (\Wb^{\T} \Wb + \theta n_w \Ib_{p})^{-1} \sbb^{\train} \right \rangle/n^{\valid} = \left\langle {\Xb^{\valid}{}^{\T}\yb^{\valid}}, (\Wb^{\T} \Wb + \theta n_w \Ib_{p})^{-1} \Xb^{\train}{}^{\T} \yb^{\train} \right\rangle/n^{\valid} + o_p(1)
    \end{align}
    Using Lemma \ref{lemma:DE_second_order} and Lemma B.26 from \cite{bai2010spectral}, we can conclude the right-hand side of Equation \eqref{eqn:num_proof_ridge} that
    \begin{align} \label{eqn:RHS}
    \begin{split}
        \left\langle {\Xb^{\valid}{}^{\T}\yb^{\valid}},\right.&\left. (\Wb^{\T} \Wb + \theta n_w \Ib_{p})^{-1} \Xb^{\train}{}^{\T} \yb^{\train} \right\rangle/n^{\valid}
        \\
        & \qquad \qquad 
        = \left\{ \frac{n^{\train}}{n_w} \cdot \frac{\kappa \sigma_{\bbeta}^2}{p} \cdot \tr\left[ \left( \tau_{n_w}(\theta) \bSigma + \theta \Ib_p \right)^{-1} \bSigma^2 \right] \right\} + o_p(1).
    \end{split}
    \end{align}
    To show that the left-hand side of Equation  \eqref{eqn:num_proof_ridge} converges to the same limit, we first decompose it and omit the zero-limit cross terms, leaving three non-zero terms as follows 
    \begin{align*}
    & \left \langle {\sbb^{\valid}}, (\Wb^{\T} \Wb + \theta n_w \Ib_{p})^{-1} \sbb^{\train} \right \rangle/n^{\valid}
    \\
    \ & =\left( \frac{n^{\train}}{n} \Xb^{\T}\Xb \bbeta  \right)^{\T} (\Wb^{\T} \Wb + \theta n_w \Ib_{p})^{-1} \left(  \frac{n - n^{\train}}{n} \Xb^{\T}\Xb \bbeta \right)/n^{\valid}
    \\
    &+ \left( \frac{n^{\train}}{n} \Xb^{\T} \bepsilon \right)^{\T} (\Wb^{\T} \Wb + \theta n_w \Ib_{p})^{-1} \left( \frac{n - n^{\train}}{n} \Xb^{\T}\bepsilon \right)/n^{\valid}
    \\
    &- \left( \sqrt{\frac{n^{\train} (n - n^{\train})}{n^2}} \Cov(\Xb^{\T} \yb)^{1/2} \hb \right)^{\T} (\Wb^{\T} \Wb + \theta n_w \Ib_{p})^{-1} \left( \sqrt{\frac{n^{\train} (n - n^{\train})}{n^2}} \Cov(\Xb^{\T} \yb)^{1/2} \hb \right)/n^{\valid}
    \\
     & =\Rom{1}^{(1)}_{\rm sum} - \Rom{1}^{(2)}_{\rm sum}.
\end{align*}
We group the first two terms as $\Rom{1}^{(1)}_{\rm sum}$. 
Intuitively, $\Rom{1}^{(1)}_{\rm sum}$ can be interpreted as a ``dependence-induced term" since it arises from the dependence between $\sbb^{\train}$ and $\sbb^{\valid}$, which is absent from the right-hand side of Equation \eqref{eqn:num_proof_ridge}, where the training and validation datasets are independent.
Furthermore, we refer to $\Rom{1}^{(2)}_{\rm sum}$ as a ``compensatory term", as we will demonstrate that its limiting behavior precisely offsets the dependence-induced term.
Lemma B.26 in \cite{bai2010spectral} allows us to reorganize  $\Rom{1}^{(1)}_{\rm sum}$ as  
\begin{align*}
    \Rom{1}^{(1)}_{\rm sum} =\ & \frac{n^{\train}}{n_w} \cdot \frac{\kappa \sigma_{\bbeta}^2}{p} \cdot \tr \left[ (\hat{\bSigma}_{n_w} + \theta \Ib_{p})^{-1} \hat{\bSigma}_{n}^2 \right] + \frac{n^{\train}}{n_w} \cdot \frac{\sigma_{\bepsilon}^2}{n} \cdot  \tr \left[ (\hat{\bSigma}_{n_w} + \theta \Ib_{p})^{-1} \hat{\bSigma}_{n} \right] + o_p(1)
\end{align*}
where $\hat{\bSigma}_{n_w} = \Wb^{\T} \Wb/n_w$. 
Furthermore, we have
\begin{align*}
    \Rom{1}^{(2)}_{\rm sum} =\ & 
    \frac{n^{\train}}{n_w} \cdot \tr \left[ (\hat{\bSigma}_{n_w} + \theta \Ib_{p})^{-1} \Cov(\Xb^{\T} \yb) \right] + o_p(1)
    \\
    =\ & 
    \frac{n^{\train}}{n_w} \cdot \tr \left[ (\hat{\bSigma}_{n_w} + \theta \Ib_{p})^{-1} \left(\Xb^{\T} \yb - n \bSigma \bbeta \right) \left(\Xb^{\T} \yb - n \bSigma \bbeta \right)^{\T} \right] + o_p(1)
    \\
    =\ & \frac{n^{\train}}{n_w} \cdot \frac{\kappa \sigma_{\bbeta}^2}{p} \cdot \tr\left[ (\hat{\bSigma}_{n_w} + \theta \Ib_{p})^{-1} \hat{\bSigma}_{n}^2 \right] + \frac{n^{\train}}{n_w} \cdot \frac{\sigma_{\bepsilon}^2}{n} \cdot \tr\left[ (\hat{\bSigma}_{n_w} + \theta \Ib_{p})^{-1} \hat{\bSigma}_{n} \right]
    \\
    &- \frac{n^{\train}}{n_w} \cdot \frac{\kappa \sigma_{\bbeta}^2}{p} \cdot \tr\left[ (\hat{\bSigma}_{n_w} + \theta \Ib_{p})^{-1} \bSigma^2 \right] + o_p(1)\\
    =\ &\Rom{1}^{(1)}_{\rm sum}- \frac{n^{\train}}{n_w} \cdot \frac{\kappa \sigma_{\bbeta}^2}{p} \cdot \tr\left[ (\hat{\bSigma}_{n_w} + \theta \Ib_{p})^{-1} \bSigma^2 \right] + o_p(1).
\end{align*}
By Lemma \ref{lemma:DE_second_order}, the limit of the left-hand side of Equation \eqref{eqn:num_proof_ridge} is
\begin{align*}
    \left \langle {\sbb^{\valid}}, (\Wb^{\T} \Wb + \theta n_w \Ib_{p})^{-1} \sbb^{\train} \right \rangle/n^{\valid} = \frac{n^{\train}}{n_w} \cdot \frac{\kappa \sigma_{\bbeta}^2}{p} \cdot \tr\left[ \left( \tau_{n_w}(\theta) \bSigma + \theta \Ib_p \right)^{-1} \bSigma^2 \right] + o_p(1),
\end{align*}
which exactly matches with Equation \eqref{eqn:RHS}. 
\end{proof}

The proof sketch of Theorem \ref{thm:reference_panel} provides several key insights. 
First, the limiting behavior of $R^2_{\rm sum, R}(\theta)$ depends solely on the first and second-order moments of $\sbb^{\train}$, rather than its entire distribution.  This observation suggests that resampling-based algorithms may not be sensitive to the selected resampling distribution and 
matching only the first and second moments of $\sbb^{\train}$ to those of $\Xb^{\train}{}^{\T} \yb^{\train}$ is sufficient.
Indeed, the Gaussian distribution may not be the asymptotic distribution of $\Xb^{\train}{}^{\T} \yb^{\train}$ in high dimensions. In Algorithm \ref{alg:sum}, if $\sbb^{\train}$ is sampled by replacing Gaussian $\hb$ to another $p$-dimensional random variables whose coordinates are i.i.d. with mean zero and variance one, Algorithm \ref{alg:sum} may still have robust output and achieve the same performance as Algorithm \ref{alg:ind}. 

Furthermore, our analysis explains why the interdependence between $\sbb^{\train}$ and $\sbb^{\valid}$ does not result in overfitting.
From the moment perspective, $\Rom{1}^{(1)}_{\rm sum}$ can be regarded as a ``first-moment dependence term'' since it arises solely from the first moment, $n^{\train}/n \cdot \Xb^{\train}{}^{\T} \yb^{\train}$, of $\sbb^{\train}$ in Algorithm \ref{alg:sum}. If we match only the first moment of $\sbb^{\train}$ with that of $\Xb^{\train}{}^{\T} \yb^{\train}$, $R^2_{\rm sum, R}(\theta)$ would have inflation compared to $R^2_{\rm ind, R}(\theta)$. However, $\Rom{1}^{(2)}_{\rm sum}$ acts as a ``second-moment correction'' term, originating from the matched second moment $\Cov(\Xb^{\T} \yb)$ in $\sbb^{\train}$. The limiting behavior of $\Rom{1}^{(2)}_{\rm sum}$ precisely cancels out this inflation, ensuring that the resulting $R^2_{\rm sum, R}(\theta)$ aligns exactly with the right-hand side of Equation \eqref{eqn:num_proof_ridge}. Thus, the dependence between $\sbb^{\train}$ and $\sbb^{\valid}$ does not negatively impact prediction accuracy. Our proof sketch focuses on the numerator for illustration. A similar behavior is observed in the denominator, where Lemma~\ref{lemma:second_moment} is additionally required to analyze the involved functionals. The detailed proof of Theorem \ref{thm:reference_panel} is provided in Section \ref{sec:proof_thm_reference_panel} of the supplementary material.

In summary, our analysis demonstrates that even without access to individual-level data, predictive models can be trained and tuned by splitting high-dimensional summary statistics $\Xb^{\T} \yb$ using a resampling-based approach. 
If the first and second moments of the sampling distribution are properly specified, 
the resulting $R^2_{\rm sum, R}(\theta)$, computed from summary statistics, will be asymptotically identical to
$R^2_{\rm ind, R}(\theta)$ based on individual-level data. 
Consequently, the same optimal hyperparameter can be selected, regardless of whether individual data is available. In other words, asymptotically, Algorithm \ref{alg:sum} returns the same prediction model as Algorithm \ref{alg:ind}. 

Figure~\ref{fig:ridge_R2} presents numerical comparisons of the prediction accuracy of $\hat{\bbeta}_{\rm R}(\theta)$ and $\hat{\bbeta}_{\rm R}(\theta)^*$. 
Consistent with our theoretical findings in Theorem~\ref{thm:reference_panel}, the left panel of Figure \ref{fig:ridge_R2} shows that $R^2_{\rm sum, R}(\theta)$ and $R^2_{\rm ind, R}(\theta)$ remain closely matched across all values of the hyperparameter $\theta$.
As indicated by the blue and red dashed vertical lines, the best-performing hyperparameters $\theta^*_{\rm sum, R}$ and $\theta^*_{\rm ind, R}$ exist and are closely aligned.
The right panel of Figure \ref{fig:ridge_R2} further illustrates that the alignment between $\hat{\bbeta}_{\rm R}(\theta_{\rm ind, R}^*)$ and $\hat{\bbeta}_{\rm R}(\theta_{\rm sum, R}^*)^*$ holds across varying  levels of heritability, dimensionality, and signal sparsity. Specifically, resampling-based self-training achieves prediction accuracy comparable to individual-level training in all cases. 
In addition, as expected, prediction accuracy of $\hat{\bbeta}_{\rm R}(\theta_{\rm ind, R}^*)$ and $\hat{\bbeta}_{\rm R}(\theta_{\rm sum, R}^*)^*$ increases with higher heritability, and under the same heritability, both estimators perform better with lower sparsity levels and a lower  $p/n$ ratio.
Overall, Figure \ref{fig:ridge_R2} suggests that resampling-based self-training selects an optimal hyperparameter for the ridge-type estimator that closely matches the one chosen by individual-level data training. 

\begin{figure}[t]
    \centering
    \begin{subfigure}[b]{0.45\textwidth}
        \includegraphics[width=\textwidth]{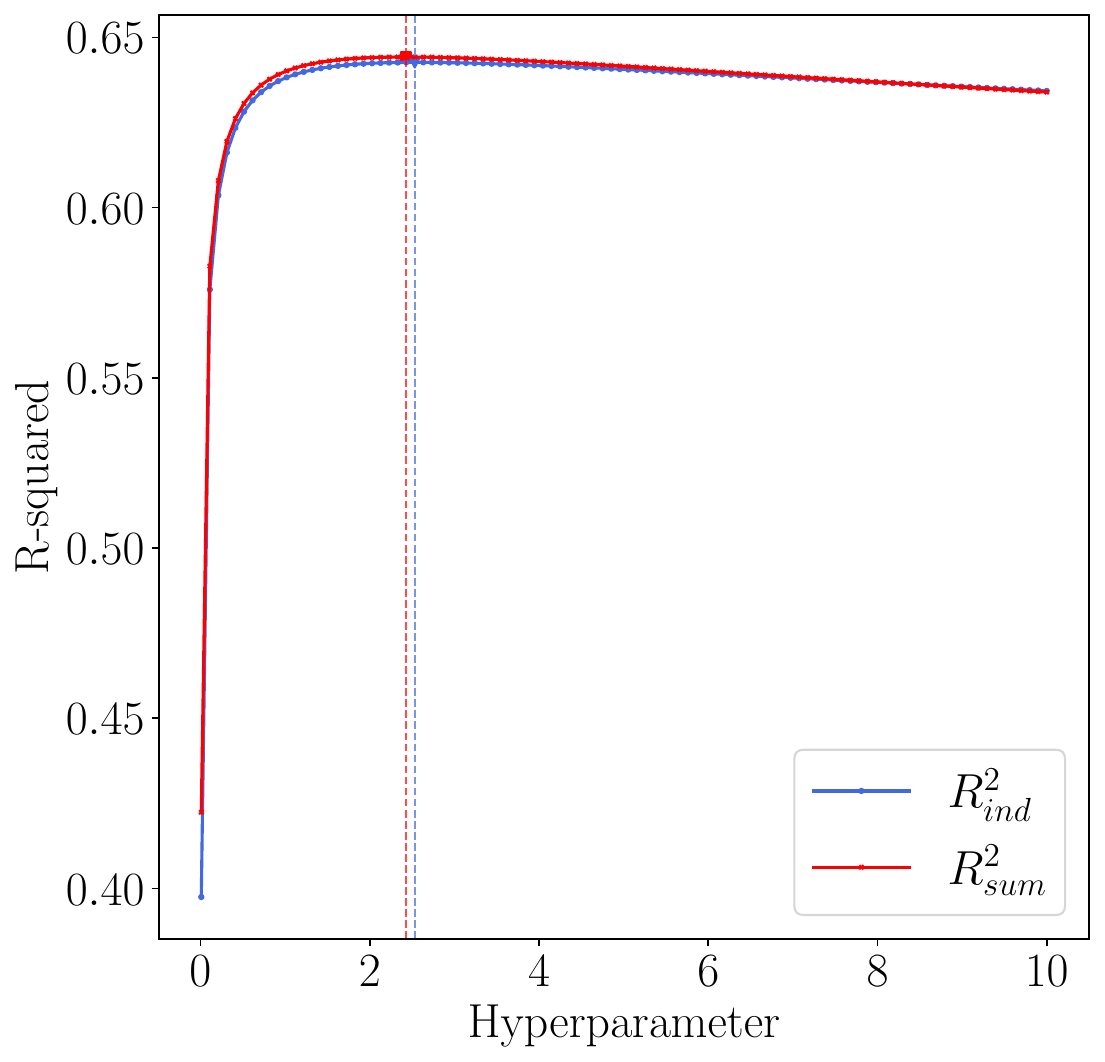}
    \end{subfigure}
    \begin{subfigure}[b]{0.485\textwidth}
        \includegraphics[width=\textwidth]{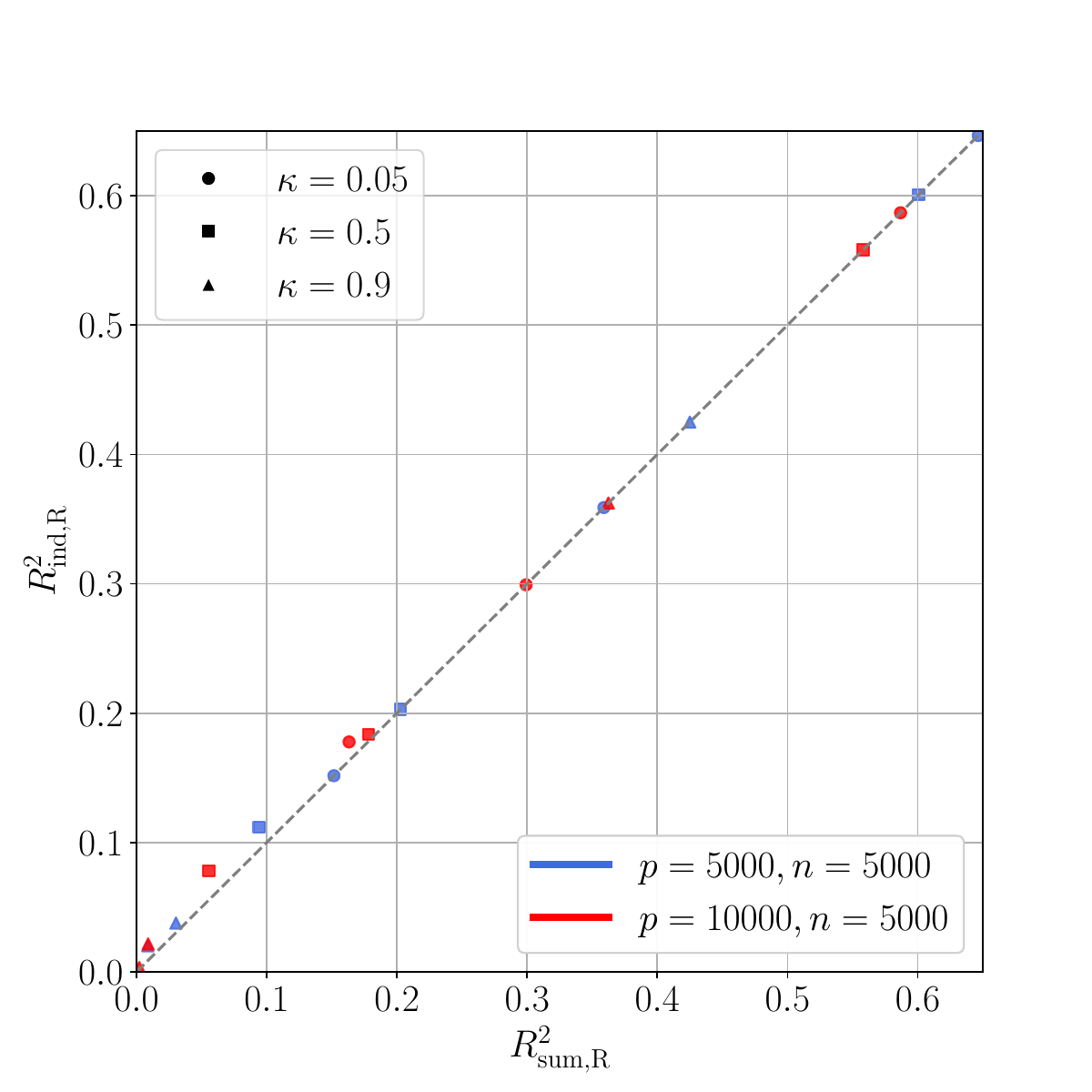}
    \end{subfigure}
    \caption{
    \textbf{Numerical comparison of prediction accuracy between resampling-based and individual-level training for the ridge-type estimator across various hyperparameter values, heritability levels, dimensionalities, and sparsity levels.} 
    In this numerical analysis, we assume that $\bSigma$ is a block-wise diagonal matrix, $\bSigma = \diag\{\bSigma_1, \bSigma_2, \dots, \bSigma_{n_{\rm block}}\}$, where each block $\bSigma_i$ follows an AR(1) process with correlation $\rho = 0.9$, as detailed in Equation~\eqref{eqn:AR1} and Section~\ref{sec:numer}. 
    \textbf{Left:} 
    The out-of-sample $R^2$ values, denoted as $R_{\rm sum, R}^2(\theta)$ and $R_{\rm ind, R}^2(\theta)$, are computed using Algorithm~\ref{alg:sum} and Algorithm~\ref{alg:ind}, respectively. 
    We evaluate $R_{\rm sum, R}^2(\theta)$ and $R_{\rm ind, R}^2(\theta)$ across a range of hyperparameter values, highlighting the best-performing hyperparameters, $\theta_{\rm sum, R}^*$ and $\theta_{\rm ind, R}^*$, with dashed vertical lines. 
    The parameters are set as follows: $h^2 = 0.8$, $n = p = 5000$, $\kappa = 0.1$, and $n_w = 1000$. 
    \textbf{Right:} The out-of-sample $R^2$ for $\hat{\bbeta}_{\rm R}(\theta_{\rm sum, R}^*)^*$ and $\hat{\bbeta}_{\rm R}(\theta_{\rm ind, R}^*)$, where $\theta_{\rm sum, R}^*$ and $\theta_{\rm ind, R}^*$ denote the best-performing hyperparameters selected by Algorithm~\ref{alg:sum} and Algorithm~\ref{alg:ind}, respectively. 
    We compare the prediction accuracy across varying levels of heritability, dimensionality, and sparsity.
    The parameters are set as follows: $h^2 \in \{2/5, 1/2, 2/3, 4/5\}$, $p \in \{5000, 10000\}$, $n = 5000$, $\kappa \in \{0.05, 0.5, 0.9\}$, $n_{\rm block} = 20$, and $n_w = 1000$. 
    }
  \label{fig:ridge_R2}
\end{figure}

Corollary \ref{cor:reference_panel} presents the results for the special case $\bSigma = \Ib_{p}$, where we obtain closed-form expressions for $R^2_{\rm sum, R}(\theta)$. 

\begin{corollary}\label{cor:reference_panel}
    Under the same conditions as in Theorem \ref{thm:reference_panel} and $\bSigma = \Ib_{p}$,
    we have
    \begin{align*}
        R^2_{\rm sum, R}(\theta) = \frac{\theta + \tau_{n_w}(\theta)}{\rho_{n_w}(\theta) + 1} \cdot \frac{n_w}{p/h^2 + n^{\train}} + o_p(1),
    \end{align*}
    where $\tau_{n_w}(\theta)$ and $\rho_{n_w}(\theta)$ are simplified to 
    \begin{align*}
    \tau_{n_w}(\theta) = \frac{1 - \theta - \gamma_w + \sqrt{(1 - \theta - \gamma_w)^2 + 4 \theta}}{2} \quad 
    \mbox{and} \quad 
    \rho_{n_w}(\theta) = 
    \frac{1 + \gamma_w + \theta - \sqrt{(1 - \theta - \gamma_w)^2 + 4 \theta}}{2 \sqrt{(1 - \theta - \gamma_w)^2 + 4 \theta}},
    \end{align*}
    respectively. 
\end{corollary}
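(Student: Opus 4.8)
The plan is to specialize the general expression for $R^2_{\rm sum, R}(\theta)$ in Theorem~\ref{thm:reference_panel} to $\bSigma = \Ib_p$, where every matrix functional reduces to a scalar multiple of $\Ib_p$ and the two fixed-point equations become scalar and explicitly solvable. I would begin with $\tau_{n_w}(\theta)$. Setting $\bSigma = \Ib_p$ in \eqref{eqn:tau} and using $\tr\big[(\tau_{n_w}\Ib_p + \theta\Ib_p)^{-1}\big] = p/(\tau_{n_w}+\theta)$ together with $p/n_w \to \gamma_w$, the equation collapses to the scalar relation $\tau_{n_w}^{-1} = 1 + \gamma_w/(\tau_{n_w}+\theta)$, equivalently the quadratic $\tau_{n_w}^2 + (\theta+\gamma_w-1)\tau_{n_w} - \theta = 0$. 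Taking the positive root, which is the admissible one since $\tau_{n_w}(\theta) > 0$, gives exactly the stated closed form.

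Next I would solve \eqref{eqn:rho}. At $\bSigma = \Ib_p$ we have $\tr\big[(\tau_{n_w}\Ib_p + \theta\Ib_p)^{-2}\big] = p/(\tau_{n_w}+\theta)^2$, so \eqref{eqn:rho} reads $\rho_{n_w} = a/(1-a)$ with $a = \gamma_w\tau_{n_w}^2/(\tau_{n_w}+\theta)^2$. Eliminating $\gamma_w$ through the fixed-point identity $\gamma_w = (1-\tau_{n_w})(\tau_{n_w}+\theta)/\tau_{n_w}$, read off from the $\tau$ equation, reduces $a$ to $\tau_{n_w}(1-\tau_{n_w})/(\tau_{n_w}+\theta)$, whence $\rho_{n_w}(\theta) = \tau_{n_w}(1-\tau_{n_w})/(\theta+\tau_{n_w}^2)$; substituting the closed form of $\tau_{n_w}$ and using the quadratic relation $\tau_{n_w}^2 = (1-\theta-\gamma_w)\tau_{n_w} + \theta$ to clear the square root yields the stated expression after routine algebra. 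The key intermediate identity I would record here is $\rho_{n_w}(\theta) + 1 = (\theta+\tau_{n_w})/(\theta+\tau_{n_w}^2)$, equivalently $(\theta+\tau_{n_w})/(\rho_{n_w}+1) = \theta+\tau_{n_w}^2$; this is the algebraic fact that makes the prefactor $(\theta+\tau_{n_w})/(\rho_{n_w}+1)$ in the target formula appear.

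With the scalars in hand I would substitute $\bSigma = \Ib_p$ into the Theorem~\ref{thm:reference_panel} formula. There $\tr(\bSigma) = p$, $\Bb(\theta) = (\rho_{n_w}+1)(\tau_{n_w}+\theta)^{-2}\Ib_p$, and the numerator trace equals $p/(\tau_{n_w}+\theta)$, so both the squared numerator trace and the denominator $\tr(\bSigma)\tr(\Bb(\theta)\bSigma)/h^2 + n^{\train}\tr(\Bb(\theta)\bSigma^2)$ become explicit scalars in which the factors $(\tau_{n_w}+\theta)^{-2}$ largely cancel. It remains to resolve the random prefactor $n^{\valid}/\|\yb^{\valid}\|_2^2$: by the law of large numbers $\|\yb^{\valid}\|_2^2/n^{\valid} \overset{p}{\to} \var(\yb)$, which at $\bSigma = \Ib_p$ equals $\kappa\sigma_{\bbeta}^2 + \sigma_{\bepsilon}^2 = \kappa\sigma_{\bbeta}^2/h^2$ by \eqref{eqn:h^2}, so the prefactor tends in probability to $h^2/(\kappa\sigma_{\bbeta}^2)$. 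Assembling the prefactor, the factor $(n^{\train}/p)\kappa\sigma_{\bbeta}^2$, the reduced trace ratio, and the identity $(\theta+\tau_{n_w})/(\rho_{n_w}+1) = \theta+\tau_{n_w}^2$, and then applying the continuous mapping theorem, collapses the whole expression to the claimed closed form, with the $\kappa\sigma_{\bbeta}^2$ factors cancelling and $h^2$ surviving.

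I expect the main obstacle to be the bookkeeping of the normalization constants rather than any analytic difficulty: one must track the interplay among $\kappa\sigma_{\bbeta}^2$, $\sigma_{\bepsilon}^2$, $h^2$, and the sample sizes $n^{\train}$, $n_w$, $p$ through the probability limit of $n^{\valid}/\|\yb^{\valid}\|_2^2$ and the cancellations in the trace ratio, and separately verify that the admissible root of the quadratic is selected and that the square root is cleared correctly so that $\rho_{n_w}(\theta)$ matches the stated form verbatim. All of the random-matrix content is inherited from Theorem~\ref{thm:reference_panel} and Lemmas~\ref{lemma:DE_second_order} and \ref{lemma:second_moment}; once $\bSigma = \Ib_p$ is imposed no new spectral estimate is needed, and the remainder is a finite scalar computation together with one application of the law of large numbers and the continuous mapping theorem.
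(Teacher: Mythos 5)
Your preparatory steps are all correct: the scalar fixed-point equation and its admissible root for $\tau_{n_w}(\theta)$, the closed form $\rho_{n_w}(\theta)=\tau_{n_w}(1-\tau_{n_w})/(\theta+\tau_{n_w}^2)$ together with the identity $\rho_{n_w}(\theta)+1=(\theta+\tau_{n_w})/(\theta+\tau_{n_w}^2)$, and the limit $n^{\valid}/\|\yb^{\valid}\|_2^2\overset{p}{\to}h^2/(\kappa\sigma_{\bbeta}^2)$ at $\bSigma=\Ib_p$. The genuine gap is the final assembly, which you assert rather than carry out, and which in fact fails. Substituting $\bSigma=\Ib_p$ into Theorem~\ref{thm:reference_panel}, the squared numerator trace is $p^2/(\tau_{n_w}+\theta)^2$ and the denominator is $\frac{(\rho_{n_w}+1)p}{(\tau_{n_w}+\theta)^2}\left(p/h^2+n^{\train}\right)$, so \emph{all} powers of $(\tau_{n_w}+\theta)$ cancel and the theorem yields
\begin{align*}
R^2_{\rm sum, R}(\theta)
=\frac{n^{\valid}\,\kappa\sigma_{\bbeta}^2}{\|\yb^{\valid}\|_2^2}\cdot\frac{n^{\train}}{(\rho_{n_w}(\theta)+1)(p/h^2+n^{\train})}+o_p(1)
\overset{p}{\to}\frac{h^2 n^{\train}}{(\rho_{n_w}(\theta)+1)(p/h^2+n^{\train})}.
\end{align*}
No factor $(\theta+\tau_{n_w})$ survives, so the identity you single out as the key step is never used, and no standalone factor $n_w$ can appear, since $n_w$ enters the theorem's formula only through $\tau_{n_w}$ and $\rho_{n_w}$. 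This limit agrees with the corollary's display only if $(\theta+\tau_{n_w}(\theta))\,n_w=h^2 n^{\train}$, which is false: the left side equals $(\theta+\tau_{n_w}^2)(\rho_{n_w}+1)n_w$, is strictly increasing and unbounded in $\theta$, whereas a squared correlation must stay bounded.

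What your (correct) route actually uncovers is an inconsistency between Theorem~\ref{thm:reference_panel} and Corollary~\ref{cor:reference_panel} as stated: at $\bSigma=\Ib_p$ the corollary's display coincides with $\Omega_{\rm sum}^{(1)}/\Omega_{\rm sum}^{(2)}$ from the paper's proof of the theorem, i.e.\ the ratio with the numerator \emph{not} squared and without the prefactor $n^{\valid}/\|\yb^{\valid}\|_2^2$, rather than with $(n^{\valid}/\|\yb^{\valid}\|_2^2)\,(\Omega_{\rm sum}^{(1)})^2/\Omega_{\rm sum}^{(2)}$. A sanity check confirms which version is the true $R^2$: when $\gamma_w\to0$ and $\theta\to0$ (large reference panel, vanishing penalty), $\tau_{n_w}\to1$ and $\rho_{n_w}\to0$, the estimator becomes proportional to $\Xb^{\train}{}^{\T}\yb^{\train}$, and the display above reduces to the classical marginal-score accuracy $h^4 n^{\train}/(p+h^2 n^{\train})$, while the corollary's formula reduces to $n_w/(p/h^2+n^{\train})$, which wrongly scales with the reference-panel size and can exceed one. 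So your proof as written does not establish the stated corollary; executed honestly, your plan establishes the display above instead, and a correct writeup must either prove that display or explicitly reconcile the normalization (the squaring of the numerator and the $n^{\valid}/\|\yb^{\valid}\|_2^2$ factor) with the statement.
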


\subsection{Resampling-based marginal thresholding}
\label{sec:marginal}
One of the most popular approaches for genetic prediction is ``clumping and thresholding" \citep{choi2020tutorial}.  After removing genetic variants with high correlations with other predictors, this marginal thresholding approach ranks the genetic variants by their $P$-values (or equivalent statistics) and selects the best-performing $P$-value threshold based on prediction accuracy in the validation data. 
In prediction models with non-sparse signals, there is often a trade-off in marginal screening when selecting an optimal subset of variables to maximize the prediction accuracy \citep{zhao2019genetic}. Selecting too many variables can reduce the out-of-sample $R^2$ by introducing noise and overfitting, as variants without predictive power are included. Conversely, selecting too few variables can also lower the out-of-sample $R^2$ by excluding important predictors. In resampling-based
pseudo-training, we aim to select the $P$-value threshold without access to individual-level validation data. 

When individual-level data are available, the marginal thresholding estimator can be formulated as follows
\begin{align*}
    \hat{\bbeta}_{\rm M}(\Theta) = \left[ \Xb^{\train}{}^{\T}\yb^{\train} \right]_{\Theta} :=
    \begin{cases}
    \left[ \Xb^{\train}{}^{\T}\yb^{\train} \right]_{i}& \text{if}\ i \in \Theta,
    \\
    0& \text{otherwise},
    \end{cases}
\end{align*}
where the hyperparameter $\Theta \subset [p]$ denotes the indexes of predictors, a subset of $[p]$, that to be selected by the marginal screening $P$-value threshold. 
When only summary statistics are available, we can similarly define $\hat{\bbeta}_{\rm M}(\Theta)^*$ by replacing $\Xb^{\train}{}^{\T}\yb^{\train}$ by $\sbb^{\train}$
\begin{align}\label{eqn:marginal}
    \hat{\bbeta}_{\rm M}(\Theta)^* = \left[ \sbb^{\train} \right]_{\Theta} :=
    \begin{cases}
    \left[ \sbb^{\train} \right]_{i}& \text{if}\ i \in \Theta,
    \\
    0& \text{otherwise}.
    \end{cases}
\end{align}
Here we use $\Theta$ to emphasize that it is a vector parameter.
Considering the general estimator defined in Equation~\eqref{eqn:estimator}, we can rewrite 
$\hat{\bbeta}_{\rm M}(\Theta)=\Ab(\Theta) \Xb^{\train}{}^{\T} \yb^{\train}$ with  
\begin{align} \label{eqn:special_linear_MG}
    \Ab(\Theta)_{i,i} = 1\ \text{if}\ i \in \Theta, \quad \Ab(\Theta)_{i,j} = 0\ \text{otherwise}.
\end{align}
We abbreviate $\Ab(\Wb^{\T} \Wb, \Theta)$ as $\Ab(\Theta)$ since $\hat{\bbeta}_{\rm M}(\Theta)$ and $\hat{\bbeta}_{\rm M}(\Theta)^*$ do not depend on $\Wb^{\T} \Wb$. 
In Algorithm \ref{alg:ind} of the
supplementary material, marginal screening selects the subset of variables $\Theta$ that maximize the out-of-sample $R^2$ as follows
\begin{align*}
    \Theta_{\rm ind, M}^* = \arg\max_{\Theta \subset [p]} R^2_{\rm ind, M} (\Theta) = \arg\max_{\Theta \subset [p]} \frac{n^{\valid}}{\|\yb^{\valid}\|_{2}^2} \cdot \frac{\left\langle \Xb^{\valid}{}^{\T}\yb^{\valid}, \left[ \Xb^{\train}{}^{\T}\yb^{\train} \right]_{\Theta} \right\rangle^2}{n^{\valid} \cdot \| \left[ \Xb^{\train}{}^{\T}\yb^{\train} \right]_{\Theta} \|_{\bSigma}^2}. 
\end{align*}

When only summary statistics $\Xb^{\T} \yb$ are available, one aims to find $\Theta_{\rm sum, M}^*$ to maximize $R^2_{\rm sum, M} (\Theta)$ in Algorithm \ref{alg:sum}, with $\Ab(\Theta)$ being defined as in Equation \eqref{eqn:special_linear_MG}.
The following theorem provides the asymptotic results of prediction accuracy, $R^2_{\rm sum, M} (\Theta)$, and suggests equivalent performance between Algorithm \ref{alg:sum} and Algorithm \ref{alg:ind} in training the marginal thresholding estimator. 
 
\begin{theorem} \label{thm:marginal_screen}
    Consider any random sequence $\{\bbeta, \bepsilon, \Xb\}_{(p, n) \in \NN^2}$ satisfying Conditions \ref{cond-np-ratio}-\ref{cond-eps} and any $\Theta \subset [p]$. For $\hat{\bbeta}_{\rm M}(\Theta)^*$ defined in Equation \eqref{eqn:marginal}, the out-of-sample $R^2$ is 
    $$R^2_{\rm sum, M}(\Theta) = \frac{n^{\valid}}{\|\yb^{\valid}\|_{2}^2} \cdot \frac{n^{\train}}{p} \cdot \kappa \sigma_{\bbeta}^2 \cdot \frac{\left( \tr \left[ \Ab(\Theta) \bSigma^2 \right]\right)^2}{\tr \left(\bSigma \right) \cdot \tr \left[\Ab(\Theta) \bSigma \right]/h^2 + n^{\train} \cdot \tr \left[\Ab(\Theta) \bSigma^2 \right]} + o_p(1),$$
    where $\Ab(\Theta)$ is defined in Equation \eqref{eqn:special_linear_MG}. 
    Furthermore, for any $\Theta \subset [p]$, we have  
    \begin{align*}
        R^2_{\rm sum, M}(\Theta)/R^2_{\rm ind, M}(\Theta) \overset{p}{\to} 1. 
    \end{align*}
\end{theorem}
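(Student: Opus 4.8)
The plan is to follow the same two-part strategy as the proof sketch of Theorem~\ref{thm:reference_panel}: by the continuous mapping theorem it suffices to show that the numerator $\langle \sbb^{\valid}, \Ab(\Theta)\sbb^{\train}\rangle^2$ and the denominator $\|\Ab(\Theta)\sbb^{\train}\|_{\bSigma}^2$ of $R^2_{\rm sum, M}(\Theta)$ are each asymptotically equal to the corresponding quantities built from $\Xb^{\train}{}^{\T}\yb^{\train}$ in $R^2_{\rm ind, M}(\Theta)$. The crucial simplification relative to the ridge case is that $\Ab(\Theta)$ is a deterministic diagonal projection that does not depend on the reference panel $\Wb^{\T}\Wb$; consequently no deterministic-equivalent machinery (Lemma~\ref{lemma:DE_second_order}) is needed for a random inverse, and the whole argument reduces to elementary quadratic-form concentration (Lemma B.26 of \cite{bai2010spectral}) together with the single second-moment identity of Lemma~\ref{lemma:second_moment}.

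For the numerator I would first treat the individual-data side. Writing $\Xb^{\train}{}^{\T}\yb^{\train}=\Xb^{\train}{}^{\T}\Xb^{\train}\bbeta+\Xb^{\train}{}^{\T}\bepsilon^{\train}$ and likewise for the validation fold, the conditional independence of the two folds kills every noise cross term, the Gram matrices concentrate as $\Xb^{\train}{}^{\T}\Xb^{\train}\approx n^{\train}\bSigma$ and $\Xb^{\valid}{}^{\T}\Xb^{\valid}\approx n^{\valid}\bSigma$, and the quadratic form in $\bbeta$ concentrates to its mean, so that $\langle\Xb^{\valid}{}^{\T}\yb^{\valid},\Ab(\Theta)\Xb^{\train}{}^{\T}\yb^{\train}\rangle/n^{\valid}$ converges to $n^{\train}(\kappa\sigma_{\bbeta}^2/p)\tr[\Ab(\Theta)\bSigma^2]$. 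For the summary side I would substitute $\sbb^{\valid}=\Xb^{\T}\yb-\sbb^{\train}$ together with the sampling form of $\sbb^{\train}$, and split the inner product into a dependence-induced term $\Rom{1}^{(1)}_{\rm sum}$ built from the deterministic part $(n^{\train}/n)\Xb^{\T}\yb$ and a compensatory term $\Rom{1}^{(2)}_{\rm sum}$ built from $\Cov(\Xb^{\T}\yb)^{1/2}\hb$, exactly as in the decomposition displayed in Theorem~\ref{thm:reference_panel}. The key observation is that the leading functional $(\kappa\sigma_{\bbeta}^2/p)\tr[\Ab(\Theta)\hat{\bSigma}_n^2]$ and the noise functional proportional to $\tr[\Ab(\Theta)\hat{\bSigma}_n]$ enter the two terms with identical coefficient $n^{\train}/n^2$ and cancel outright, leaving precisely $n^{\train}(\kappa\sigma_{\bbeta}^2/p)\tr[\Ab(\Theta)\bSigma^2]$; no second-moment correction is needed for the numerator, and only $\hat{\bSigma}_n\approx\bSigma$ in linear trace functionals is invoked to dispose of the mixed terms $\tr[\Ab(\Theta)\hat{\bSigma}_n\bSigma]$.

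The denominator is where the argument is genuinely delicate, and where I expect the main obstacle to lie. The $\bSigma$-norm forces the matrix $\Ab(\Theta)\bSigma\Ab(\Theta)$; call it $\Gb$. The individual denominator is $(\Xb^{\train}{}^{\T}\yb^{\train})^{\T}\Gb(\Xb^{\train}{}^{\T}\yb^{\train})$, whose evaluation makes genuine use of Lemma~\ref{lemma:second_moment} to compute $\tr[\Gb\hat{\bSigma}_{n^{\train}}^2]$, producing the second-moment correction $(1/n^{\train})\tr(\bSigma)\tr[\Gb\bSigma]$ keyed to the sub-sample size $n^{\train}$. The summary denominator $(\sbb^{\train})^{\T}\Gb\sbb^{\train}$ instead splits into a square of the deterministic part with coefficient $(n^{\train}/n)^2$, whose correction $\tr[\Gb\hat{\bSigma}_n^2]$ is keyed to the \emph{full} sample size $n$, plus a compensatory Gaussian quadratic form $\tfrac{n^{\train}(n-n^{\train})}{n^2}\hb^{\T}\Cov(\Xb^{\T}\yb)^{1/2}\Gb\Cov(\Xb^{\T}\yb)^{1/2}\hb\approx\tfrac{n^{\train}(n-n^{\train})}{n^2}\tr[\Gb\Cov(\Xb^{\T}\yb)]$. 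Unlike the numerator, the $\tr[\Gb\hat{\bSigma}_n^2]$ pieces here do \emph{not} cancel; the delicate step is to apply Lemma~\ref{lemma:second_moment} to both the $n$-indexed leading term and the $\Cov(\Xb^{\T}\yb)$-based compensatory term and to track coefficients so that the $\tr[\Gb\bSigma^2]$ contributions combine with total weight $n^{\train}n-n^{\train}(n-n^{\train})=(n^{\train})^2$. After this bookkeeping the summary denominator collapses to $\tfrac{n^{\train}\kappa\sigma_{\bbeta}^2}{p}\big(\tr[\Gb\bSigma]\tr(\bSigma)/h^2+n^{\train}\tr[\Gb\bSigma^2]\big)$, which is exactly the limit obtained for the individual denominator; intuitively, the compensatory term absorbs precisely the discrepancy between the full-sample correction (weight $n$) and the sub-sample correction (weight $n^{\train}$). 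Combining the matched numerator and denominator through the continuous mapping theorem yields $R^2_{\rm sum, M}(\Theta)/R^2_{\rm ind, M}(\Theta)\overset{p}{\to}1$, and reading off the common limit gives the stated closed form for $R^2_{\rm sum, M}(\Theta)$.
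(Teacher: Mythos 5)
Your proposal is correct and takes essentially the same route as the paper: the paper's own proof of Theorem~\ref{thm:marginal_screen} simply repeats the Theorem~\ref{thm:reference_panel} computation with $(\Wb^{\T}\Wb+\theta n_w\Ib_{p})^{-1}$ and its deterministic equivalent replaced by the deterministic matrix $\Ab(\Theta)$, which is exactly your strategy. Your detailed bookkeeping --- the exact cancellation of the $\tr[\Ab(\Theta)\hat{\bSigma}_{n}^2]$ and noise functionals in the numerator with only $\hat{\bSigma}_{n}\asymp\bSigma$ left to handle the mixed terms, and the denominator weight count $n^{\train}n-n^{\train}(n-n^{\train})=(n^{\train})^2$ via Lemma~\ref{lemma:second_moment} --- matches the ridge-case calculation in Section~\ref{sec:proof_thm_reference_panel} that the paper's terse marginal-screening proof defers to.
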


\begin{figure}[t]
    \centering
    \begin{subfigure}[b]{0.45\textwidth}
        \includegraphics[width=\textwidth]{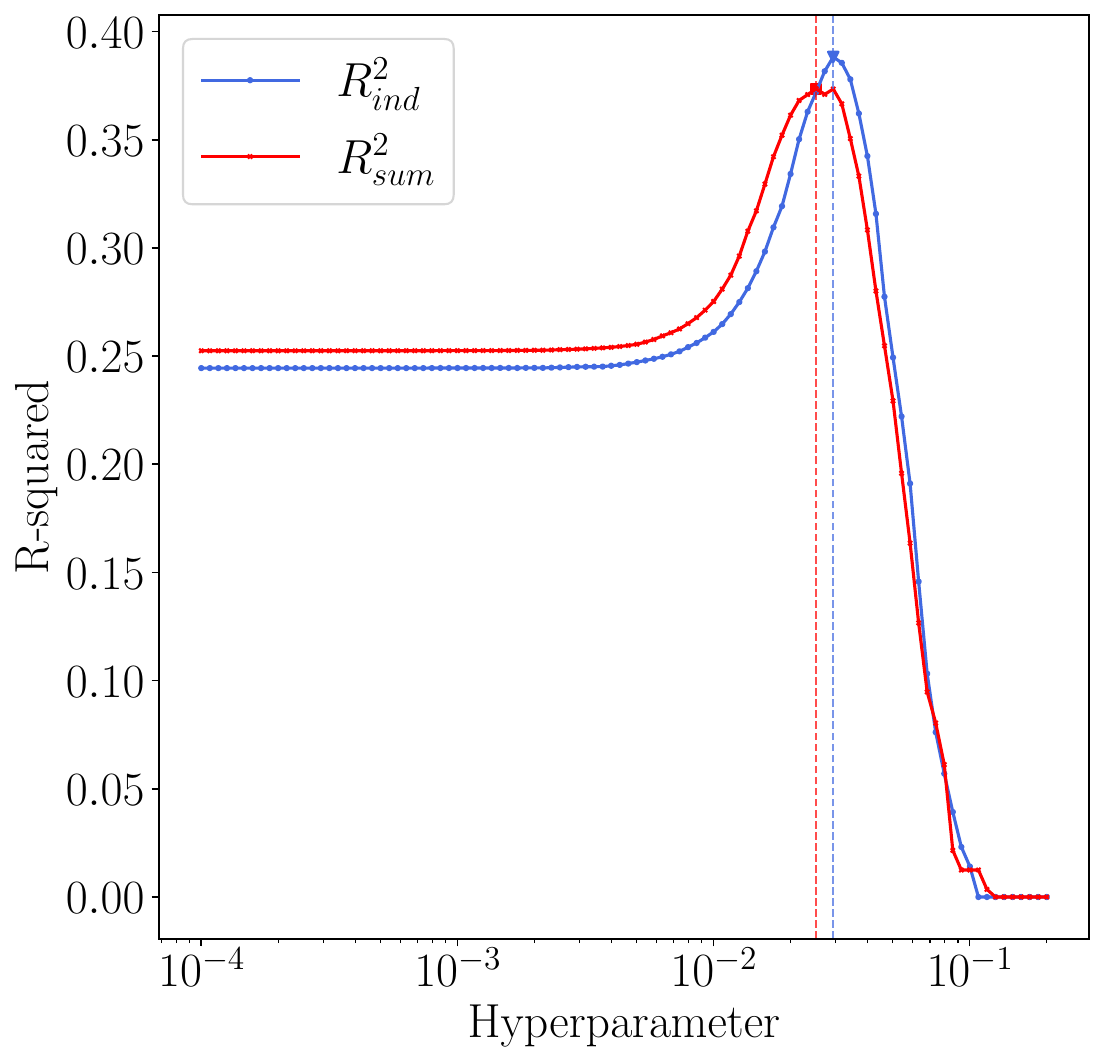}
    \end{subfigure}
    \begin{subfigure}[b]{0.485\textwidth}
        \includegraphics[width=\textwidth]{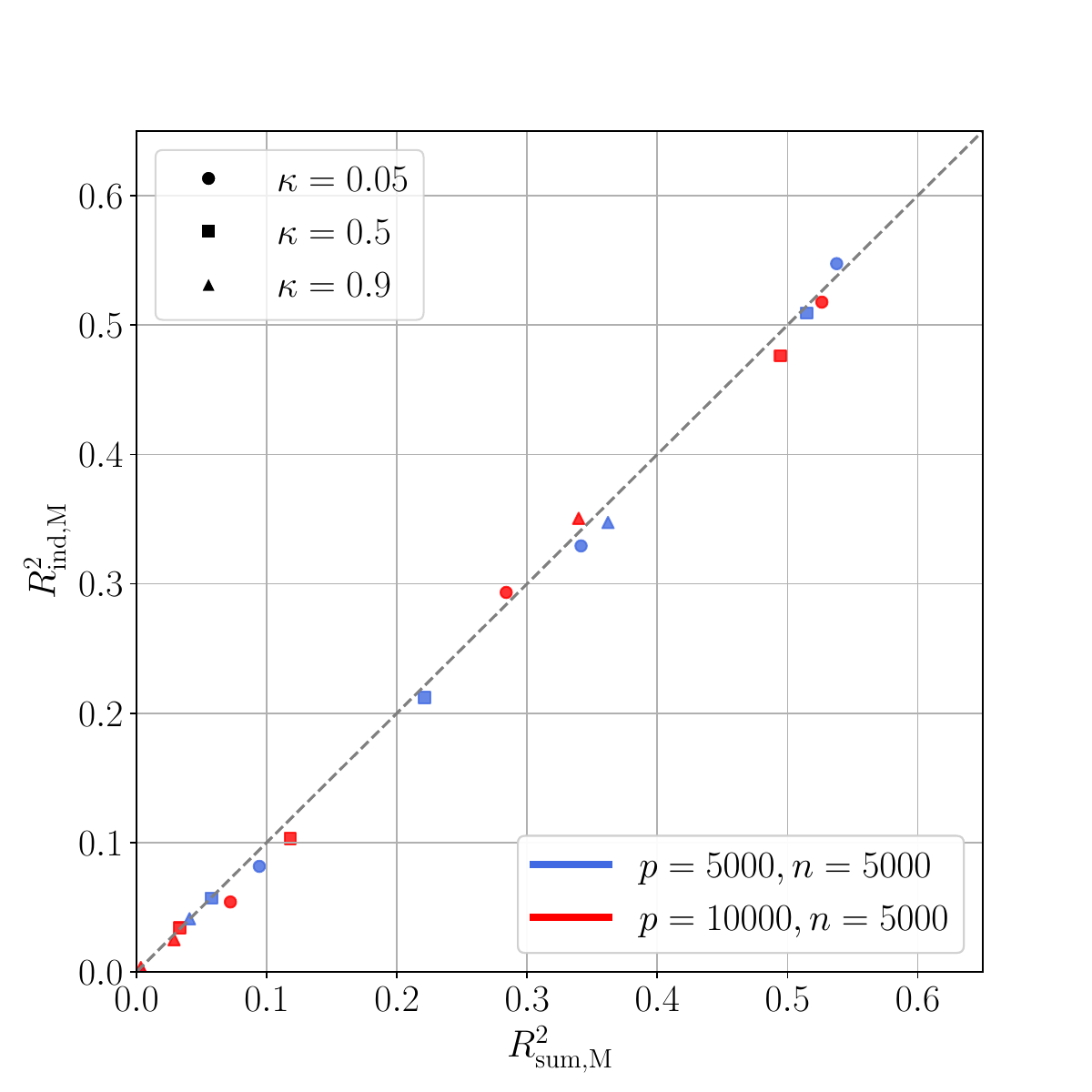}
    \end{subfigure}
    \caption{
    \textbf{Numerical comparison of prediction accuracy between resampling-based and
individual-level training for the marginal thresholding estimator across various hyperparameter values,
heritability levels, dimensionalities, and sparsity levels.}
In this numerical analysis, we assume that $\bSigma$ is a block-wise diagonal matrix, $\bSigma = \diag\{\bSigma_1, \bSigma_2, \dots, \bSigma_{n_{\rm block}}\}$, where each block $\bSigma_i$ follows an AR(1) process with correlation $\rho = 0.9$, as detailed in Equation~\eqref{eqn:AR1} and Section~\ref{sec:numer}.
\textbf{Left:} 
    The out-of-sample $R^2$ values, denoted as $R_{\rm sum, M}^2(\Theta)$ and $R_{\rm ind, M}^2(\Theta)$, are computed using Algorithm~\ref{alg:sum} and Algorithm~\ref{alg:ind}, respectively. 
    We evaluate $R_{\rm sum, M}^2(\Theta)$ and $R_{\rm ind, M}^2(\Theta)$ across a range of hyperparameter values, highlighting the best-performing hyperparameters, $\Theta_{\rm sum, M}^*$ and $\Theta_{\rm ind, M}^*$, with dashed vertical lines. 
    The parameters are set as follows: $h^2 = 0.8$, $n = p = 5000$, and $\kappa = 0.1$. 
    \textbf{Right:} The out-of-sample $R^2$ for $\hat{\bbeta}_{\rm M}(\Theta_{\rm sum, M}^*)^*$ and $\hat{\bbeta}_{\rm M}(\Theta_{\rm ind, M}^*)$, where $\Theta_{\rm sum, M}^*$ and $\Theta_{\rm ind, M}^*$ the best-performing hyperparameters selected by Algorithm~\ref{alg:sum} and Algorithm~\ref{alg:ind}, respectively. 
    We compare the prediction accuracy across varying levels of heritability, dimensionality, and sparsity.
    The parameters are set as follows: $h^2 \in \{2/5, 1/2, 2/3, 4/5\}$, $p \in \{5000, 10000\}$, $n = 5000$, $\kappa \in \{0.05, 0.5, 0.9\}$, and $n_{\rm block} = 20$.
    }
    \label{fig:marginal_R2}
\end{figure}

The proof of Theorem \ref{thm:marginal_screen} is presented in Section \ref{sec:proof_thm_marginal_screen} of the
supplementary material, following similar principles to those used in the proof of Theorem \ref{thm:reference_panel}.
Theorem \ref{thm:marginal_screen} shows that the marginal thresholding estimator $\hat{\bbeta}_{\rm M}(\Theta)^*$, based solely on summary statistics, can select the same subset of predictors and achieve the same out-of-sample $R^2$ as when individual-level data is used. 

Similar to Figure \ref{fig:ridge_R2}, Figure \ref{fig:marginal_R2} presents numerical illustrations for the marginal thresholding estimator.
The left panel of Figure \ref{fig:marginal_R2} supports Theorem \ref{thm:marginal_screen}, demonstrating that the out-of-sample $R^2$ pattern obtained from resampling-based self-training closely aligns with that of individual-level training.
Additionally, the right panel of Figure \ref{fig:marginal_R2} illustrates that models selected by resampling-based marginal thresholding achieve prediction accuracy comparable to individual-level marginal thresholding across a wide range of scenarios.


\subsection{Asymptotic results for general linear estimators}
\label{sec:general_lin}

After analyzing the two concrete examples popular in PRS applications,
we now present a general theorem for the class of linear estimators $\hat{\bbeta}_{\rm G}(\Theta)$  defined in Equation~\eqref{eqn:estimator}, with a parameter vector $\Theta$.
When only summery statistics are available, we define the $\hat{\bbeta}_{\rm G}(\Theta)^*$ by replacing $\Xb^{\train}{}^{\T} \yb^{\train}$ to $\sbb^{\train}$
\begin{align} \label{eqn:general}
    \hat{\bbeta}_{\rm G}(\Theta)^* = \Ab(\Wb^{\T} \Wb, \Theta) \sbb^{\train}.
\end{align}
Our analysis of this general estimator indicates that it is not coincidental that Algorithm \ref{alg:sum} incurs no additional prediction accuracy cost compared to Algorithm \ref{alg:ind} for both $\hat{\bbeta}_{\rm R}(\theta)$ and $\hat{\bbeta}_{\rm M}(\Theta)$. 
The key lies in the component $\Ab(\Wb^{\T} \Wb, \Theta)$. 
Briefly, as long as $\Ab(\Wb^{\T} \Wb, \Theta)$ satisfies certain first and second-order deterministic equivalent conditions, Algorithm \ref{alg:sum} can achieve the same asymptotic prediction accuracy and select the same hyperparameter as Algorithm \ref{alg:ind}, without requiring individual-level data.
We summarize these results in the following theorem.

\begin{theorem} \label{thm:general_linear}
Consider any random sequence $\{\bbeta, \bepsilon, \Xb, \Wb\}_{(p, n, n_w) \in \NN^3}$ satisfying Conditions \ref{cond-np-ratio}-\ref{cond-W} and any parameter vector $\Theta$.
Suppose the sequence $\{\Ab(\Wb^{\T}\Wb, \Theta)\}$ satisfies
    \begin{align} \label{eqn:cond-A}
        n_w \cdot \Ab(\Wb^{\T}\Wb, \Theta) \asymp \Db(\Theta) \quad\mbox{and} \quad n_w^2 \cdot \Ab(\Wb^{\T}\Wb, \Theta)^{\T} \bSigma \Ab(\Wb^{\T}\Wb, \Theta) \asymp \Eb(\Theta),
    \end{align}
    for any $\Theta$ and some matrices $\Db$ and $\Eb$ depending on $\bSigma$ and $\Theta$ only. Then for $\hat{\bbeta}_{\rm G}(\Theta)^*$ defined in Equation \eqref{eqn:general}, 
    the out-of-sample $R^2$ is 
    $$R^2_{\rm sum, G}(\Theta) = \frac{n^{\valid}}{\|\yb^{\valid}\|_{2}^2} \cdot \frac{n^{\train}}{p} \cdot \kappa \sigma_{\bbeta}^2 \cdot \frac{\left( \tr \left[ \Db(\Theta) \bSigma^2 \right]\right)^2}{\tr \left(\bSigma \right) \cdot \tr \left[\Eb(\Theta) \bSigma \right]/h^2 + n^{\train} \cdot \tr \left[\Eb(\Theta) \bSigma^2 \right]} + o_p(1).$$
    Furthermore, for any $\Theta$, we have 
    \begin{align*}
        R^2_{\rm sum, G}(\Theta)/R^2_{\rm ind, G}(\Theta) \overset{p}{\to} 1. 
    \end{align*}
\end{theorem}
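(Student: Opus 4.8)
The plan is to follow the two-part strategy of the proof sketch of Theorem~\ref{thm:reference_panel}. Because the factor $n^{\valid}/(\|\yb^{\valid}\|_{2}^2\, n^{\valid})$ is common to $R^2_{\rm sum, G}(\Theta)$ and $R^2_{\rm ind, G}(\Theta)$, the continuous mapping theorem reduces the claim $R^2_{\rm sum, G}(\Theta)/R^2_{\rm ind, G}(\Theta)\overset{p}{\to}1$ to showing that the numerator $\langle\sbb^{\valid},\Ab\sbb^{\train}\rangle$ and the denominator $\|\Ab\sbb^{\train}\|_{\bSigma}^2=\sbb^{\train}{}^{\T}(\Ab^{\T}\bSigma\Ab)\sbb^{\train}$ each converge in probability to the same nonzero limit as their individual-data analogues, and that these limits yield the displayed closed form; throughout I write $\Ab=\Ab(\Wb^{\T}\Wb,\Theta)$. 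The two hypotheses in~\eqref{eqn:cond-A} are tailored to exactly these two objects: the numerator is linear in $\Ab$ and will be evaluated through $n_w\Ab\asymp\Db(\Theta)$, whereas the denominator is linear in $\Ab^{\T}\bSigma\Ab$ and will be evaluated through $n_w^2\Ab^{\T}\bSigma\Ab\asymp\Eb(\Theta)$.

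For the numerator I would substitute $\sbb^{\train}=\tfrac{n^{\train}}{n}\Xb^{\T}\yb+\sqrt{\tfrac{n^{\train}(n-n^{\train})}{n^2}}\,\Cov(\Xb^{\T}\yb)^{1/2}\hb$ and $\sbb^{\valid}=\Xb^{\T}\yb-\sbb^{\train}$, expand the bilinear form, and discard the two cross terms linear in $\hb$ (they concentrate to $0$ since $\hb$ has mean zero). This leaves the first-moment dependence term $\Rom{1}^{(1)}_{\rm sum}$, built from the deterministic means, and the compensatory term $\Rom{1}^{(2)}_{\rm sum}$, which after Gaussian quadratic-form concentration (Lemma~B.26 of \cite{bai2010spectral}) equals a multiple of $\tr[\Ab\,\Cov(\Xb^{\T}\yb)]$. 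Expanding $\Cov(\Xb^{\T}\yb)=(\Xb^{\T}\yb-n\bSigma\bbeta)(\Xb^{\T}\yb-n\bSigma\bbeta)^{\T}$ and averaging over $\bbeta,\bepsilon$ under Conditions~\ref{cond-beta}--\ref{cond-eps}, the $\tr[\Ab\hat{\bSigma}_{n}^2]$ and $\tr[\Ab\hat{\bSigma}_{n}]$ pieces of $\Rom{1}^{(1)}_{\rm sum}$ are matched and cancelled by those of $\Rom{1}^{(2)}_{\rm sum}$, so Lemma~\ref{lemma:second_moment} is not needed here; what survives is a multiple of $\tr[\Ab\bSigma^2]$, which under $n_w\Ab\asymp\Db(\Theta)$ converges to the value built from $\tr[\Db(\Theta)\bSigma^2]$. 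This is precisely the individual-data numerator, obtained by the identical computation with the $\Qb$-induced mean and variance of $\Xb^{\train}{}^{\T}\yb^{\train}$ in place of the matched first and second moments of $\sbb^{\train}$.

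For the denominator I would repeat the decomposition with $M=\Ab^{\T}\bSigma\Ab$. The $\hb$-linear cross term again vanishes; the mean part contributes a signal quadratic form proportional to $\bbeta^{\T}\Xb^{\T}\Xb\,M\,\Xb^{\T}\Xb\bbeta$ together with a noise form proportional to $\bepsilon^{\T}\Xb M\Xb^{\T}\bepsilon$, while the Gaussian part contributes a positive multiple of $\tr[M\,\Cov(\Xb^{\T}\yb)]$. In contrast to the numerator, this compensatory contribution now adds rather than cancels---it reproduces the within-split variance of $\Xb^{\train}{}^{\T}\yb^{\train}$ in the individual case---so a term $\tr[M\hat{\bSigma}_{n}^2]$ genuinely survives. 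This is where Lemma~\ref{lemma:second_moment} is essential: after replacing $M$ by its deterministic equivalent $n_w^{-2}\Eb(\Theta)$, it converts $\tr[M\hat{\bSigma}_{n}^2]$ into $\tfrac{1}{n}\tr(\bSigma)\tr(M\bSigma)+\tr(M\bSigma^2)$. Evaluating the noise form through the first-order law $\tr[M\hat{\bSigma}_{n}]\asymp\tr[M\bSigma]$ and combining the signal and noise pieces via the heritability identity~\eqref{eqn:h^2} produces a multiple of $\tr(\bSigma)\tr[\Eb(\Theta)\bSigma]/h^2+n^{\train}\tr[\Eb(\Theta)\bSigma^2]$. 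Assembling the numerator and denominator gives the stated formula for $R^2_{\rm sum, G}(\Theta)$; since the individual-data computation produces the identical limit, the ratio converges to $1$.

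I expect the main obstacle to be twofold. Conceptually, the delicate point is the precise bookkeeping of the second-moment correction: in the numerator it must cancel exactly the dependence-induced $\hat{\bSigma}_{n}^2$ and $\hat{\bSigma}_{n}$ contributions and nothing more, while in the denominator it must coincide with the individual-level within-split variance; both hinge on the matched covariance being the centered outer product $(\Xb^{\T}\yb-n\bSigma\bbeta)(\Xb^{\T}\yb-n\bSigma\bbeta)^{\T}$. Technically, because~\eqref{eqn:cond-A} supplies only abstract deterministic equivalents rather than the explicit resolvent $(\tau_{n_w}\bSigma+\theta\Ib_p)^{-1}$ available in the ridge case, I would first have to verify that $n_w\Ab$ and $n_w^2\Ab^{\T}\bSigma\Ab$ satisfy the positive-semidefiniteness and trace-norm regularity required to enter Lemmas~\ref{lemma:DE_second_order} and~\ref{lemma:second_moment} and Lemma~B.26; and the most demanding estimate is to show that the empirical sandwich $\sum_{j=1}^{n}\yb_j^2\,\Xb_j^{\T}(\Ab^{\T}\bSigma\Ab)\Xb_j$ arising from the individual-level $\Qb$-variance concentrates to the same trace $\tr[(\Ab^{\T}\bSigma\Ab)\,\Cov(\Xb^{\T}\yb)]$, an identity that requires control of the fourth-order moments of $\Xb$ under Condition~\ref{cond-X}.
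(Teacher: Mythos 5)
Your proposal is correct, and for the core of the theorem---the limit of $R^2_{\rm sum, G}(\Theta)$---it is essentially the paper's proof: the same continuous-mapping reduction to a numerator and a denominator, the same use of the first equivalence in \eqref{eqn:cond-A} to evaluate $\langle \sbb^{\valid}, \Ab\sbb^{\train}\rangle$ through $\Db(\Theta)$ (with the dependence-induced $\tr[\Ab\hat{\bSigma}_{n}^2]$ and $\tr[\Ab\hat{\bSigma}_{n}]$ pieces cancelled by the compensatory term so that only $\tr[\Db(\Theta)\bSigma^2]$ survives, and no need for Lemma~\ref{lemma:second_moment} there), and the same use of the second equivalence plus Lemma~\ref{lemma:second_moment} for $\|\Ab\sbb^{\train}\|_{\bSigma}^2$, where the covariance contribution adds rather than cancels. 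Where you genuinely diverge is in the benchmark $R^2_{\rm ind, G}(\Theta)$: you plan to compute it via the $\Qb$-conditional mean and covariance of $\Xb^{\train}{}^{\T}\yb^{\train}$, which is what forces the estimate you single out as hardest, namely that $\sum_{j}\yb_j^2\,\Xb_j^{\T}(\Ab^{\T}\bSigma\Ab)\Xb_j$ matches $\tr[(\Ab^{\T}\bSigma\Ab)\Cov(\Xb^{\T}\yb)]$. That route is workable (the two quantities agree to leading order, and the discrepancy is killed by the prefactor bookkeeping), but it requires genuine fourth-moment calculations of the type in Lemma S.9 of \cite{fu2024uncertainty}. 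The paper avoids this entirely: conditional on $\Qb$, $(\Xb^{\train},\yb^{\train})$ is itself an i.i.d.\ sample of size $n^{\train}$ satisfying the same conditions, so its Part II simply writes $\yb^{\train}=\Xb^{\train}\bbeta+\bepsilon^{\train}$, applies \eqref{eqn:cond-A}, Lemma B.26 of \cite{bai2010spectral}, and Lemma~\ref{lemma:second_moment} with $\hat{\bSigma}_{n^{\train}}=\Xb^{\train}{}^{\T}\Xb^{\train}/n^{\train}\asymp\bSigma$, and lands on the identical closed form with no sandwich estimate at all. Your remaining concern---whether the abstract $\Ab$, $\Db(\Theta)$, $\Eb(\Theta)$ possess the symmetry/trace-norm regularity needed to enter Lemma~\ref{lemma:second_moment} and the concentration lemmas---is legitimate, but the paper's proof is equally silent on it; it is an implicit regularity requirement on the class \eqref{eqn:estimator} in both arguments rather than a defect specific to your plan.
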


The proof of Theorem \ref{thm:general_linear} is presented in Section \ref{sec:proof_thm_general_linear} of the
supplementary material.
The $\hat{\bbeta}_{\rm G}(\Theta)^*$ in Equation \eqref{eqn:general} replaces $(\Wb^{\T} \Wb + \theta \Ib_{p})^{-1}$ in $\hat{\bbeta}_{\rm R}(\theta)^*$ 
by a general matrix $\Ab(\Wb^{\T}\Wb, \Theta)$, and  
Theorem \ref{thm:general_linear} demonstrates that if the first and second-order components of $\Ab(\Wb^{\T}\Wb, \Theta)$ have well-defined limits, then $\hat{\bbeta}_{\rm G}(\Theta)^*$ will exhibit good properties in resampling-based self-training. 
Equation \eqref{eqn:cond-A}  specifies these conditions, requiring the existence of first and second-order deterministic equivalents concerning a sequence of matrices 
depending on $\bSigma$ and $\Theta$.
As demonstrated in our proof, Equation \eqref{eqn:cond-A} is a crucial condition for the concentration inequalities in Lemma B.26 of \cite{bai2010spectral} to hold, thereby making the limiting behavior of the estimators traceable. 
For concrete estimators $\hat{\bbeta}_{\rm R}(\theta)$ and $\hat{\bbeta}_{\rm M}(\Theta)$, we have shown that such condition holds for their corresponding $\Ab(\Wb^{\T}\Wb, \Theta)$. 
For example, to prove Theorem \ref{thm:reference_panel} for $\hat{\bbeta}_{\rm R}(\theta)$, we provide the deterministic equivalents of
$$n_w \cdot \left( \Wb^{\T}\Wb + n_w \theta \Ib_{p} \right)^{-1}\quad \mbox{and} \quad n_w^2 \cdot \left( \Wb^{\T}\Wb + n_w \theta \Ib_{p} \right)^{-1}\bSigma\left( \Wb^{\T}\Wb + n_w \theta \Ib_{p} \right)^{-1},$$
which are given in Lemma \ref{lemma:DE_second_order}.

\section{Summary data-based ensemble learning}
\label{sec:emsemble}
Ensemble learning aims to combine multiple models to improve prediction accuracy and has been widely applied in genetic prediction \citep{pain2021evaluation, yang2022PGS,jin2025pennprs}. Traditional ensemble learning requires access to individual-level data to combine models. In this section, we explore whether the summary data-based model training framework can be extended to perform ensemble learning. We consider ensemble learning as a linear combination of $k$ different general linear estimators defined in Equation~\eqref{eqn:estimator} \citep{opitz1999popular, chen2024fast}. 
Specifically, we can model the ensemble estimator 
as 
$\hat{\bbeta}_{\rm E}(\Theta) = \Ab(\Wb^{\T}\Wb, \Theta)\Xb^{\train}{}^{\T} \yb^{\train}$, where 
$\Ab(\Wb^{\T}\Wb, \Theta) = \sum_{j=1}^{k} \omega_{j} \Ab_{j}(\Wb^{\T}\Wb, \Theta_j)$
and 
hyperparameter vector $\Theta = \{\omega_{j}, \Theta_{j}\}_{1 \leq j \leq k}$ for some constant $k$. 
When only summary-level data is available, we similarly define $\hat{\bbeta}_{\rm E}(\Theta)^*$ by replacing $\Xb^{\train}{}^{\T} \yb^{\train}$ with $\sbb^{\train}$
\begin{align}\label{eqn:ensemble}
    \hat{\bbeta}_{\rm E}(\Theta)^* = \sum_{j=1}^{k} \omega_{j} \Ab_{j}(\Wb^{\T}\Wb, \Theta_{j}) \sbb^{\train}.
\end{align}

When individual-level data are available, Algorithm \ref{alg:ensemble_ind} 
of the supplementary material 
details the conventional ensemble learning steps to maximize $R^2_{\rm ind, E}(\Theta)$. Algorithm \ref{alg:ensemble_sum} outlines the self-training approach for ensemble learning with summary data only, where 
we use the same $(\sbb^{\train},\sbb^{\valid})$ to train $k$ different estimators simultaneously, determining both the weights $\omega_{j}$ and hyperparameter $\Theta_{j}$ that maximize $R^2_{\rm sum, E}(\Theta)$.

\begin{algorithm}
\caption{Summary data-based ensemble learning}\label{alg:ensemble_sum}
\begin{algorithmic}
\Require Summary data $\Xb^{\T} \yb$, $\Wb^{\T} \Wb$, and the hyperparameter $\Theta = \{\omega_{j}, \Theta_{j}\}_{j=1}^{k}$. 
\vspace{1mm}

\State $\hb \gets \cN(0,\Ib_p),$  \hfill 		\texttt{//} Sample $p$-dimension standard Gaussian random variable.
\vspace{1mm}

\State $\sbb^{\train} \gets \frac{n^{\train}}{n} \Xb^{\T} \yb + \sqrt{\frac{n^{\train} (n - n^{\train})}{n^2}} \Cov(\Xb^{\T} \yb)^{1/2} \hb,$  \hfill 		\texttt{//} Construct $\sbb^{\train}$ for training.
\vspace{1mm}

\State $\sbb^{\valid} \gets \Xb^{\T} \yb - \sbb^{\train},$  \hfill 		\texttt{//} Construct $\sbb^{\valid}$ for validation.
\vspace{1mm}

\State $\hat{\bbeta}_{\rm E}(\Theta)^* \gets \sum_{j=1}^{k} \omega_{j} \Ab_{j}(\Wb^{\T} \Wb, \Theta_{j}) \sbb^{\train},$  \hfill 		\texttt{//} Obtain the estimator.
\vspace{1mm}

\State $R_{\rm sum, E}^2(\Theta) \gets ({n^{\valid}}\left/{\|\yb^{\valid}\|_{2}^2})\right. \cdot {\left \langle {\sbb^{\valid}}, \hat{\bbeta}_{\rm E}(\Theta)^{*} \right \rangle^{2} } \left/({n^{\valid} \cdot \| \hat{\bbeta}_{\rm E}(\Theta)^{*} \|_{\bSigma}^{2}}), \right.$\vspace{1mm}

\Statex \hfill \texttt{//} Compute the $R^2_{\rm sum, E}(\Theta)$ in \eqref{eqn:R2_sum_def}.
\vspace{1mm}

\State $\Theta^{*}_{\rm sum, E} \gets \max_{\Theta} R^2_{\rm sum, E}(\Theta),$  \hfill 		\texttt{//} Choose the best-performing hyperparameter.
\vspace{1.5mm}

\Return $R_{\rm sum, E}^2(\Theta)$ and $ \Theta^{*}_{\rm sum, E}$.

\end{algorithmic}
\end{algorithm}

Theorem \ref{thm:ensemble} shows that, asymptotically, Algorithm \ref{alg:ensemble_sum} selects the same hyperparameter as Algorithm \ref{alg:ensemble_ind}.

\begin{theorem}\label{thm:ensemble}
    Consider any random sequence $\{\bbeta, \bepsilon, \Xb, \Wb\}_{(p, n, n_w) \in \NN^3}$ satisfying Conditions \ref{cond-np-ratio}-\ref{cond-W} and any parameter vector $\Theta = \{\omega_{j}, \Theta_{j}\}_{j=1}^{k}$. Suppose the sequence $\{\Ab_{j}(\Wb^{\T}\Wb, \Theta_{j})\}$ satisfies
    \begin{align*}
        n_w \cdot \Ab_{j}\left( \Wb^{\T}\Wb, \Theta_{j} \right) \asymp \Db_{j}(\Theta) \quad \text{and} \quad n_w^2 \cdot \Ab_{j}(\Wb^{\T}\Wb, \Theta_{j})^{\T} \bSigma \Ab_{j}(\Wb^{\T}\Wb, \Theta_{j}) \asymp \Eb_{j}(\Theta_{j}), 
    \end{align*}
    with some matrices $\Db_{j}(\Theta_{j})$ and $\Eb_{j}(\Theta_{j})$ depending on $\bSigma$ and $\Theta_{j}$ only.
    Then for any $\Theta$ and the $\hat{\bbeta}_{\rm E}(\Theta)^*$ defined in Equation \eqref{eqn:ensemble}, the out-of-sample $R^2$ is
    $$R^2_{\rm sum, E}(\Theta) = \frac{n^{\valid}}{\|\yb^{\valid}\|_{2}^2} \cdot \frac{n^{\train}}{p} \cdot \kappa \sigma_{\bbeta}^2 \cdot \frac{\left( \tr \left[ \Db(\Theta) \bSigma^2 \right]\right)^2}{\tr \left(\bSigma \right) \cdot \tr \left[\Eb(\Theta) \bSigma \right]/h^2 + n^{\train} \cdot \tr \left[\Eb(\Theta) \bSigma^2 \right]} + o_p(1),$$
    where
    \begin{align*}
        \Db(\Theta) = \sum_{j=1}^{k} \omega_{j} \Db_{j}(\Theta_{j}) \quad \mbox{and} \quad 
        \Eb(\Theta) = \sum_{i \neq j} \omega_{i} \omega_{j} \Db_{i}(\Theta_{i}) \bSigma \Db_{j}(\Theta_{j}) + \sum_{j=1}^{k} \omega_{j}^2 \Eb_{j}(\Theta_{j}). 
    \end{align*}
    Furthermore, for any $\Theta = \{\omega_{j}, \Theta_{j}\}_{j=1}^{k}$, we have 
     \begin{align*}
        R^2_{\rm sum, E}(\Theta)/R^2_{\rm ind, E}(\Theta) \overset{p}{\to} 1. 
    \end{align*} 
\end{theorem}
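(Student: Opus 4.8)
The plan is to obtain the statement as a direct corollary of Theorem~\ref{thm:general_linear} by viewing the ensemble estimator as a single general linear estimator. Setting $\Ab(\Wb^{\T}\Wb,\Theta)=\sum_{j=1}^{k}\omega_j\Ab_j(\Wb^{\T}\Wb,\Theta_j)$, the estimator $\hat{\bbeta}_{\rm E}(\Theta)^*$ in \eqref{eqn:ensemble} is exactly $\Ab(\Wb^{\T}\Wb,\Theta)\sbb^{\train}$, i.e.\ of the form \eqref{eqn:general}. It therefore suffices to check that this combined $\Ab(\Wb^{\T}\Wb,\Theta)$ satisfies the two deterministic-equivalence hypotheses in \eqref{eqn:cond-A} and to read off the corresponding $\Db(\Theta)$ and $\Eb(\Theta)$; the closed form of $R^2_{\rm sum,E}(\Theta)$ and the ratio $R^2_{\rm sum,E}(\Theta)/R^2_{\rm ind,E}(\Theta)\overset{p}{\to}1$ then follow verbatim from Theorem~\ref{thm:general_linear}.

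First I would dispatch the first-order condition, which is immediate: deterministic equivalence is preserved under finite linear combinations with the fixed coefficients $\omega_j$, since $\Cb\mapsto\tr[\Cb\,(\cdot)]$ is linear. Hence $n_w\Ab(\Wb^{\T}\Wb,\Theta)=\sum_j\omega_j\,\bigl(n_w\Ab_j(\Wb^{\T}\Wb,\Theta_j)\bigr)\asymp\sum_j\omega_j\Db_j(\Theta_j)=\Db(\Theta)$, using $n_w\Ab_j\asymp\Db_j$. For the second-order condition I would expand the quadratic form and separate within-method and between-method pieces,
\begin{align*}
n_w^2\,\Ab(\Wb^{\T}\Wb,\Theta)^{\T}\bSigma\,\Ab(\Wb^{\T}\Wb,\Theta)=\sum_{j=1}^{k}\omega_j^2\bigl(n_w^2\Ab_j^{\T}\bSigma\Ab_j\bigr)+\sum_{i\neq j}\omega_i\omega_j\bigl(n_w^2\Ab_i^{\T}\bSigma\Ab_j\bigr),
\end{align*}
where the diagonal blocks are handled directly by the hypothesis $n_w^2\Ab_j^{\T}\bSigma\Ab_j\asymp\Eb_j(\Theta_j)$ and contribute $\sum_j\omega_j^2\Eb_j(\Theta_j)$.

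The step I expect to be the main obstacle is the cross terms $n_w^2\Ab_i^{\T}\bSigma\Ab_j$ with $i\neq j$. One cannot simply multiply the two first-order equivalents, because deterministic equivalence is not closed under products: a genuine second-order fluctuation generally survives, precisely as the factor $\rho_{n_w}(\theta)+1$ in Lemma~\ref{lemma:DE_second_order} records in the single-operator case. What makes the analysis tractable is that each such product is only ever evaluated inside a quadratic form in $\sbb^{\train}$ (and, on the individual-level side, in $\Xb^{\train}{}^{\T}\yb^{\train}$), and both of these are built from $(\Xb,\yb,\hb)$, which are independent of the reference panel $\Wb$. I would condition on $\Wb$, apply the quadratic-form concentration of Lemma~B.26 in \cite{bai2010spectral}, and use Lemma~\ref{lemma:second_moment} to reduce each contribution to traces of the form $\tr\bigl[\Cb\,n_w^2\Ab_i^{\T}\bSigma\Ab_j\bigr]$ with $\Cb$ a deterministic polynomial in $\bSigma$. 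Establishing the deterministic equivalents for these two-operator products is the heart of the matter and calls for a two-parameter extension of the second-order Marchenko--Pastur law behind Lemma~\ref{lemma:DE_second_order}; matching the resulting limits to the between-method blocks $\Db_i(\Theta_i)\bSigma\Db_j(\Theta_j)$ recorded in $\Eb(\Theta)$ is the delicate point.

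Finally, I would note that the very same cross-product functionals enter the denominators of both $R^2_{\rm sum,E}(\Theta)$ and $R^2_{\rm ind,E}(\Theta)$, the only difference being $\sbb^{\train}$ versus $\Xb^{\train}{}^{\T}\yb^{\train}$, whose first and second moments agree in the sense exploited in the proof sketch of Theorem~\ref{thm:reference_panel}. Consequently the two accuracy measures converge to the same limit for every $\Theta$, so the ratio conclusion is secure once the cross-term equivalents are in hand. Feeding the constructed $\Db(\Theta)$ and $\Eb(\Theta)$ into Theorem~\ref{thm:general_linear} then yields both the displayed expression for $R^2_{\rm sum,E}(\Theta)$ and $R^2_{\rm sum,E}(\Theta)/R^2_{\rm ind,E}(\Theta)\overset{p}{\to}1$ for all $\Theta=\{\omega_j,\Theta_j\}_{j=1}^{k}$.
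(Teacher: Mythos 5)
Your reduction is precisely the paper's own proof: in the supplement, the proof of Theorem~\ref{thm:ensemble} consists of the single remark that the argument for Theorem~\ref{thm:general_linear} goes through with $\Ab(\Wb^{\T}\Wb,\Theta)$ replaced by $\sum_{j=1}^{k}\omega_j\Ab_j(\Wb^{\T}\Wb,\Theta_j)$, and your handling of the first-order condition (linearity of deterministic equivalence) and of the diagonal second-order blocks is exactly what that substitution uses.

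However, the step you defer --- establishing $n_w^2\,\Ab_i^{\T}\bSigma\Ab_j \asymp \Db_i(\Theta_i)\bSigma\Db_j(\Theta_j)$ for $i\neq j$ --- is a genuine gap, and it is not merely delicate: as you suspect, deterministic equivalence is not multiplicative for functions of the same $\Wb^{\T}\Wb$, and the matching you hope for can outright fail. Take $k=2$, $\bSigma=\Ib_p$, and two ridge matrices $\Ab_l=(\Wb^{\T}\Wb+\theta_l n_w\Ib_p)^{-1}$ with $\theta_i\neq\theta_j$. Writing $\hat{\bSigma}_{n_w}=\Wb^{\T}\Wb/n_w$ and $a_l=\tau_{n_w}(\theta_l)+\theta_l$, partial fractions and linearity of $\asymp$ give
\begin{align*}
(\hat{\bSigma}_{n_w}+\theta_i\Ib_p)^{-1}(\hat{\bSigma}_{n_w}+\theta_j\Ib_p)^{-1}
=\frac{(\hat{\bSigma}_{n_w}+\theta_i\Ib_p)^{-1}-(\hat{\bSigma}_{n_w}+\theta_j\Ib_p)^{-1}}{\theta_j-\theta_i}
\asymp\frac{a_j-a_i}{(\theta_j-\theta_i)\,a_i a_j}\,\Ib_p,
\end{align*}
whereas $\Db_i\bSigma\Db_j=(a_ia_j)^{-1}\Ib_p$; the two coincide only if $\tau_{n_w}(\theta_i)=\tau_{n_w}(\theta_j)$, which fails for $\theta_i\neq\theta_j$ by the closed form of $\tau_{n_w}$ in Corollary~\ref{cor:reference_panel}. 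So the ``two-parameter extension of Lemma~\ref{lemma:DE_second_order}'' you call for will in general produce a cross equivalent \emph{different} from the block $\Db_i\bSigma\Db_j$ appearing in the stated $\Eb(\Theta)$; to land on that formula one must either assume the cross equivalences as an additional hypothesis or let different methods use independent reference panels (as in Theorem~\ref{thm:multi}, where independence of the $\Wb_j$ across populations makes the product rule valid). This gap is inherited from, not introduced by, your argument: the paper's one-line substitution silently uses the same product rule, so its closed-form expression for $R^2_{\rm sum,E}(\Theta)$ is affected equally. Your closing observation is the part that does survive: since the identical cross functionals enter both $R^2_{\rm sum,E}(\Theta)$ and $R^2_{\rm ind,E}(\Theta)$, conditioning on $\Wb$ and applying the quadratic-form concentration yields $R^2_{\rm sum,E}(\Theta)/R^2_{\rm ind,E}(\Theta)\overset{p}{\to}1$ without requiring closed-form cross equivalents.
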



Theorem \ref{thm:ensemble} demonstrates that, even when combining multiple estimators, one can rely solely on summary statistics to select the same best-performing parameter and achieve the same out-of-sample $R^2$ as if individual-level data were used. Additional numerical illustrations using real data are provided in Section~\ref{sec:numer}. 


\section{Multi-ancestry data resources}
\label{sec:multi}
Improving the accuracy of genetic prediction across diverse populations is a key area of interest.
Since most existing genetic prediction models have historically been trained primarily on populations of European ancestry, researchers are now actively developing advanced methods to improve the prediction of complex traits and diseases in non-European populations \citep{zhang2023new,zhang2024ensemble, jin2024mussel}.
Many proposed methods improve predictive power by integrating summary-level data from diverse populations to develop ancestry-specific models tailored to each population \citep{kachuri2024principles}.

In this section, we present and analyze Algorithm \ref{alg:multi_sum}, which uses summary statistics from multi-ancestry data sources to self-train a population-specific model without requiring access to individual-level data. 
We consider $K$ datasets collected from $K$ different populations (such as European, East Asian, and African American), each with an observation vector $\yb_{j}$ and a design matrix $\Xb_{j}$ that satisfy the linear model $\yb_j = \Xb_j \bbeta_{j} + \epsilon_j$ for $j=1, \ldots, K$. Without loss of generality, we assume the target population for model development is the first population, $j = 1$. Thus, we aim to estimate the target coefficient $\bbeta_1$ using data from populations $1$ through $K$. 
Let $\Wb_{j}$ represent the reference panel dataset from the same cohort as the population $j$.  
For simplicity, we assume each population has the same sample size $n$ and each  reference panel has the same sample size $n_w$, although our analysis can be readily extended to cases with different sample sizes.
We make similar assumptions on the datasets as in Section \ref{sec:model}, which are summarized in Condition~\ref{cond-X-cross}.

\begin{condition}\label{cond-X-cross}
    For $j=1, \ldots, K$,
    we assume $\Xb_j = \Xb_0 \bSigma_{j}^{1/2} \in \RR^{n \times p}$ and $\Wb_{j} = \Wb_0 \bSigma_{j}^{1/2} \in \RR^{n_w \times p}$. Entries of $\Xb_0$ are real-value i.i.d. random variables with mean zero, variance one, and a finite $4$-th order moment. The $\bSigma_{j}$ are $p \times p$ population level deterministic positive definite matrices with uniformly bounded eigenvalues. Specifically, we have $0 < c \leq \lambda_{\min}(\bSigma_{j}) \leq \lambda_{\max}(\bSigma_{j}) \leq C$ for all $p$ and all $j$. Moreover, we assume each row of $\Xb_j$ and $\Wb_{j}$ are all jointly independent. 
\end{condition}


Let $\Fb(0, \Vb)$ denote a multi-dimensional generic distribution with
mean zero, covariance $\Vb$, and each coordinate has a finite $4$-th order moment. We introduce the following conditions on genetic effects and random errors, which extend the Conditions~\ref{cond-beta} and \ref{cond-eps} from one population to $K$ populations.

\begin{condition}\label{cond-beta-cross}
For $1 \leq i,j \leq K$, there exist sparsity level $0 \leq \kappa_{j} \leq 1$ and $\sigma_{i,j}^2 \geq 0$ such that
the joint distribution of $(\bbeta_{1}, \cdots, \bbeta_{K})$ is given by
\begin{align*}
    \begin{pmatrix}
        \bbeta_{1}\\
        \cdots\\
        \bbeta_{K}
    \end{pmatrix} \sim \begin{pmatrix}
        (1 - \kappa_{1}) \delta(\bm 0)\\
        \cdots\\
        (1 - \kappa_{K}) \delta(\bm 0)
    \end{pmatrix} + \begin{pmatrix}
        \kappa_{1}\\
        \cdots\\
        \kappa_{K}
    \end{pmatrix} \odot \Fb \left[ \begin{pmatrix}
        \bm 0\\
        \cdots\\
        \bm 0
    \end{pmatrix}, p^{-1} \begin{pmatrix}
        \sigma_{1, 1}^2 \Ib_{p} & \cdots & \sigma_{1, K}^2 \Ib_{p}
        \\
        \cdots 
        \\
        \sigma_{1, K}^2 \Ib_{p} & \cdots & \sigma_{K, K}^2 \Ib_{p}
    \end{pmatrix} \right],
\end{align*}
where $\odot$ is the Hadamard product between two vectors defined to as $(\ab \odot \bb)_{l} = \ab_{l} \cdot \bb_{l}$.
Furthermore, we assume that $\sum_{l=1}^p (\bbeta_{j})_{l}^2 \to \kappa_{j} \sigma_{j,j}^2$ as $n, p \to \infty$ with $p/n \to \gamma >0$. 
\end{condition}

\begin{condition}\label{cond-eps-cross} 
For $1 \leq j \leq K$, random errors $\bepsilon_j$s are independent random vectors, and each entry has distribution
\begin{align*}
    \bepsilon_{j,i} \overset{i.i.d.}{\sim} F(0, \sigma_{\bepsilon_{j}}^2), \quad \mbox{for} \quad 1 \leq j \leq K \quad \mbox{and} \quad 1 \leq i \leq n.
\end{align*}
\end{condition}

We also extend the definition of heritability to encompass multiple ancestries.
\begin{definition}
For $1 \leq j \leq K$, conditional on $\bbeta_{j}$, the heritability $h_{j}^2$ of the data $(\Xb_{j}, \yb_{j})$ is defined as 
$h_{j}^2 =\lim_{n, p \to \infty} \var(\Xb_{j} \bbeta_{j})/\var(\yb_{j})=\lim_{n, p \to \infty} \bbeta_{j}^{\T} \Xb_{j}^{\T} \Xb_{j} \bbeta_{j}/(\bbeta_{j}^{\T} \Xb_{j}^{\T} \Xb_{j} \bbeta_{j} + \bepsilon_{j}^{\T} \bepsilon_{j})$.
\end{definition}

Algorithm \ref{alg:multi_sum} outlines the detailed procedure for the resampling-based self-training of multi-ancestry data.
When multi-ancestry summary statistics are available, $K$ vectors $\sbb_{j}^{\train}, 1 \leq j \leq K$, are sampled simultaneously for training. The final estimator across the $K$ ancestries is then computed as
\begin{align} \label{eqn:multi}
    \hat{\bbeta}_{\rm MA}(\Theta)^* = \sum_{j=1}^{K} \omega_j \Ab_{j}(\Wb_{j}^{\T} \Wb_{j}, \Theta_{j}) \sbb^{\train}_{j}.
\end{align}
Here $\Ab_{j}(\Wb_{j}^{\T} \Wb_{j}, \Theta_{j}) \sbb^{\train}_{j}$ represents the estimator using summary statistics from the population $j$, as described in Section \ref{sec:general}. 
The weight $\omega_{j}$ quantifies the contribution of data from the population $j$ to the target population, and $\Theta_{j}$ denotes the hyperparameter for this population. When the population $1$ is the target, all hyperparameter is tuned using $\sbb_{1}^{\valid}$.
If individual-level data are available, the corresponding estimator $\hat{\bbeta}_{\rm MA}(\Theta)$ can be trained by replacing $\sbb^{\train}_{j}$ with $\Xb_{j}^{\train}{}^{\T} \yb_{j}^{\train}$, as outlined in Algorithm \ref{alg:multi_ind} of the supplementary material.

\begin{algorithm}
\caption{Summary data-based model training with multi-ancestry data resources}\label{alg:multi_sum}
\begin{algorithmic}
\Require Summary data $\Xb_{j}^{\T} \yb_{j} \in \RR^{p}$, $\Wb_{j}^{\T} \Wb_{j}$, for $1 \leq j \leq K$, and  hyperparameter $\Theta = \{\omega_{j}, \Theta_{j}\}_{j=1}^{K}$. 
\vspace{1mm}

\For{$j \gets 1$ to $K$} 

\State $\hb_{j} \gets \cN(0,\Ib_p),$  \hfill 		\texttt{//} Sample $p$-dimension standard Gaussian random variable.
\vspace{1mm}

\State $\sbb_{j}^{\train} \gets \frac{n^{\train}}{n} \Xb_{j}^{\T} \yb_{j} + \sqrt{\frac{n^{\train} (n - n^{\train})}{n^2}} \Cov(\Xb_{j}^{\T} \yb_{j})^{1/2} \hb_{j},$  \hfill 		\texttt{//} Construct $\sbb_{j}^{\train}$ for training.
\vspace{1mm}
\EndFor
\vspace{1mm}

\State $\sbb_{1}^{\valid} \gets \Xb_{1}^{\T} \yb_{1} - \sbb_{1}^{\train},$  \hfill 		\texttt{//} Construct $\sbb_{1}^{\valid}$ for validation in the population $1$.
\vspace{1mm}

\State $\hat{\bbeta}_{\rm MA}(\Theta)^* \gets \sum_{j=1}^{K} \omega_{j} \Ab_{j}(\Wb_{j}^{\T} \Wb_{j}, \Theta_{j}) \sbb^{\train}_{j},$  \hfill 		\texttt{//} Obtain the estimator.
\vspace{1mm}

\State $R_{\rm sum, MA}^2(\Theta) \gets ({n^{\valid}}\left/{\|\yb^{\valid}\|_{2}^2})\right. \cdot {\left \langle {\sbb^{\valid}_{1}}, \hat{\bbeta}_{\rm MA}(\Theta)^* \right \rangle^{2} } \left/({n^{\valid} \cdot \| \hat{\bbeta}_{\rm MA}(\Theta)^* \|_{\bSigma}^{2}}), \right.$  
\vspace{1mm}

\Statex \hfill \texttt{//} Compute the $R^2_{\rm sum, MA}$ in \eqref{eqn:R2_sum_def}.
\vspace{1mm}

\State $\Theta^{*}_{\rm sum, MA} \gets \max_{\Theta} R^2_{\rm sum, MA}(\Theta),$  \hfill 		\texttt{//} Choose the best-performing hyperparameters.
\vspace{1.5mm}

\Return $R_{\rm sum, MA}^2(\Theta)$ and $\Theta^{*}_{\rm sum, MA}$.

\end{algorithmic}
\end{algorithm}

Theorem \ref{thm:multi} demonstrates that, with multi-ancestry summary-level data, Algorithm \ref{alg:multi_sum} allows for the selection of hyperparameter that is asymptotically equivalent to that obtained using individual-level data.

\begin{theorem} \label{thm:multi}
    Consider any random sequence $\{\bbeta_{j}, \bepsilon_{j}, \Xb_{j}, \Wb_{j}\}_{(p, n, n_w) \in \NN^3}, 1 \leq j \leq K$ satisfying Conditions \ref{cond-np-ratio} and \ref{cond-X-cross}-\ref{cond-eps-cross} and any parameter vector $\Theta = \{\omega_j, \Theta_j\}_{j=1}^{K}$. 
    Suppose the sequence $\{\Ab_{j}(\Wb_{j}^{\T}\Wb_{j}, \Theta_{j})\}$ satisfies
    \begin{align*}
        n_w \cdot \Ab_{j}(\Wb_{j}^{\T}\Wb_{j}, \Theta_{j}) \asymp \Db_{j}(\bSigma_{j}, \Theta_{j}) \quad\mbox{and} \quad n_w^2 \cdot \Ab_{j}(\Wb_{j}^{\T}\Wb_{j}, \Theta_{j})^{\T} \bSigma_{1} \Ab_{j}(\Wb_{j}^{\T}\Wb_{j}, \Theta_{j}) \asymp \Eb_{j}(\bSigma_{j}, \Theta_{j})
    \end{align*}
    with some $\Db_{j}(\bSigma_{j}, \Theta_{j})$ and $\Eb_{j}(\bSigma_{j}, \Theta_{j})$ depending on $\bSigma_{j}$ and $\Theta_{j}$ only. 
    We abbreviate $\Db_{j}(\bSigma_{j}, \Theta_{j})$ and $\Eb_{j}(\bSigma_{j}, \Theta_{j})$ by $\Db_{j}$ and $\Eb_{j}$, respectively.
    Then for any $\Theta$ and $\hat{\bbeta}_{\rm MA}(\Theta)^*$ defined in Equation \eqref{eqn:multi}, the out-of-sample $R^2$ is 
    \begin{align*}
        R^2_{\rm sum, MA}(\theta)
        = \frac{n^{\valid}}{\|\yb^{\valid}\|_{2}^2} \cdot \frac{\left( \Lambda_{\rm sum}^{(1)} \right)^{2}}{\Lambda_{\rm sum}^{(2)}} + o_p(1),
    \end{align*}
    where
    \begin{align*}
        \Lambda_{\rm sum}^{(1)} 
        =\ & \omega_1 \cdot \frac{n^{\train}}{n_w} \cdot \frac{\kappa_{1} \sigma_{\bbeta}^2}{p} \cdot \tr\left( \Db_{1} \bSigma_{1}^2 \right) + \sum_{j=2}^{K} \omega_j \frac{n^{\train}}{n_w}  \cdot \frac{\kappa_{1} \kappa_{j} \sigma_{1,j}^2}{p} \cdot \tr \left( \bSigma_{1} \Db_{j} \bSigma_{j} \right) \quad \mbox{and}
        \\
        \Lambda_{\rm sum}^{(2)} =\ & \sum_{1 \leq i < j \leq K}  2 \omega_i \omega_{j} \left( \frac{n^{\train}}{n_w} \right)^2 \frac{\kappa_{i} \kappa_{j} \sigma_{i,j}^2}{p} \cdot \tr \left( \bSigma_{i} \Db_{i} \bSigma_{1} \Db_{j} \bSigma_{j} \right)
        \\
        &+ \sum_{1 \leq j \leq K} \omega_{j}^{2} \cdot \left\{  \frac{n^{\train}}{n_w^2} \cdot \frac{\kappa_{j} \sigma_{j, j}^2}{p} \cdot \frac{1}{h_{j}^2} \tr \left(\bSigma_{j} \right) \cdot \tr \left(\Eb_{j} \bSigma_{j} \right) + \left( \frac{n^{\train}}{n_w} \right)^2 \cdot \frac{\kappa_{j} \sigma_{j,j}^2}{p}\tr \left(\Eb_{j} \bSigma_{j}^2 \right) \right\}.
    \end{align*}
    Moreover, for any $\Theta = \{\omega_{j}, \Theta_{j}\}_{j=1}^{k}$, we have     
    $$R^2_{\rm sum, MA}(\Theta) / R^2_{\rm ind, MA}(\Theta) \overset{p}{\to} 1.$$
\end{theorem}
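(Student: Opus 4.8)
The plan is to follow the same two-step strategy used for Theorem~\ref{thm:reference_panel} and Theorem~\ref{thm:general_linear}. By the continuous mapping theorem it suffices to show that the numerator $\langle \sbb_1^{\valid}, \hat{\bbeta}_{\rm MA}(\Theta)^* \rangle / n^{\valid}$ and the denominator $\| \hat{\bbeta}_{\rm MA}(\Theta)^* \|_{\bSigma_1}^2 / n^{\valid}$ of $R^2_{\rm sum, MA}(\Theta)$ converge in probability to $\Lambda_{\rm sum}^{(1)}$ and $\Lambda_{\rm sum}^{(2)}$, and that the corresponding individual-level functionals share the same limits, so the ratio tends to $1$. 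Throughout I would condition on the effect vectors $\{\bbeta_j\}$ and the Gaussian draws $\{\hb_j\}$ as appropriate, exploiting the independence across populations granted by Conditions~\ref{cond-X-cross}--\ref{cond-eps-cross}, and repeatedly invoke the first- and second-order deterministic equivalents $n_w \Ab_j \asymp \Db_j$ and $n_w^2 \Ab_j^{\T} \bSigma_1 \Ab_j \asymp \Eb_j$ together with Lemma~B.26 of \cite{bai2010spectral}.

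For the numerator I would substitute $\sbb_1^{\valid} = \Xb_1^{\T} \yb_1 - \sbb_1^{\train}$ and split the estimator in Equation~\eqref{eqn:multi} into the target term $j=1$ and the auxiliary terms $j \ge 2$. The $j=1$ contribution $\omega_1 \langle \Xb_1^{\T}\yb_1 - \sbb_1^{\train}, \Ab_1 \sbb_1^{\train}\rangle / n^{\valid}$ is structurally identical to the single-population numerator, so the dependence-induced piece and the compensatory piece arising from the resampling covariance $\Cov(\Xb_1^{\T}\yb_1)$ cancel exactly as in the proof sketch of Theorem~\ref{thm:reference_panel}, leaving $\omega_1 (n^{\train}/n_w)(\kappa_1 \sigma_{1,1}^2/p)\tr(\Db_1 \bSigma_1^2)$. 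For each $j \ge 2$, the key simplification is that $\hb_j$ is independent of both $\hb_1$ and of population $1$'s data, so all noise cross-terms vanish and only the first-moment product $(n^{\train}/n^2)\langle \Xb_1^{\T}\yb_1, \omega_j \Ab_j \Xb_j^{\T}\yb_j\rangle$ survives; taking expectations over the independent designs and noise reduces this to $n^2\,\bbeta_1^{\T}\bSigma_1 \Ab_j \bSigma_j \bbeta_j$, and Condition~\ref{cond-beta-cross} supplies $\EE[\bbeta_1^{\T} \Mb \bbeta_j] = (\kappa_1\kappa_j \sigma_{1,j}^2/p)\tr(\Mb)$. Combined with $n_w \Ab_j \asymp \Db_j$, this yields the cross-population terms in $\Lambda_{\rm sum}^{(1)}$.

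For the denominator I would expand $\|\hat{\bbeta}_{\rm MA}(\Theta)^*\|_{\bSigma_1}^2 = \sum_{i,j}\omega_i\omega_j \langle \Ab_i \sbb_i^{\train}, \bSigma_1 \Ab_j \sbb_j^{\train}\rangle$ and treat diagonal and off-diagonal blocks separately. Each diagonal block $\omega_j^2 \langle \Ab_j \sbb_j^{\train}, \bSigma_1 \Ab_j \sbb_j^{\train}\rangle$ is quadratic in a single $\sbb_j^{\train}$, so its second moment is governed by $\Cov(\Xb_j^{\T}\yb_j)$; applying Lemma~\ref{lemma:second_moment} to the arising $\hat{\bSigma}^2$-type traces together with the second-order equivalent $n_w^2 \Ab_j^{\T}\bSigma_1 \Ab_j \asymp \Eb_j$ produces the heritability term $(n^{\train}/n_w^2)(\kappa_j\sigma_{j,j}^2/p)(1/h_j^2)\tr(\bSigma_j)\tr(\Eb_j\bSigma_j)$ and the signal term $(n^{\train}/n_w)^2(\kappa_j\sigma_{j,j}^2/p)\tr(\Eb_j\bSigma_j^2)$. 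For the off-diagonal blocks $i \ne j$, independence of $\hb_i$ and $\hb_j$ again annihilates the noise cross-terms, and the genetic cross-covariance $\sigma_{i,j}^2$ leaves $(n^{\train}/n_w)^2(\kappa_i\kappa_j\sigma_{i,j}^2/p)\tr(\bSigma_i\Db_i\bSigma_1\Db_j\bSigma_j)$, reproducing $\Lambda_{\rm sum}^{(2)}$. Running the same decompositions for $\hat{\bbeta}_{\rm MA}(\Theta)$, where population $1$'s training/validation split is genuinely independent so no compensation is needed, gives identical limits, establishing $R^2_{\rm sum, MA}(\Theta)/R^2_{\rm ind, MA}(\Theta)\overset{p}{\to}1$.

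I expect the main obstacle to be the bookkeeping of the cross-population structure rather than any single new estimate. Relative to the one-population arguments one must now track $K$ distinct covariance matrices $\bSigma_1,\dots,\bSigma_K$ together with the full genetic cross-covariance pattern $\{\sigma_{i,j}^2\}$, and verify that (i) the dependence/compensation cancellation remains exact for the target population $j=1$ used in validation, while (ii) the independently resampled noise $\hb_j$ for $j \ge 2$ contributes no spurious first- or second-order terms. The delicate point is checking that the second-order deterministic equivalents combine correctly across different $\bSigma_j$ so that the concentration in Lemma~B.26 of \cite{bai2010spectral} applies uniformly, which is ultimately what guarantees that the summary-based and individual-based functionals agree term by term.
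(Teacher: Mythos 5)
Your proposal is correct and follows essentially the same route as the paper's proof: splitting the numerator into the target ($j=1$) term, where the dependence-induced and compensatory pieces cancel as in Theorem~\ref{thm:reference_panel}, plus the cross-population terms ($j\ge 2$) handled via independence of the $\hb_j$ and the cross-covariance in Condition~\ref{cond-beta-cross}; and splitting the denominator into diagonal blocks (treated with Lemma~\ref{lemma:second_moment} and the second-order equivalents $\Eb_j$) and off-diagonal blocks (treated with the first-order equivalents $\Db_j$ and Lemma~B.26 of \cite{bai2010spectral}), then repeating the decomposition for the individual-level functionals. The term-by-term limits you identify match those in the paper exactly, so no gap remains.
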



The following two corollaries provide additional insights into the optimal weights across populations. 
For simplicity, we focus on the case with $K=2$ populations.
Corollary \ref{cor:optimal_weight} presents the closed-form for the best-performing population weights $\omega_1$ and  $\omega_2$, indicating that the best-performing weights are determined by the covariance structure between the second and target populations, $\sigma_{1,2}^2 \Ib_{p}$. 
The specific forms of $\omega_1$ and  $\omega_2$ are outlined in Equation~\eqref{eqn:optimal_weight}. Briefly, when the genetic effects of the second population are positively correlated with those of the target population (i.e., $\sigma_{1,2}^2 > 0$), incorporating data from the second population improves the prediction accuracy of the target prediction.

\begin{corollary} \label{cor:optimal_weight}
Under the same conditions as in Theorem \ref{thm:multi}, and considering the special case $K = 2$, we have $\omega_2 = 1-  \omega_1$, and the closed-form expression for $R^2_{\rm sum, MA}(\Theta)$ in Theorem \ref{thm:multi} is 
    \begin{align*}
        R^2_{\rm sum, MA}(\Theta)
        = \frac{n^{\valid}}{\|\yb^{\valid}\|_{2}^2} \cdot \frac{ \left[ \omega_1 \cdot N_1(\Theta) + (1 - \omega_1) N_2(\Theta) \right]^2}{\omega_1^2 \cdot D_1(\Theta) + (1 - \omega_1)^2 \cdot D_2(\Theta) + 2 \omega_1 (1 - \omega_1) \cdot D_3(\Theta)} + o_p(1), 
    \end{align*}
    where
    \begin{align*}
        N_1(\Theta) =\ & \frac{\kappa_{1} \sigma_{\bbeta}^2}{p} \cdot \tr\left( \Db_{1} \bSigma_{1}^2 \right), \quad  N_2(\Theta) =\frac{\kappa_{1} \kappa_{2} \sigma_{1,2}^2}{p} \cdot \tr \left( \bSigma_{1} \Db_{2} \bSigma_{2} \right), 
        \\
        D_1(\Theta) =\ & \frac{1}{n^{\train}} \cdot \frac{\kappa_{1} \sigma_{1, 1}^2}{p} \cdot \frac{1}{h_{1}^2} \tr \left(\bSigma_{1} \right) \cdot \tr \left(\Eb_{1} \bSigma_{1} \right) + \frac{\kappa_{1} \sigma_{1,1}^2}{p}\tr \left(\Eb_{1} \bSigma_{1}^2 \right), 
        \\
        D_2(\Theta) =\ & \frac{1}{n^{\train}} \cdot \frac{\kappa_{2} \sigma_{2, 2}^2}{p} \cdot \frac{1}{h_{2}^2} \tr \left(\bSigma_{2} \right) \cdot \tr \left(\Eb_{2} \bSigma_{2} \right) + \frac{\kappa_{2} \sigma_{2,2}^2}{p}\tr \left(\Eb_{2} \bSigma_{2}^2 \right), \quad \mbox{and}
        \\
        D_3(\Theta) =\ & \frac{\kappa_{1} \kappa_{2} \sigma_{1,2}^2}{p} \cdot \tr \left( \bSigma_{1} \Db_{1} \bSigma_{1} \Db_{2} \bSigma_{2} \right).
    \end{align*}
    Moreover, the optimal $R^2_{\rm sum, MA}(\Theta)$ is obtained when 
    \begin{align} \label{eqn:optimal_weight}
    \begin{split}
        \omega_1 =\ & \min \left\{1, \frac{D_2(\Theta) N_1(\Theta)-D_3(\Theta) N_2(\Theta)}{D_2(\Theta) N_1(\Theta) - D_3(\Theta) N_1(\Theta) +D_1(\Theta) N_2(\Theta) - D_3(\Theta) N_2(\Theta)} \right\} 
        \quad \mbox{and} \quad
        \\
        \omega_2 =\ & \max  \left\{0, 1 - \frac{D_2(\Theta) N_1(\Theta) -D_3(\Theta) N_2(\Theta)}{D_2(\Theta) N_1(\Theta) -D_3(\Theta) N_1(\Theta)+D_1(\Theta) N_2(\Theta)-D_3(\Theta) N_2(\Theta)} \right\}.
    \end{split}
    \end{align}
\end{corollary}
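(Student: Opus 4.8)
The plan is to obtain both assertions as direct specializations of Theorem~\ref{thm:multi} to $K=2$. First I would set $K=2$ and $\omega_2 = 1-\omega_1$ in the formulas for $\Lambda_{\rm sum}^{(1)}$ and $\Lambda_{\rm sum}^{(2)}$. In $\Lambda_{\rm sum}^{(1)}$ both summands carry the factor $n^{\train}/n_w$; pulling it out gives $\Lambda_{\rm sum}^{(1)} = (n^{\train}/n_w)[\omega_1 N_1(\Theta) + (1-\omega_1)N_2(\Theta)]$ with $N_1, N_2$ exactly as defined in the statement. In $\Lambda_{\rm sum}^{(2)}$ the only cross term is $(i,j)=(1,2)$, contributing $2\omega_1(1-\omega_1)(n^{\train}/n_w)^2 D_3(\Theta)$, while each diagonal term $j\in\{1,2\}$ contributes $\omega_j^2(n^{\train}/n_w)^2 D_j(\Theta)$ once I rewrite $n^{\train}/n_w^2 = (n^{\train}/n_w)^2/n^{\train}$ to absorb the mixed prefactors into $D_j$. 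Hence $\Lambda_{\rm sum}^{(2)} = (n^{\train}/n_w)^2[\omega_1^2 D_1 + (1-\omega_1)^2 D_2 + 2\omega_1(1-\omega_1)D_3]$, and the common factor $(n^{\train}/n_w)^2$ cancels in the ratio $(\Lambda_{\rm sum}^{(1)})^2/\Lambda_{\rm sum}^{(2)}$, yielding the claimed closed form for $R^2_{\rm sum, MA}(\Theta)$. This step is pure bookkeeping.

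For the optimal weights I would exploit that the ratio is invariant under the rescaling $(\omega_1,\omega_2)\mapsto(c\omega_1,c\omega_2)$, so the maximization reduces to a projective problem. Writing $\bm{N}=(N_1,N_2)^{\T}$ and $M=\left(\begin{smallmatrix}D_1 & D_3\\ D_3 & D_2\end{smallmatrix}\right)$, the quantity to maximize (dropping the constant $n^{\valid}/\|\yb^{\valid}\|_2^2$) is the generalized Rayleigh quotient $(\bm{N}^{\T}\bm\omega)^2/(\bm\omega^{\T}M\bm\omega)$. Provided $M\succ0$, Cauchy--Schwarz in the $M$-inner product gives maximal value $\bm{N}^{\T}M^{-1}\bm{N}$, attained at $\bm\omega^\star\propto M^{-1}\bm{N}$. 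Since $M^{-1}\bm{N}\propto(D_2 N_1 - D_3 N_2,\; D_1 N_2 - D_3 N_1)^{\T}$, renormalizing so that $\omega_1+\omega_2=1$ produces exactly the fraction appearing inside the $\min$ for $\omega_1$ and inside the $\max$ for $\omega_2$ in Equation~\eqref{eqn:optimal_weight}.

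It remains to handle the constraint set $\omega_1\in[0,1]$, $\omega_2=1-\omega_1\ge0$. Restricted to the line $\omega_1+\omega_2=1$, the objective is a squared affine numerator over a positive-definite quadratic, maximized at the single point $\bm\omega^\star$ and monotone on either side of it within the region where the affine numerator keeps a constant sign; hence when the normalized $\bm\omega^\star$ falls outside $[0,1]$ the constrained optimum is attained at the nearer endpoint, which is precisely the effect of the operations $\min\{1,\cdot\}$ and $\max\{0,\cdot\}$ (the lower bound $\omega_1\ge0$ holds automatically under the positive-correlation assumption $\sigma_{1,2}^2\ge0$, so only the upper clip is active). I expect the main obstacle to be the positivity facts underlying this argument: establishing $M\succ0$, equivalently $D_1,D_2>0$ and $D_1D_2-D_3^2>0$, where the last inequality is a Cauchy--Schwarz-type bound relating $D_3$ to $D_1,D_2$ through the traces of $\Db_1,\Db_2,\Eb_1,\Eb_2$ against $\bSigma_1,\bSigma_2$, together with the short monotonicity check that justifies the boundary clipping. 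Both rely on the positive-definiteness of $\bSigma_j$ and of the second-order deterministic equivalents $\Eb_j$ guaranteed by Conditions~\ref{cond-X-cross}--\ref{cond-eps-cross}.
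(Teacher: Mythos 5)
Your proposal is correct, and its first part (setting $K=2$, pulling out the common factors $n^{\train}/n_w$ and $(n^{\train}/n_w)^2$ from $\Lambda^{(1)}_{\rm sum}$ and $\Lambda^{(2)}_{\rm sum}$, and cancelling them in the ratio) is exactly the bookkeeping the paper performs. Your optimization step, however, takes a genuinely different route. The paper works by direct calculus: it computes $\partial R^2_{\rm sum,MA}/\partial\omega_1$, observes that its numerator factors as $-2\left[\omega_1 N_1+(1-\omega_1)N_2\right]$ times an affine function of $\omega_1$, solves for the unique interior critical point, and certifies it is a maximum by an explicit second-derivative evaluation whose sign rests on the inequality $D_2N_1^2+D_1N_2^2-2D_3N_1N_2>0$; clipping to $[0,1]$ then yields Equation~\eqref{eqn:optimal_weight}. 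You instead use degree-zero homogeneity to recast the problem as a generalized Rayleigh quotient $(\bm{N}^{\T}\bm\omega)^2/(\bm\omega^{\T}M\bm\omega)$ with $M=\bigl(\begin{smallmatrix}D_1&D_3\\ D_3&D_2\end{smallmatrix}\bigr)$ and apply Cauchy--Schwarz in the $M$-inner product, so the maximizer is $\propto M^{-1}\bm{N}$, which after the normalization $\omega_1+\omega_2=1$ is precisely the paper's fraction. What each buys: the paper's computation is elementary and produces the concavity certificate explicitly, but it is a brute-force calculation tied to $K=2$; your argument is structurally cleaner, explains \emph{why} the optimum is $M^{-1}\bm{N}$ up to scale, and extends verbatim to $K>2$. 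Note that the two arguments require essentially the same positivity input: your Cauchy--Schwarz step needs $M\succ 0$ (i.e.\ $D_1,D_2>0$ and $D_1D_2>D_3^2$), while the paper's sign check needs $(N_2,-N_1)\,M\,(N_2,-N_1)^{\T}>0$ together with $D_1D_2\neq D_3^2$, both consequences of $M\succ0$; the paper simply asserts its inequality as a ``fact'' without proof, so your flagging this as the remaining obstacle (to be settled via trace Cauchy--Schwarz and the positive semi-definiteness of the genetic covariance, which gives $\sigma_{1,2}^4\le\sigma_{1,1}^2\sigma_{2,2}^2$) is, if anything, more careful than the published argument. Your endpoint/clipping discussion likewise matches the paper's equally terse treatment of the constraint $\omega_1\in[0,1]$.
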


Corollary \ref{cor:uncor_weight} further examines the case where genetic effects in the second population are uncorrelated with those in the target population, i.e., $\sigma_{1,2}^2 = 0$. 
In this scenario, the optimal weights are $\omega_1 = 1$ and $\omega_2 = 0$, indicating that prediction should rely exclusively on data from the target population. 
These findings highlight that cross-ancestry genetic correlation \citep{brown2016transethnic,xue2023high} plays a crucial role in determining whether incorporating multi-ancestry data, either through resampling-based self-training or individual-level data training, can improve prediction accuracy for a single ancestry.

\begin{corollary}
\label{cor:uncor_weight}
Under the same conditions as in Theorem \ref{thm:multi}, and considering the special case $K = 2$ and $\sigma_{1,2}^2 = 0$, the closed-form expression for $R^2_{\rm sum, MA}(\Theta)$ in Theorem \ref{thm:multi} is 
    \begin{align*}
        R^2_{\rm sum, MA}(\Theta)
        = \frac{n^{\valid}}{\|\yb^{\valid}\|_{2}^2} \cdot \frac{ \omega_1^2 \cdot N_1^2(\Theta)}{\omega_1^2 \cdot D_1(\Theta) + (1 - \omega_1)^2 \cdot D_2(\Theta)}. 
    \end{align*}
    Moreover, the optimal $R^2_{\rm sum, MA}(\Theta)$ is obtained when 
    $\omega_1 =1$ and $\omega_2 = 0.$
\end{corollary}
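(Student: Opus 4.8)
The plan is to obtain Corollary~\ref{cor:uncor_weight} as a direct boundary specialization of Corollary~\ref{cor:optimal_weight} (equivalently, of Theorem~\ref{thm:multi} with $K=2$), since imposing $\sigma_{1,2}^2 = 0$ collapses every cross-population quantity. First I would inspect the two building blocks in Corollary~\ref{cor:optimal_weight} that couple the populations, namely
\begin{align*}
    N_2(\Theta) = \frac{\kappa_{1} \kappa_{2} \sigma_{1,2}^2}{p} \tr\left( \bSigma_{1} \Db_{2} \bSigma_{2} \right), \qquad D_3(\Theta) = \frac{\kappa_{1} \kappa_{2} \sigma_{1,2}^2}{p} \tr\left( \bSigma_{1} \Db_{1} \bSigma_{1} \Db_{2} \bSigma_{2} \right).
\end{align*}
Both carry an explicit factor $\sigma_{1,2}^2$, so under the hypothesis $\sigma_{1,2}^2 = 0$ they vanish identically, $N_2(\Theta) = D_3(\Theta) = 0$. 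Substituting this into the $K=2$ closed form removes the $N_2$ contribution from the numerator and the $D_3$ cross term from the denominator, leaving
\begin{align*}
    R^2_{\rm sum, MA}(\Theta) = \frac{n^{\valid}}{\|\yb^{\valid}\|_{2}^2} \cdot \frac{\omega_1^2 \, N_1^2(\Theta)}{\omega_1^2 \, D_1(\Theta) + (1-\omega_1)^2 \, D_2(\Theta)} + o_p(1),
\end{align*}
which is the asserted simplified expression. This step is purely algebraic once Corollary~\ref{cor:optimal_weight} is invoked, so no fresh probabilistic argument is needed.

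Next, for the optimal weights I would treat the right-hand side as a scalar function of $\omega_1 \in [0,1]$ (recalling $\omega_2 = 1-\omega_1$). Since $N_1^2(\Theta)$ and the leading factor are constants in $\omega_1$, maximizing $R^2_{\rm sum, MA}$ is equivalent to maximizing $g(\omega_1) = \omega_1^2 / [\omega_1^2 D_1(\Theta) + (1-\omega_1)^2 D_2(\Theta)]$. A one-line differentiation gives
\begin{align*}
    g'(\omega_1) = \frac{2\,\omega_1 (1-\omega_1)\, D_2(\Theta)}{\left[\omega_1^2 D_1(\Theta) + (1-\omega_1)^2 D_2(\Theta)\right]^2} \geq 0 \quad \text{on } [0,1],
\end{align*}
so $g$ is nondecreasing and attains its maximum at the right endpoint $\omega_1 = 1$, whence $\omega_2 = 0$. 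Equivalently, plugging $N_2(\Theta) = D_3(\Theta) = 0$ into the optimal-weight formula~\eqref{eqn:optimal_weight} of Corollary~\ref{cor:optimal_weight} yields $\omega_1 = \min\{1,\, D_2(\Theta) N_1(\Theta) / [D_2(\Theta) N_1(\Theta)]\} = 1$ at once, with $\omega_2 = \max\{0, 0\} = 0$.

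There is no genuine analytic obstacle here: the statement is a degenerate-case reduction of the two-population theorem, and the argument reduces to substitution plus elementary one-variable calculus. The only points warranting care are confirming the strict positivity of $D_1(\Theta)$ and $D_2(\Theta)$ — which follows from Condition~\ref{cond-X-cross} ensuring $\bSigma_j$ is positive definite, together with $\kappa_j \sigma_{j,j}^2 > 0$ and $h_j^2 > 0$ — so that the objective is well defined and the monotonicity conclusion is valid; and noting the trivial degenerate case $N_1(\Theta) = 0$, in which $R^2_{\rm sum, MA} \equiv 0$ and every weight is vacuously optimal, so the stated choice $\omega_1 = 1$ remains correct. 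The $R^2_{\rm sum, MA}(\Theta)/R^2_{\rm ind, MA}(\Theta) \overset{p}{\to} 1$ no-cost equivalence is inherited verbatim from Theorem~\ref{thm:multi} and requires no additional work.
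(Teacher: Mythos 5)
Your proposal is correct and matches the paper's (implicit) argument: the paper derives Corollary~\ref{cor:uncor_weight} exactly as you do, by setting $\sigma_{1,2}^2 = 0$ so that $N_2(\Theta)$ and $D_3(\Theta)$ vanish in the $K=2$ closed form of Corollary~\ref{cor:optimal_weight}, after which the optimal-weight formula~\eqref{eqn:optimal_weight} collapses to $\omega_1 = 1$, $\omega_2 = 0$. Your supplementary monotonicity computation of $g'(\omega_1)$ and the remarks on positivity of $D_1(\Theta), D_2(\Theta)$ are sound but not needed beyond what the paper's reduction already provides.
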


\section{Numerical experiments}
\label{sec:numer}
We numerically validate our theoretical findings through extensive synthetic data simulations and real data analyses using the UK Biobank \citep{bycroft2018uk}.
The primary goal of the synthetic data analysis is to demonstrate that our theoretical results hold across general settings, while the real data analysis illustrates their specific applicability in real-world genetic predictions. Overall, these complementary analyses strongly support our theoretical findings.
More importantly, we provide additional insights into genetic data prediction applications. For example, we show that nonlinear estimators may also work well with resampling-based self-training. In addition, we find that resampling-based self-training can even outperform conventional individual-level data training when the tuning dataset has a limited sample size.

\subsection{Simulation study}
In this section, we conduct simulations using synthetic data to numerically illustrate the theoretical results on the performance of resampling-based self-training presented in Section~\ref{sec:general}. Our experiments systematically evaluate various settings of heritability $h^2$, the $p/n$ ratio, sparsity $\kappa$, and sample size $n$. 
The results demonstrate that resampling-based self-training methods (Algorithm \ref{alg:sum}) achieve performance comparable to conventional individual-level data training methods (Algorithm \ref{alg:ind} in the supplementary material).

We generate synthetic data as follows. To replicate the local LD pattern observed in human genetic data, we use a block-wise covariance matrix
$ \bSigma = \diag\{\bSigma_1, \bSigma_2, \dots, \bSigma_{n_{\rm block}}\},$
where each block $\bSigma_i$ follows an autoregressive AR(1) structure with correlation coefficient $\rho \in (0,1)$:
\begin{align} \label{eqn:AR1}
    \bSigma_{i} = \begin{pmatrix}
        1 & \rho & \cdots & \rho^{p/n_{\rm block}} \\
        \vdots & \ddots & \ddots & \vdots \\
        \rho^{p/n_{\rm block}} & \rho^{p/n_{\rm block} - 1} & \cdots & 1
    \end{pmatrix}.
\end{align}
The values of $n_{\rm block}$ and $\rho$ vary across different simulation settings.
Each row of the data matrix $\Xb$ and the reference panel matrix $\Wb$ is independently sampled from a multivariate normal distribution with covariance matrix $\bSigma$. We use reference panels sample size $n_w = 1000$. The phenotype vector $\yb$ is generated according to the linear model \eqref{eqn:linear_model}, with Gaussian noise variance $\sigma_{\epsilon}^2$ determined by the heritability.
We evaluate the predictive performance of the following methods: (i) resampling-based ridge estimator in Section~\ref{sec:ref_ridge} and (ii) resampling-based marginal thresholding in Section~\ref{sec:marginal}. 

The left panels of Figures \ref{fig:ridge_R2} and \ref{fig:marginal_R2} present the pattern of out-of-sample $R^2$ across various hyperparameter values for both resampling-based self-training and individual-level training methods. 
We repeat Algorithms \ref{alg:sum} and \ref{alg:ind} over $100$ iterations and report the averaged $R^2_{\rm sum, R}(\theta)$, $R^2_{\rm ind, R}(\theta)$, $R^2_{\rm ind, M}(\theta)$, and $R^2_{\rm ind, M}(\theta)$. 
We have the following key observations. First, both the resampling-based ridge-type estimator and marginal thresholding have a unique maximum in $R^2_{\rm sum, R}(\theta)$, $R^2_{\rm ind, R}(\theta)$, $R^2_{\rm ind, M}(\theta)$, and $R^2_{\rm ind, M}(\theta)$. 
This underscores the importance of selecting the optimal hyperparameter to maximize predictive performance. 
Second, we consistently observe that $R^2_{\rm sum, R}(\theta)$ and $R^2_{\rm sum, M}(\theta)$ closely aligns with $R^2_{\rm ind, R}(\theta)$ and $R^2_{\rm ind, M}(\theta)$ across different hyperparameter values, respectively. 
Importantly, the besting-performing tuning parameters, $\theta_{\rm sum, R}^{*}$ and $\theta_{\rm ind, R}^{*}$, which maximize $R^2_{\rm sum, R}(\theta)$ and $R^2_{\rm ind, R}(\theta)$, respectively, are well-aligned.
{A similar phenomenon is observed between $\theta_{\rm sum, M}^{*}$ and $\theta_{\rm ind, M}^{*}$.}
In addition, the right panels of Figures \ref{fig:ridge_R2} and \ref{fig:marginal_R2} compare prediction accuracy under the selected best-performing tuning parameters, specifically:
(i) $\hat{\bbeta}_{\rm R}(\theta^*_{\rm sum, R})^*$ versus $\hat{\bbeta}_{\rm R}(\theta^*_{\rm ind, R})$ and (ii) $\hat{\bbeta}_{\rm M}(\Theta^*_{\rm sum, M})^*$ versus $\hat{\bbeta}_{\rm M}(\Theta^*_{\rm ind, M})$. 
We evaluate a wide range of parameter settings and compute the out-of-sample $R^2$ in an independent dataset drawn from the same distribution as 
$(\Xb, \yb)$, with a sample size of $3000$. These results indicate that the resampling-based self-training procedure achieves predictive accuracy comparable to Algorithm \ref{alg:ind} across a broad range of conditions.

Overall, we find that resampling-based self-training achieves comparable performance in selecting the best-performing hyperparameter and, consequently, similar prediction accuracy to individual-level training, without requiring access to individual-level data. These empirical findings strongly support our theoretical results in Theorems \ref{thm:reference_panel} and \ref{thm:marginal_screen}.

\subsection{Real imaging data analysis}
In this section, we conduct a real data analysis using the whole-body dual-energy X-ray absorptiometry (DXA) imaging data from the UK Biobank study \citep{bycroft2018uk}. 
Specifically, we focus on $71$ imaging-derived body composition traits categorized under data category 124. These DXA traits include bone mass, fat-free mass, tissue mass, and lean mass from different body regions, such as arms, legs, and trunk, with a full list can be found in \url{https://biobank.ndph.ox.ac.uk/ukb/label.cgi?id=124}.
We use DXA imaging data from $45,622$ unrelated White British individuals and $2618$ unrelated White non-British individuals in the UK Biobank, all of whom have available genotype data.
For training, we use GWAS summary statistics derived from $45,622$ British individuals, following standard imaging and genetic data quality controls similar to those in \cite{su2024exact}. We adjust for the effects of age (at imaging), sex, their interactions, and the top $40$ genetic principal components.
Each DXA trait generates summary statistics with the number of SNPs ranging from $8,839,000$ and $8,840,000$. 
The mean SNP heritability is estimated to be $18.94\%$ by LDSC \citep{bulik2015ld} (Table~\ref{tab:h2_R2}). 
We further take a subset of approximately one million HapMap 3 genetic variants to use in our analysis  \citep{jin2025pennprs}.
For example, the summary statistics for trait 21110 (android fat-free mass) initially include $8,839,431$ SNPs. After mapping to HapMap 3 genetic variants, $1,080,259$ SNPs remain.
The $2,618$ White non-British individuals are randomly split into independent validation and testing datasets, with dataset sizes varying across different analyses, as specified in the following discussion. Additionally, we use an external reference panel from the 1000 Genomes \citep{10002015global}, consisting of individuals of European ancestry.

\begin{figure}[t]
    \centering
    \begin{subfigure}[b]{0.32\textwidth}
        \includegraphics[width=\textwidth]{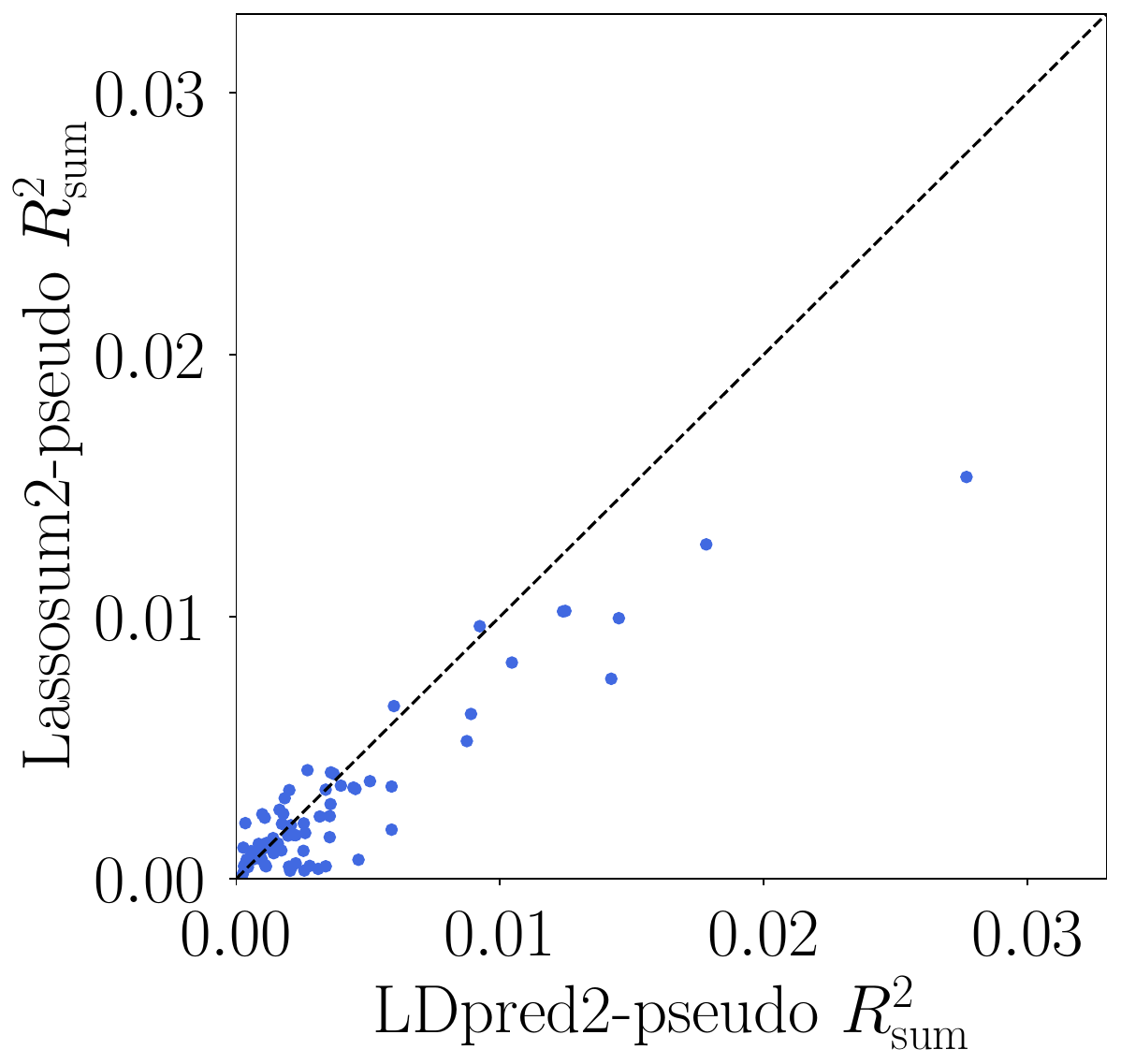}
    \end{subfigure}
    \hfill
    \begin{subfigure}[b]{0.32\textwidth}
        \includegraphics[width=\textwidth]{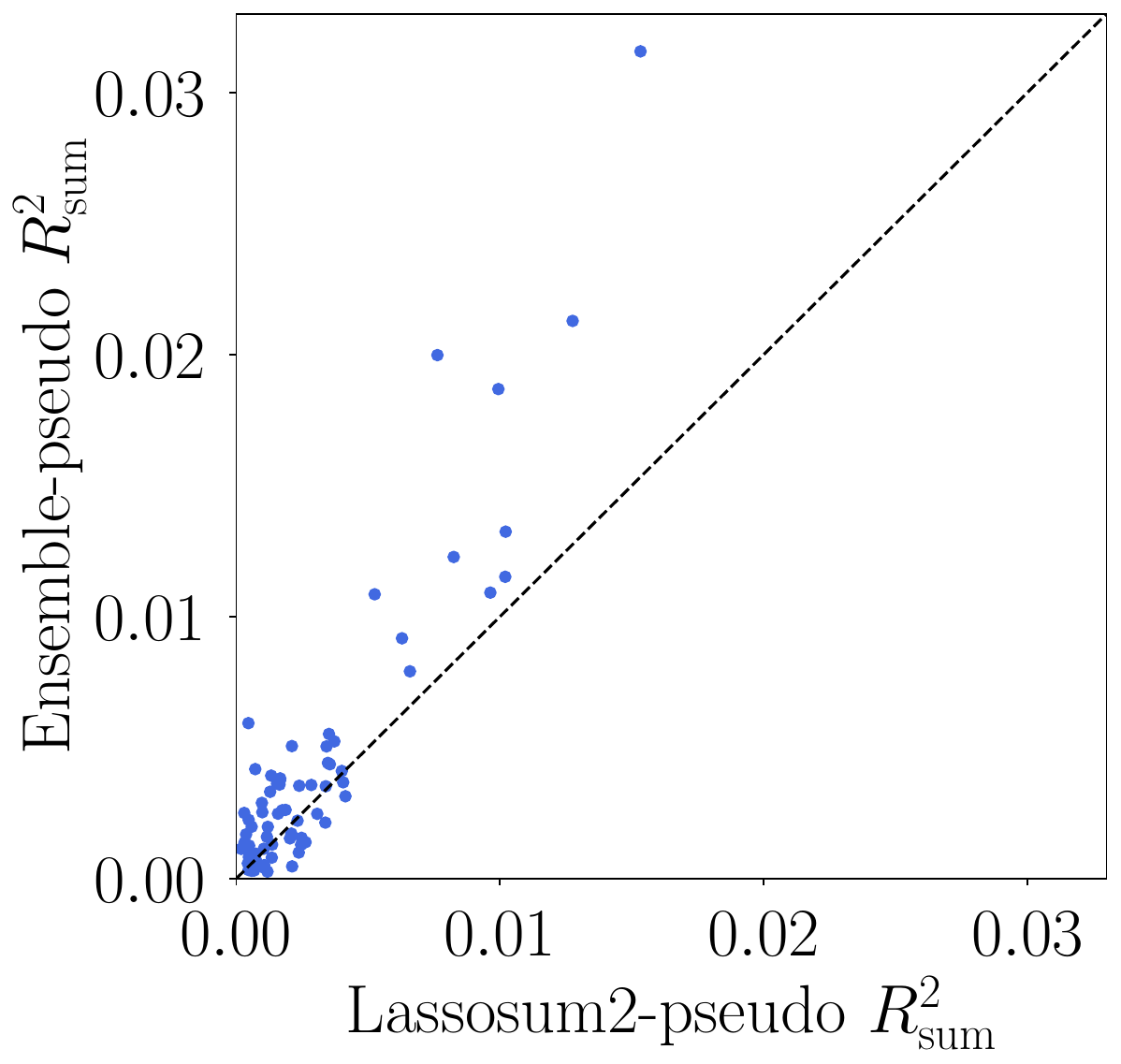}
    \end{subfigure}
    \hfill
    \begin{subfigure}[b]{0.32\textwidth}
        \includegraphics[width=\textwidth]{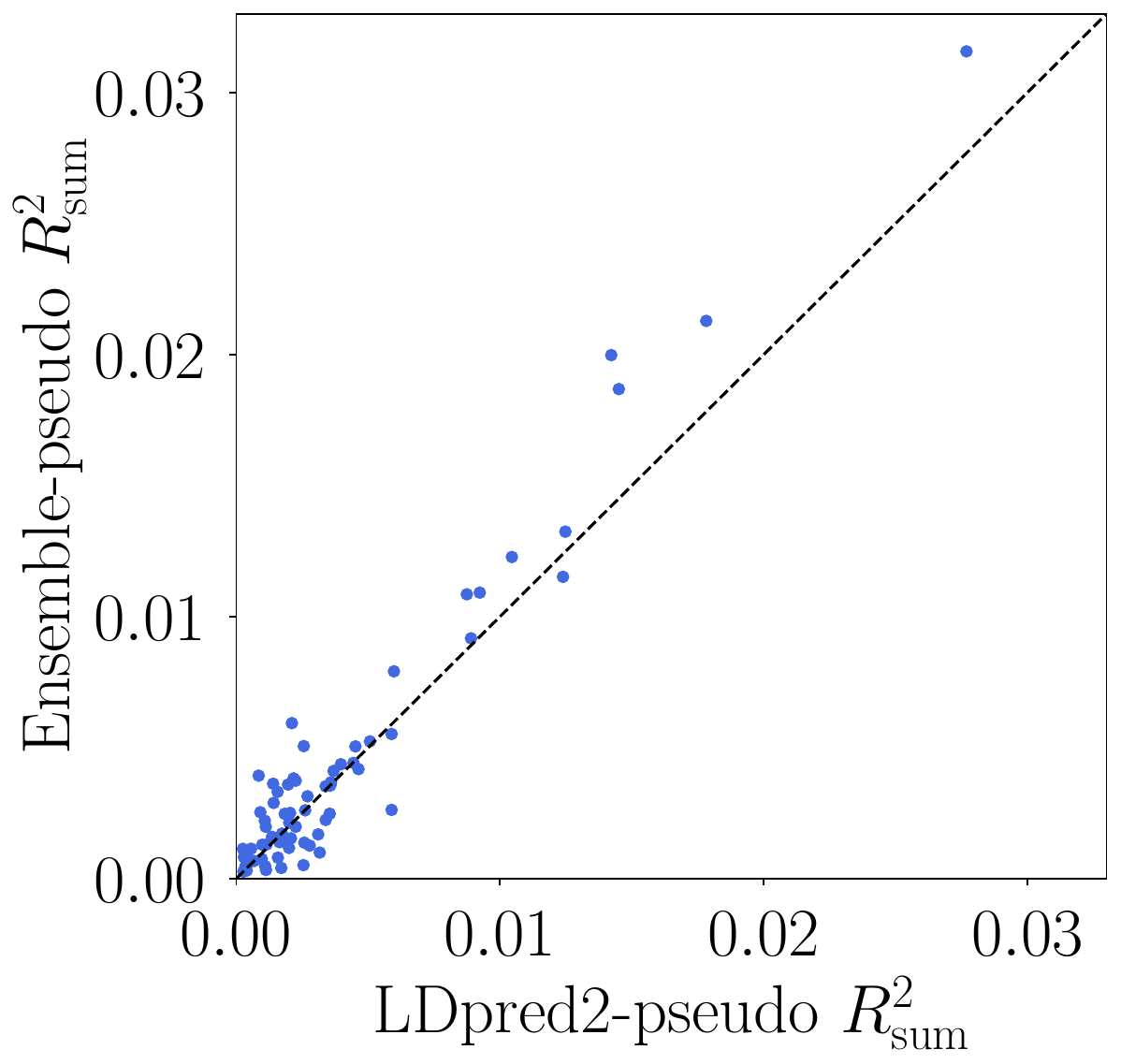}
    \end{subfigure}
    \caption{\textbf{Comparison of out-of-sample $R^2$ across resampling-based self-training  methods using DXA imaging data.} 
    Each scatter plot compares the out-of-sample $R^2$ of $71$ DXA traits obtained using Algorithm \ref{alg:sum} for a pair of resampling-based self-training methods: 
    (\textbf{Left}) LDpred2-pseudo vs. Lassosum2-pseudo, (\textbf{Middle}) Lassosum2-pseudo vs. Ensemble-pseudo, and (\textbf{Right}) LDpred2-pseudo vs. Ensemble-pseudo. 
    Data points above the diagonal suggest superior performance of the method on the $y$-axis, while points below the diagonal indicate superior performance of the method on the $x$-axis. 
    Results show that LDpred2-pseudo generally outperforms Lassosum2-pseudo, whereas they have comparable prediction accuracy for lower-heritability traits. 
    Ensemble learning, which combines multiple methods, generally outperforms individual methods, especially for highly heritable traits.}
  \label{fig:test_R_method_comparison}
\end{figure}

We apply resampling-based self-training (Algorithm \ref{alg:sum}) to two widely used PRS methods originally designed for requiring individual-level validation data: Lassosum \citep{mak2017polygenic, prive2022identifying} and LDpred2 \citep{vilhjalmsson2015modeling, prive2020ldpred2}. We refer to their self-training versions as Lassosum2-pseudo and LDpred2-pseudo, respectively.
Additionally, we integrate an ensemble approach by using Algorithm \ref{alg:ensemble_sum}, denoted as Ensemble-pseudo, which combines PRS models trained using different methods via a linear combination strategy \citep{jin2025pennprs}. 
The hyperparameter for these resampling-based methods, implemented through Algorithms \ref{alg:sum} and \ref{alg:ensemble_sum}, is selected using the resampling-based pseudo-validation dataset $\sbb^{(v)}$.
We construct 
$\sbb^{(tr)}$ and $\sbb^{(v)}$
 with a training-to-validation ratio of $n^{(tr)}:n^{(v)} = 8:2$ for parameter pseudo-tuning. 
For the conventional individual-level training (Algorithm \ref{alg:ind}), we randomly select $1000$ subjects from the White non-British sample as the validation dataset.
The prediction accuracy of all resampling-based and individual-level training methods is evaluated on the remaining testing subset of White non-British individuals who are not used for individual-level model training, with a sample size of $1618$.

Figure \ref{fig:test_R_method_comparison} presents the out-of-sample $R^2$ of resampling-based self-training methods across $71$ DXA traits. Both LDpred2-pseudo and Lassosum2-pseudo yield reliable prediction accuracy measures and exhibit a consistent pattern across the two methods. Among these traits, LDpred2-pseudo generally achieves better prediction accuracy than Lassosum2-pseudo, particularly for highly heritable traits with bigger out-of-sample $R^2$.  As expected, Ensemble-pseudo outperforms both methods, improving prediction accuracy, especially over Lassosum2-pseudo. For example, in trait 23244 (android bone mass), the most heritable trait, Ensemble-pseudo achieves a prediction accuracy of $3.2\%$, compared to $2.7\%$ for LDpred2-pseudo and $1.5\%$ for Lassosum2-pseudo. 
A complete summary of prediction accuracy measures across all DXA traits can be found in Table \ref{tab:h2_R2}.

\begin{figure}[t] 
    \centering
    \begin{subfigure}[b]{0.32\textwidth}
        \includegraphics[width=\textwidth]{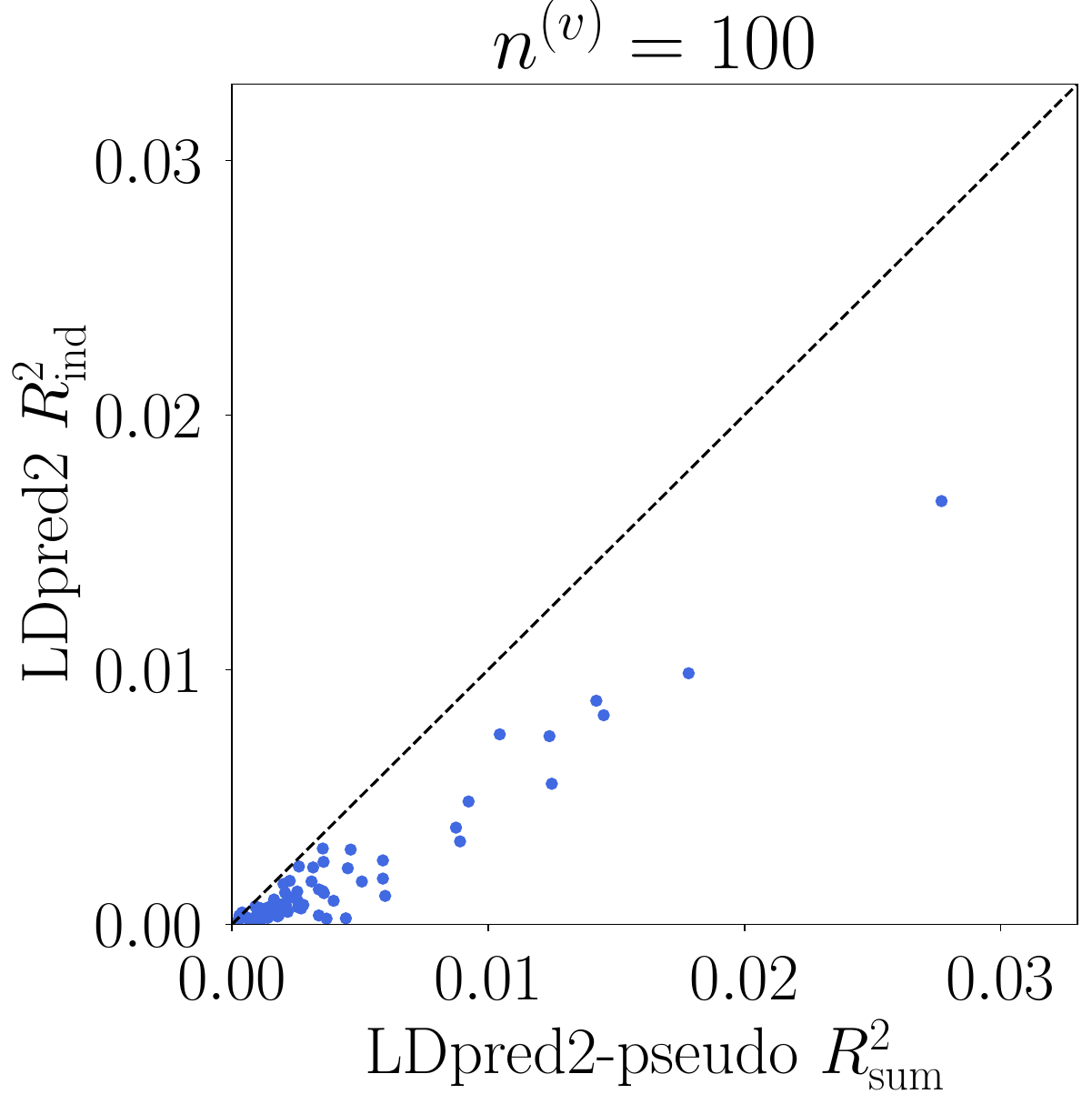}
    \end{subfigure}
    \hfill
    \begin{subfigure}[b]{0.32\textwidth}
        \includegraphics[width=\textwidth]{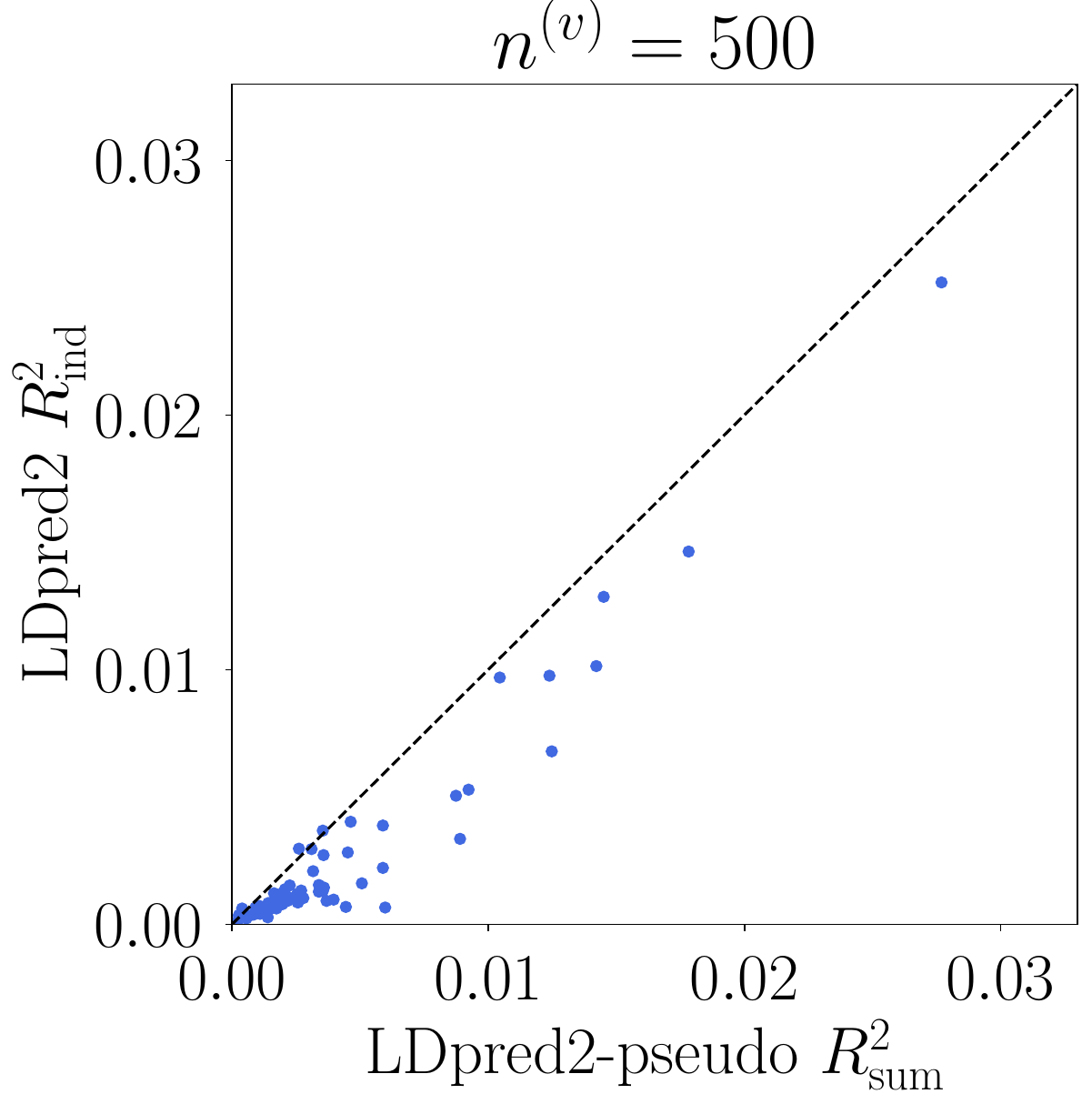}
    \end{subfigure}
    \hfill
    \begin{subfigure}[b]{0.32\textwidth}
        \includegraphics[width=\textwidth]{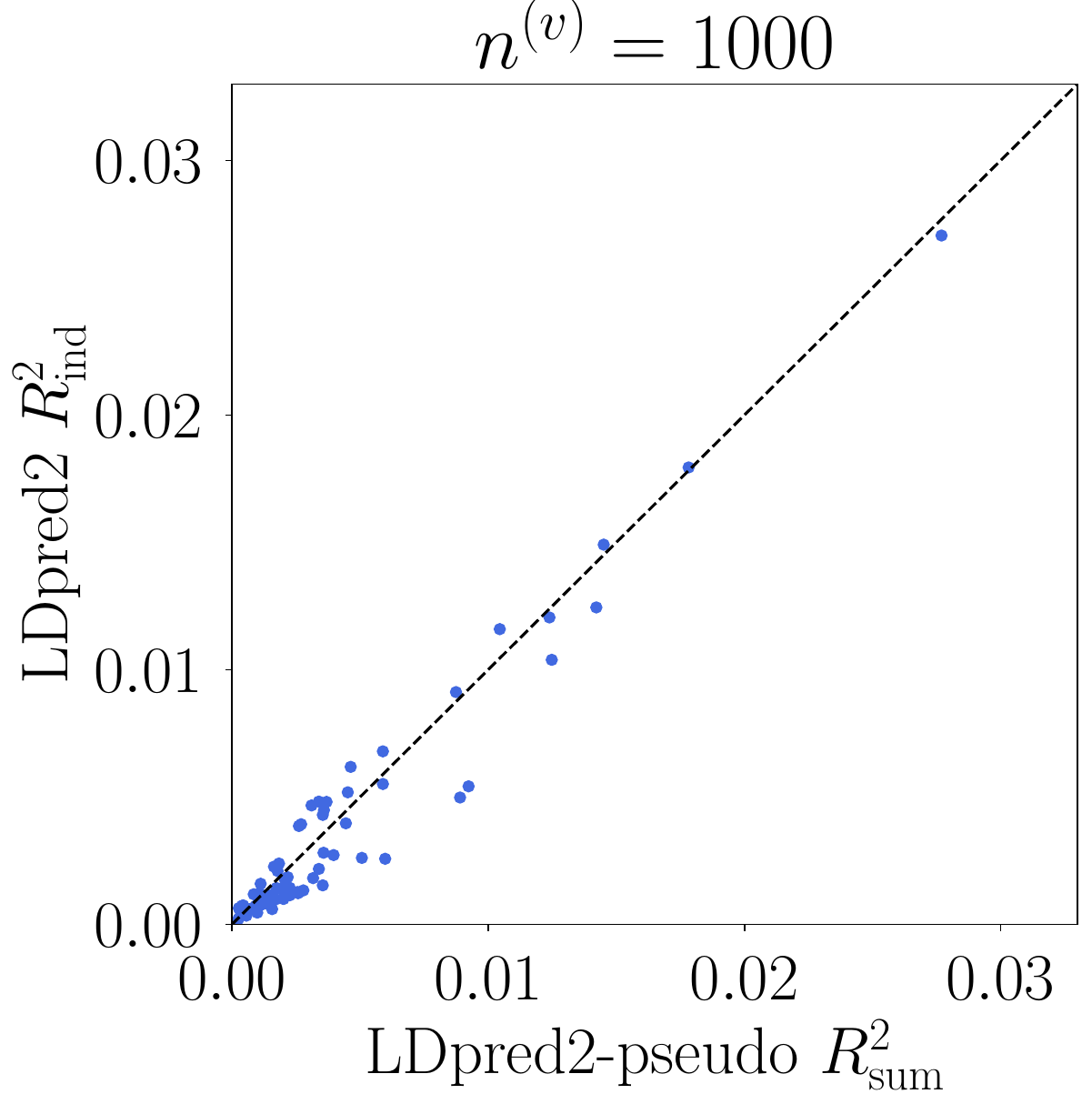}
    \end{subfigure}
    \caption{\textbf{Comparison of out-of-sample $R^2$ between resampling-based self-training and individual-level data training.} 
    Each scatter plot compares the out-of-sample $R^2$ of $71$ DXA traits obtained using Algorithm \ref{alg:sum} and Algorithm \ref{alg:ind}. 
    To assess the impact of validation sample size on prediction accuracy, we evaluate different sample sizes for the individual-level validation dataset in Algorithm \ref{alg:ind}:  
    (\textbf{Left}) $n^{(v)} = 100$, 
    (\textbf{Middle}) $n^{(v)} = 500$, and 
    (\textbf{Right}) $n^{(v)} = 1000$. 
    For Algorithm \ref{alg:sum}, the sample size of the pseudo-validation dataset is fixed to be $20\%$ of the GWAS sample size. 
    Results show that Algorithm \ref{alg:sum}, using only summary data, achieves prediction accuracy comparable to Algorithm \ref{alg:ind} when $n^{(v)} = 1000$. Moreover, Algorithm \ref{alg:sum} may outperform Algorithm \ref{alg:ind} when the individual-level validation dataset has a limited sample size ($n^{(v)} = 100$ or $500$).}
  \label{fig:test_R_alg_comparison}
\end{figure}

Next, we compare resampling-based self-training with individual-level data training.
Since LDpred2-pseudo outperforms Lassosum2-pseudo overall in Figure \ref{fig:test_R_method_comparison}, we focus this analysis on comparing LDpred2-pseudo with LDpred2. 
Figure \ref{fig:test_R_alg_comparison} shows that when using $1000$ subjects as the validation dataset for LDpred2, LDpred2-pseudo and LDpred2 achieve highly similar performance (Table \ref{tab:h2_R2}). 
These results suggest that LDpred2-pseudo performs comparably to LDpred2, which tunes hyperparameters using a large individual-level dataset.
However, as the individual-level validation sample size decreases, LDpred2-pseudo clearly outperforms LDpred2. This is because LDpred2's performance declines with a smaller validation sample, as it requires a sufficiently large dataset for reliable training. In contrast, LDpred2-pseudo does not face this limitation, as it relies on resampling-based pseudo-training/validation datasets.
For example, in trait 23244 (android bone mass), LDpred2 achieves a prediction accuracy of $2.7\%$, $2.5\%$, and $1.7\%$ when using $1000$, $500$, and $100$ validation samples, respectively, while LDpred2-pseudo has a prediction accuracy of $2.8\%$.

Overall, our real data analysis provides valuable insights into the relative prediction accuracy of different estimators and algorithms for DXA imaging data.
Among $71$ DXA traits, LDpred2-pseudo outperforms Lassosum2-pseudo, while ensemble-pseudo further enhances performance using only summary data.
Additionally, we find that resampling-based self-training methods can outperform individual-level training when the validation sample size is small, highlighting their advantage in data-limited scenarios.

\section{Discussion}
\label{sec:disc}
In this paper, we develop a statistical framework to establish the properties of self-training approaches using summary data. Notably, we demonstrate that pseudo-training and validation datasets based solely on summary statistics can achieve the same asymptotic predictive accuracy as traditional methods using individual-level training and validation data. We show that these results hold in high-dimensional settings but follow a different rationale than in low-dimensional cases.
In low dimensions, the no-cost property of summary data-based training is attributed to the multivariate CLT, which ensures that pseudo-training/validation datasets share the same limiting distribution as individual-level data. 
In high dimensions, however, standard high-dimensional CLT results do not apply \citep{chernozhukov2017central, fang2021high}. 
To address this, we leverage random matrix theory to quantify asymptotic prediction accuracy and show that the no-cost property still holds, even without high-dimensional CLT. Notably, the sampling distribution of $\sbb^{\train}$ does not need to be the limiting distribution of $\Xb^{\train}{}^{\T} \yb^{\train}$. 

Our proofs reveal that the key lies in the asymptotic matching of traces between individual-level and summary-level data, requiring only matched moments of the functionals rather than matching distributions.
We also demonstrate the surprising finding that, despite the lack of independence between resampling-based training and validation datasets, this does not lead to overfitting or reduced out-of-sample performance.
Inspired by recent trends in genetic data analysis, we extend our work to include algorithms and theoretical analyses for ensemble learning \citep{pain2021evaluation, yang2022PGS} and multi-ancestry data analysis \citep{kachuri2024principles,zhang2023new,zhang2024ensemble,jin2024mussel}. In summary, these results offer deeper insights into the theoretical underpinnings of self-training with summary data and may support the broader application of self-training algorithms in fields that utilize shared summary data. 

As the first statistical framework to examine the properties of resampling-based self-training, 
our study has a few limitations. First, we focus on linear estimators and provide two concrete examples commonly used in genetic and dense-signal predictions \citep{choi2020tutorial,ge2019polygenic}. 
While this covers a broad class of estimators, our analysis does not include nonlinear estimators, such as Lasso and Elastic-net, which are also frequently used in similar prediction tasks \citep{mak2017polygenic,qian2020fast,wu2023large,wang2024integrating}. 
This choice is due to our approach of using random matrix theory to derive exact analytical forms for prediction accuracy, which is challenging to apply to nonlinear estimators as they typically lack closed-form solutions \citep{su2024exact}. 
However, our real data analyses 
indicate that the self-training framework performs well with more complicated nonlinear estimators, such as Lassosum \citep{mak2017polygenic, prive2022identifying} and LDpred2 \citep{vilhjalmsson2015modeling, prive2020ldpred2}, which are specifically designed for genetic prediction. 
Thus, our theoretical insights on linear estimators may also extend to many nonlinear and more complicated estimators, which could be better explored in future studies. 
Second, although summary data-based training has the no-cost property regarding asymptotic prediction accuracy, 
it may exhibit greater variance compared to individual-level data training. 
Uncertainty analysis within the framework of random matrix theory is still in its early stages \citep{fu2024uncertainty}.  
It would be interesting to quantify the uncertainty of summary data-based training and to develop improved resampling strategies to reduce this variability. Such advancements could lead to more efficient self-training of prediction models using summary data.


\section*{Acknowledgement}
We would like to thank Xiaochen Yang and Juan Shu for their helpful discussions and for preparing the data resources. 
Research reported in this publication was supported by National Institute of Mental Health under Award Number R01MH136055 and National Institute on Aging under Award Number RF1AG082938. The content is solely the responsibility of the authors and does not necessarily represent the official views of the National Institutes of Health. The study has also been partially supported by funding from the Department of Statistics and Data Science at the University of Pennsylvania, Wharton Dean’s Research Fund, Analytics at Wharton, Wharton AI \& Analytics Initiative, Perelman School of Medicine CCEB Innovation Center Grant, and the University Research Foundation at the University of Pennsylvania. 
This research has been conducted using the UK Biobank resource (application number 76139), subject to a data transfer agreement. We thank the individuals represented in the UK Biobank for their participation and the research teams for their work in collecting, processing and disseminating these datasets for analysis. We would like to thank Purdue University and the Rosen Center for Advanced Computing
for providing computational resources and support that have contributed to these research results.

\bibliographystyle{apalike}
\bibliography{references}

\newpage
\appendix 
\renewcommand{\thetable}{S.\arabic{table}}
\renewcommand{\thealgorithm}{S.\arabic{algorithm}}
\setcounter{algorithm}{0}
\renewcommand{\thefigure}{S.\arabic{figure}}
\setcounter{figure}{0}
\renewcommand{\thesection}{S.\arabic{section}}
\renewcommand{\thelemma}{S.\arabic{lemma}}
\renewcommand{\theproposition}{S.\arabic{proposition}}
\renewcommand{\thecondition}{S.\arabic{condition}}

\vspace{30pt}
\noindent{\bf \LARGE Supplementary material}

\section{Individual-level data-based algorithms}\label{sec:algo}
We provide pseudo-code for individual-level data-based model training in Algorithm \ref{alg:ind}. 
Additionally, we outline the ensemble learning approach in Algorithm \ref{alg:ensemble_ind} and the model training with multi-ancestry data resources in Algorithm \ref{alg:multi_ind}. 

\begin{algorithm}
\caption{Individual-level data-based model training}\label{alg:ind}
\begin{algorithmic}
\Require Individual data $\Xb \in \RR^{n \times p}, \yb \in \RR^{n}$, $\Wb^{\T} \Wb$,  and hyperparameter $\theta$.
\vspace{1mm}

\State $\Qb \gets \diag \{q_1, q_2, \cdots q_n\}, \quad q_{i} \overset{i.i.d.}{\sim} \textnormal{Bernoulli}(n^{\train}/n),$  \hfill 		\texttt{//} Sample $n$ Bernoulli random variable.
\vspace{1mm}

\State $\Xb^{\train} \gets \Qb \Xb,$  \hfill 		\texttt{//} Use $n^{\train}$ rows of $\Xb$ for training.
\vspace{1mm}

\State $\Xb^{\valid} \gets (\Ib_n - \Qb) \Xb,$  \hfill 		\texttt{//} Use the remaining $n^{\valid} = n - n^{\train}$ rows of $\Xb$ for validation.
\vspace{1mm}

\State $\hat{\bbeta}_{\rm G}(\theta) \gets \Ab(\Wb^{\T} \Wb, \theta) \Xb^{\train}{}^{\T} \yb^{\train},$  \hfill 		\texttt{//} Obtain the estimator.
\vspace{1mm}

\State $R_{\rm ind, G}^2(\theta) \gets ({n^{\valid}} \left/{\|\yb^{\valid}\|_{2}^2})\right. \cdot {\left\langle {\Xb^{\valid}{}^{\T}\yb^{\valid}}, \hat{\bbeta}_{\rm G}(\theta) \right\rangle^2}\left/({n^{\valid} \cdot \| \hat{\bbeta}_{\rm G}(\theta) \|_{\bSigma}^2}), \right.$\vspace{1mm}

\Statex \hfill \texttt{//} Compute the $R_{\rm ind, G}^2(\theta)$ in \eqref{eqn:R2_ind_def}.
\vspace{1mm}

\State $\theta^{*}_{\rm ind, G} \gets \max_{\theta} R^2_{\rm ind, G}(\theta),$  \hfill 		\texttt{//} Choose the optimal hyperparameter.
\vspace{1.5mm}

\Return $R_{\rm ind, G}^2(\theta)$ and $\theta^{*}_{\rm ind, G}$.

\end{algorithmic}
\end{algorithm}

\begin{algorithm}
\caption{Individual-level data-based ensemble learning}\label{alg:ensemble_ind}
\begin{algorithmic}
\Require Individual data $\Xb \in \RR^{n \times p}, \yb \in \RR^{n}$, $\Wb^{\T} \Wb$, and  the hyperparameter $\Theta = \{\omega_{j}, \Theta_{j}\}_{j=1}^{k}$. 
\vspace{1mm}

\State $\Qb \gets \diag \{q_1, q_2, \cdots q_n\}, \quad q_{i} \overset{i.i.d.}{\sim} \textnormal{Bernoulli}(n^{\train}/n),$  \hfill 		\texttt{//} Sample $n$ Bernoulli random variable.
\vspace{1mm}

\State $\Xb^{\train} \gets \Qb \Xb,$  \hfill 		\texttt{//} Use $n^{\train}$ rows of $\Xb$ for training.
\vspace{1mm}

\State $\Xb^{\valid} \gets (\Ib_n - \Qb) \Xb,$  \hfill 		\texttt{//} Use the remaining $n^{\valid} = n - n^{\train}$ rows of $\Xb$ for validation.
\vspace{1mm}

\State $\hat{\bbeta}_{\rm E}(\Theta) \gets \sum_{j=1}^{k} \omega_{j} \Ab_{j}(\Wb^{\T} \Wb, \Theta_{j}) \Xb^{\train}{}^{\T} \yb^{\train},$  \hfill 		\texttt{//} Obtain the estimator.
\vspace{1mm}

\State $R_{\rm ind, E}^2(\Theta) \gets ({n^{\valid}} \left/{\|\yb^{\valid}\|_{2}^2})\right. {\left\langle {\Xb^{\valid}{}^{\T}\yb^{\valid}}, \hat{\bbeta}_{\rm E}(\Theta) \right\rangle^2}\left/({n^{\valid} \cdot \| \hat{\bbeta}_{\rm E}(\Theta) \|_{\bSigma}^2}), \right.$  \vspace{1mm}

\Statex \hfill \texttt{//} Compute the $R_{\rm ind, E}^2(\theta)$ in \eqref{eqn:R2_ind_def}.
\vspace{1mm}

\State $\Theta^{*}_{\rm ind, E} \gets \max_{\Theta} R_{\rm ind, E}^2(\Theta),$  \hfill 		\texttt{//} Choose the optimal hyperparameter.
\vspace{1.5mm}

\Return $R_{\rm sum, E}^2(\Theta)$ and $\Theta^{*}_{\rm ind, E}$.

\end{algorithmic}
\end{algorithm}

\begin{algorithm}
\caption{Individual-level data-based model training with multi-ancestry data resources}\label{alg:multi_ind}
\begin{algorithmic}
\Require Individual data $\Xb_{j}\in \RR^{n \times p}, \yb_{j} \in \RR^{n}$, $\Wb_{j}^{\T} \Wb_{j}$ for $1\leq j \leq K$, and  the hyperparameter $\Theta = \{\omega_{j}, \Theta_{j}\}_{j=1}^{k}$. 
\vspace{1mm}

\For{$j \gets 1$ to $K$} 

\State $\Qb_{j} \gets \diag \{q_1, q_2, \cdots q_n\}, \quad q_{i} \overset{i.i.d.}{\sim} \textnormal{Bernoulli}(n^{\train}/n),$
 \vspace{1mm}

\Statex \hfill \texttt{//} Sample $n$ Bernoulli random variable.
\vspace{1mm}

\State $\Xb_{j}^{\train} \gets \Qb_{j} \Xb_{j},$  \hfill 		\texttt{//} Use $n^{\train}$ rows of $\Xb$ for training.
\vspace{1mm}
\EndFor

\vspace{1mm}

\State $\Xb_{1}^{\valid} \gets (\Ib_n - \Qb_{1}) \Xb_{1},$
 \vspace{1mm}

\Statex \hfill \texttt{//} Use the remaining $n^{\valid} = n - n^{\train}$ rows of $\Xb$ for validation in the population 1.
\vspace{1mm}

\State $\hat{\bbeta}_{\rm MA}(\Theta) \gets \sum_{j=1}^{k} \omega_{j} \Ab_{j}(\Wb_{j}^{\T} \Wb_{j}, \Theta_{j}) \Xb_{j}^{\train}{}^{\T} \yb_{j}^{\train},$  \hfill 		\texttt{//} Obtain the estimator.
\vspace{1mm}

\State $R_{\rm ind, MA}^2(\Theta) \gets ({n^{\valid}} \left/{\|\yb^{\valid}\|_{2}^2})\right. {\left\langle {\Xb_{1}^{\valid}{}^{\T}\yb_{1}^{\valid}}, \hat{\bbeta}_{\rm MA}(\Theta) \right\rangle^2}\left/({n^{\valid} \cdot \| \hat{\bbeta}_{\rm MA}(\Theta) \|_{\bSigma_1}^2}), \right.$  \vspace{1mm}

\Statex \hfill \texttt{//} Compute the $R_{\rm ind, MA}^2(\theta)$ in \eqref{eqn:R2_ind_def}.
\vspace{1mm}

\State $\Theta^{*}_{\rm ind, MA} \gets \max_{\Theta} R^2_{\rm ind, G}(\Theta),$  \hfill 		\texttt{//} Choose the optimal hyperparameter.
\vspace{1.5mm}

\Return $R_{\rm ind, MA}^2(\Theta)$ and $\Theta^{*}_{\rm ind, MA}$.

\end{algorithmic}
\end{algorithm}

\newpage

\section{Preliminary results}
\label{sec:prelim}
In this section, we provide some details about the calculus of deterministic equivalents from random matrix theory \citep{dobriban2021distributed}. 
Let $\hat{\bSigma}_{n} = \Xb^{\T} \Xb/n$ with each row of $\Xb \sim N(0, \bSigma)$. 
We take $n,p \to\infty$ proportionally, and the simplest example of equivalence is $\hat{\bSigma}_{n} \asymp \bSigma$. 
Here we abuse notation by using $\hat{\bSigma}_{n}$ to denote the covariance matrix $\Xb^{\T} \Xb / n$ for any matrix $\Xb \in \mathbb{R}^{n \times p}$ satisfying Condition \ref{cond-X}, as all such matrices share the same asymptotic limit as $n, p \to \infty$ proportionally. For example, we may abbreviate $\Wb^{\T} \Wb / n_w$ as $\hat{\bSigma}_{n_w}$ in the following sections. 

We present the generalized Marchenko-Pastur Law (e.g., see Theorem 1 in \cite{rubio2011spectral}) below, as it will be frequently used in our proof.

\begin{theorem}[Generalized Marchenko-Pastur Law]
\label{thm:gen_MP_law}
Let $\Xb \in \RR^{n \times p}$ be a random matrix satisfying Conditions \ref{cond-np-ratio} and \ref{cond-X} and $\Ab$ be a $p \times p$ nonnegative definite matrix. 
Then, with probability one, for each $\theta \in \RR_{+}$, as $n, p \to \infty$ proportionally,  we have 
\begin{align} \label{eqn:first_generalized_MP}
    \left(\Ab + \hat{\bSigma}_{n} + \theta \Ib_{p}\right)^{-1} \asymp \left(\Ab + \tau_{n}(\theta)\bSigma + \theta \Ib_{p}\right)^{-1},
\end{align}
where $\tau_{n}(\theta)$ is defined as the solution in $\CC_{+}$ to the fixed point equation
\begin{equation*}
\tau_{n}(\theta)^{-1}=1+\frac{1}{n}\tr\left[\bSigma (\Ab + \tau_{n}(\theta) \bSigma + \theta \Ib_p)^{-1}\right].
\end{equation*}
\end{theorem}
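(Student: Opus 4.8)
The statement is the Rubio--Mestre deterministic equivalent, so the plan is to reproduce the standard self-consistent resolvent argument. Abbreviate the resolvents $\mathbf{R} = (\Ab + \hat\bSigma_n + \theta\Ib_p)^{-1}$ and $\bar{\mathbf{R}} = (\Ab + \tau_n(\theta)\bSigma + \theta\Ib_p)^{-1}$; because $\Ab\succeq 0$, $\hat\bSigma_n\succeq 0$ and $\theta>0$, both have operator norm at most $1/\theta$, which keeps every expression below bounded. By the definition of $\asymp$ it suffices to prove $\tr[\Cb(\mathbf{R} - \bar{\mathbf{R}})] \to 0$ almost surely for an arbitrary deterministic sequence $\Cb$ with $\limsup_{n}\|\Cb\|_* < \infty$ (here $\|\cdot\|_*$ is the trace norm; note this also forces $\|\Cb\|_{\mathrm{op}} \le \|\Cb\|_* = O(1)$). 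I would split this into a fluctuation part and a bias part,
\[
\tr[\Cb(\mathbf{R} - \bar{\mathbf{R}})] = \bigl(\tr[\Cb\mathbf{R}] - \EE\tr[\Cb\mathbf{R}]\bigr) + \bigl(\EE\tr[\Cb\mathbf{R}] - \tr[\Cb\bar{\mathbf{R}}]\bigr),
\]
and control the two pieces separately.

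For the fluctuation term I would use a martingale-difference decomposition along the rows $\mathbf{x}_1,\dots,\mathbf{x}_n$ of $\Xb$, where $\mathbf{x}_i = \bSigma^{1/2}\mathbf{z}_i$ with $\mathbf{z}_i$ having i.i.d.\ mean-zero, unit-variance entries. Writing $\mathbf{R}_i = (\Ab + \hat\bSigma_n - \tfrac1n\mathbf{x}_i\mathbf{x}_i^\T + \theta\Ib_p)^{-1}$ for the leave-one-out resolvent, the $i$-th martingale difference equals $(\EE_i - \EE_{i-1})\tr[\Cb(\mathbf{R} - \mathbf{R}_i)]$ because $\mathbf{R}_i$ is $\mathbf{x}_i$-free; the Sherman--Morrison identity $\mathbf{R} - \mathbf{R}_i = -\tfrac1n\mathbf{R}\mathbf{x}_i\mathbf{x}_i^\T\mathbf{R}_i$ shows each such increment is of order $1/n$ in $L^2$ once the quadratic form $\tfrac1n\mathbf{x}_i^\T\mathbf{R}_i\Cb\mathbf{R}\mathbf{x}_i$ is concentrated near its mean $O(1/n)$ (after using $\mathbf{R}\mathbf{x}_i = \beta_i\mathbf{R}_i\mathbf{x}_i$ to render the middle factor $\mathbf{x}_i$-free and applying the trace lemma). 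Summing the $n$ squared increments gives $\EE|\tr[\Cb\mathbf{R}] - \EE\tr[\Cb\mathbf{R}]|^2 = O(1/n)$; a Burkholder refinement to a $(2+\delta)$-th moment together with the Borel--Cantelli lemma upgrades this to almost-sure convergence along the proportional sequence.

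The substance is the bias term $\EE\tr[\Cb\mathbf{R}] \to \tr[\Cb\bar{\mathbf{R}}]$. Here the plan is the resolvent identity
\[
\mathbf{R} - \bar{\mathbf{R}} = \bar{\mathbf{R}}\bigl(\tau_n\bSigma - \hat\bSigma_n\bigr)\mathbf{R} = \tau_n\,\bar{\mathbf{R}}\bSigma\mathbf{R} - \tfrac1n\sum_{i=1}^{n}\bar{\mathbf{R}}\mathbf{x}_i\mathbf{x}_i^\T\mathbf{R},
\]
followed by $\mathbf{x}_i^\T\mathbf{R} = \beta_i\,\mathbf{x}_i^\T\mathbf{R}_i$ with $\beta_i = (1+\tfrac1n\mathbf{x}_i^\T\mathbf{R}_i\mathbf{x}_i)^{-1}$, so that each summand is governed by the scalar $\beta_i$. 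The two workhorses are the quadratic-form (trace) lemma of Bai--Silverstein, giving $\tfrac1n\mathbf{x}_i^\T\mathbf{R}_i\mathbf{x}_i \approx \tfrac1n\tr(\bSigma\mathbf{R}_i)$, and the rank-one resolvent bound $|\tr[\mathbf{M}(\mathbf{R}-\mathbf{R}_i)]| \le \|\mathbf{M}\|_{\mathrm{op}}/\theta$ (via eigenvalue interlacing), which after the $\tfrac1n$ normalisation makes the replacement $\mathbf{R}_i \to \mathbf{R}$ inside $\tfrac1n\tr(\bSigma\,\cdot)$ cost only $O(1/n)$. Together these force $\beta_i \approx (1+\tfrac1n\tr(\bSigma\mathbf{R}))^{-1}$; the defining equation $\tau_n^{-1} = 1 + \tfrac1n\tr[\bSigma\bar{\mathbf{R}}]$ then identifies the common deterministic value, so the two terms on the right cancel in the trace against $\Cb$ up to $o(1)$, yielding $\EE\tr[\Cb\mathbf{R}] = \tr[\Cb\bar{\mathbf{R}}] + o(1)$. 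Existence and uniqueness of $\tau_n(\theta)\in\CC_+$ is a standard Stieltjes-transform monotonicity argument, valid precisely because $\theta>0$ keeps the defining map a strict self-map with a single fixed point.

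The main obstacle I anticipate is the bias step: closing the self-consistency so that every remainder is provably $o(1)$ requires keeping the $\mathbf{x}_i$-dependence of $\mathbf{R}$ and of the $\beta_i$'s under simultaneous control, and the clean bounds $\|\mathbf{R}\|_{\mathrm{op}}\le 1/\theta$ used throughout depend essentially on $\theta>0$. Secondarily, since Conditions~\ref{cond-np-ratio}--\ref{cond-X} assume only a finite fourth moment on the entries of $\Xb_0$, promoting the in-probability estimates to the almost-sure conclusion demanded by $\asymp$ needs a truncation-and-centering of those entries before the Borel--Cantelli step, exactly as in the Marchenko--Pastur proofs of \cite{bai2010spectral}.
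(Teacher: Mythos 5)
You are proving a statement that the paper itself never proves: Theorem \ref{thm:gen_MP_law} is imported as a known result, introduced in the supplement with the words ``the generalized Marchenko--Pastur Law (e.g., see Theorem 1 in \cite{rubio2011spectral})'', and everything downstream (notably Lemma \ref{lemma:DE_second_order}, which is obtained by differentiating this equivalence via Lemma \ref{lemma:diff_deter_equi}) uses it as a black box. So the honest comparison is with the cited source rather than with an in-paper argument, and on that score your plan is the right one: the reduction to showing $\tr[\Cb(\mathbf{R}-\bar{\mathbf{R}})]\to 0$ almost surely for trace-norm-bounded test matrices, the martingale-difference/Burkholder/Borel--Cantelli control of the fluctuation $\tr[\Cb\mathbf{R}]-\EE\tr[\Cb\mathbf{R}]$, the leave-one-out resolvents with Sherman--Morrison and the Bai--Silverstein quadratic-form lemma for the bias, and the truncation-and-centering step to cope with only finite fourth moments are exactly the ingredients of the standard Rubio--Mestre proof. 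What the paper's approach buys is economy (the result is a tool, not a contribution); what yours buys is self-containedness, at the cost of redoing a long argument that exists in the literature.

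One step deserves to be made explicit rather than left as ``the defining equation then identifies the common deterministic value'', because as written that sentence is circular: the quantity $\tfrac1n\tr(\bSigma\mathbf{R})$ around which your $\beta_i$'s concentrate is itself an object of the very type being approximated, so you cannot simply replace it by $\tau_n(\theta)$. The standard closure is a stability argument: show that $\hat\tau_n:=\bigl(1+\tfrac1n\tr(\bSigma\mathbf{R})\bigr)^{-1}$ satisfies the fixed-point equation $\hat\tau_n^{-1}=1+\tfrac1n\tr\bigl[\bSigma(\Ab+\hat\tau_n\bSigma+\theta\Ib_p)^{-1}\bigr]+o(1)$ (by applying your bias expansion with the particular choice $\Cb=\bSigma/n$, which is admissible since $\tr(\bSigma)/n=O(1)$ under Conditions \ref{cond-np-ratio}--\ref{cond-X}), and then invoke uniqueness together with Lipschitz stability of that equation for $\theta>0$ to conclude $\hat\tau_n-\tau_n(\theta)\to 0$, which propagates back into the general $\Cb$ estimate. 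You correctly flag this as the main obstacle, but the stability-of-the-fixed-point step is the specific device that removes the circularity, and a complete write-up must include it.
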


When $\Ab$ is the zero matrix $\rm O_p$, we have the following corollary.
\begin{corollary}
Let $\Xb \in \RR^{n \times p}$ be a random matrix satisfying Conditions \ref{cond-np-ratio} and \ref{cond-X}.
With probability one, for each $\theta > 0$, as $n, p \to \infty$ proportionally, we have 
\begin{align*} 
    \left(\hat{\bSigma}_{n} + \theta \Ib_{p}\right)^{-1} \asymp \left(\tau_{n}(\theta)\bSigma + \theta \Ib_{p}\right)^{-1},
\end{align*}
where $\tau_{n}(\theta)$ is defined as the solution to the fixed point equation
\begin{equation*}
\tau_{n}(\theta)^{-1}=1+\frac{1}{n}\tr\left[\bSigma (\tau_{n}(\theta) \bSigma + \theta \Ib_p)^{-1}\right]. 
\end{equation*}
When $\bSigma = \Ib_{p}$, $\tau_{n}(\theta)$ has closed-form
\begin{align}\label{eqn:first_taupz}
    \tau_{n}(\theta) = \frac{1 - \theta - \gamma + \sqrt{(1 - \theta - \gamma)^2 + 4 \theta}}{2}.
\end{align}
\end{corollary}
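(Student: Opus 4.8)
The plan is to obtain this corollary as a direct specialization of the Generalized Marchenko--Pastur Law (Theorem~\ref{thm:gen_MP_law}), followed by an explicit evaluation of the fixed point equation in the isotropic case $\bSigma = \Ib_p$. The corollary has two parts: the deterministic equivalence with the zero matrix, and the closed form for $\tau_n(\theta)$ when $\bSigma = \Ib_p$.

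For the deterministic equivalence, I would simply set $\Ab = \rm O_p$, the $p \times p$ zero matrix, in Theorem~\ref{thm:gen_MP_law}. Since $\rm O_p$ is trivially nonnegative definite, the hypotheses are met, and Equation~\eqref{eqn:first_generalized_MP} collapses to $(\hat{\bSigma}_{n} + \theta \Ib_{p})^{-1} \asymp (\tau_{n}(\theta)\bSigma + \theta \Ib_{p})^{-1}$, with the defining fixed point equation reducing correspondingly. This step requires no new computation beyond substitution.

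For the closed form under $\bSigma = \Ib_p$, I would substitute $\bSigma = \Ib_p$ into the fixed point equation. The trace simplifies to $\tr[(\tau_{n}(\theta)\Ib_p + \theta\Ib_p)^{-1}] = p / (\tau_{n}(\theta) + \theta)$, so the equation becomes
\begin{equation*}
\tau_{n}(\theta)^{-1} = 1 + \frac{p}{n}\cdot\frac{1}{\tau_{n}(\theta) + \theta}.
\end{equation*}
Passing to the limit $p/n \to \gamma$ and abbreviating $\tau = \tau_n(\theta)$, clearing denominators by multiplying through by $\tau(\tau + \theta)$ yields the quadratic
\begin{equation*}
\tau^2 + (\theta + \gamma - 1)\tau - \theta = 0.
\end{equation*}
The quadratic formula then gives two candidate roots $[(1-\theta-\gamma) \pm \sqrt{(1-\theta-\gamma)^2 + 4\theta}]/2$, of which I would select the $+$ branch to recover \eqref{eqn:first_taupz}.

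The main obstacle, though a mild one, is justifying the root selection. Because $\theta > 0$, the product of the two roots equals $-\theta < 0$, so exactly one root is positive; since $\tau_n(\theta)$ is defined as the solution in $\CC_+$ (equivalently, the branch that is positive for real $\theta > 0$), the positive root is the correct choice. One should also note a minor technicality: strictly speaking $\tau_n(\theta)$ solves the finite-$n$ equation with $p/n$ in place of $\gamma$, while \eqref{eqn:first_taupz} is the limit obtained as $p/n \to \gamma$; continuity of the positive root in the coefficient $p/n$ ensures $\tau_n(\theta)$ converges to the stated closed form.
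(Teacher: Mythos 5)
Your proposal is correct and takes essentially the same route as the paper: the paper states the corollary as an immediate specialization of Theorem~\ref{thm:gen_MP_law} with $\Ab = \rm O_p$, and obtains the closed form for $\bSigma = \Ib_p$ by solving the same quadratic (which appears, with an auxiliary parameter $t$, in the proof of Lemma~\ref{lemma:DE_second_order}) and taking the positive root. Your explicit justification of the root selection and the finite-$n$ versus $\gamma$ limit are details the paper leaves implicit, and both are handled correctly.
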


To prove Lemma \ref{lemma:DE_second_order}, we also need the following lemma from \cite{dobriban2021distributed}.
\begin{lemma}[Differentiation Rule for Deterministic Equivalence]
\label{lemma:diff_deter_equi}
Suppose $\Tb = (\Tb_n)_{n \geq 0}$ and $\Qb = (\Qb_n)_{n \geq 0}$ are two (deterministic or random) matrix sequences of growing dimensions such that $f(z, \Tb_n) \asymp g(z, \Qb_n)$, where the entries of $f$ and $g$ are analytic functions in $z \in D$ and $D$ is an open connected subset of $\mathbb{C}$. 
Then we have $$f'(z, \Tb_n) \asymp g'(z, \Qb_n)$$ for $z \in D$, where the derivatives are entry-wise with respect to $z$.
\end{lemma}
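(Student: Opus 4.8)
The plan is to reduce the matrix identity to a scalar statement about analytic functions and then invoke a normal-families argument from complex analysis. By the definition of $\asymp$, it suffices to fix an arbitrary sequence $\Cb = \Cb_n$ of matrices with uniformly bounded trace norm and prove that $\phi_n'(z) := \tr[\Cb(f'(z,\Tb_n) - g'(z,\Qb_n))] \to 0$ almost surely for each $z \in D$. For fixed $n$ the trace is a finite linear combination of entries of $f - g$, each analytic in $z$, so $\phi_n(z) := \tr[\Cb(f(z,\Tb_n) - g(z,\Qb_n))]$ is analytic on $D$; since entrywise differentiation commutes with the finite sum defining the trace, $\phi_n'$ is genuinely the $z$-derivative of $\phi_n$. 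The hypothesis $f \asymp g$ says precisely that $\phi_n(z) \to 0$ almost surely for each fixed $z$, so the entire problem becomes: does almost-sure pointwise convergence of the analytic functions $\phi_n$ force convergence of their derivatives?

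First I would handle the almost-sure bookkeeping. Choosing a countable dense subset $\{z_k\}_{k \ge 1}$ of $D$ and letting $\Omega_k$ be the probability-one event on which $\phi_n(z_k) \to 0$, the intersection $\Omega_0 = \bigcap_k \Omega_k$ still has probability one, and on $\Omega_0$ we obtain simultaneous convergence at every $z_k$. Next I would establish local uniform boundedness: for any compact $K \subset D$ the inequality $|\phi_n(z)| \le \|\Cb\|_{\mathrm{tr}} \cdot \|f(z,\Tb_n) - g(z,\Qb_n)\|_{\mathrm{op}}$ reduces matters to controlling the operator norms of $f$ and $g$ uniformly over $z \in K$ and $n$. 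For the resolvent-type functions appearing in Lemma~\ref{lemma:DE_second_order}, such as $(\Ab + \hat{\bSigma}_n + \theta \Ib_p)^{-1}$, this holds whenever $z$ stays away from the spectrum (e.g.\ $z \in \CC_{+}$, or $\theta \in \RR_{+}$ bounded below), on the probability-one event where the spectra are asymptotically confined; intersecting this with $\Omega_0$ keeps a probability-one event.

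The core step is then Vitali's convergence theorem: a locally uniformly bounded sequence of analytic functions that converges on a subset of $D$ possessing an accumulation point converges locally uniformly on all of $D$. Applying this on $\Omega_0$, with the dense set $\{z_k\}$ supplying accumulation points, yields $\phi_n \to 0$ locally uniformly. Finally, Cauchy's integral formula for the derivative, $\phi_n'(z) = \frac{1}{2\pi i}\oint_{|\zeta - z| = r} \frac{\phi_n(\zeta)}{(\zeta - z)^2}\, d\zeta$ for a small circle contained in $D$, transfers local uniform convergence from $\phi_n$ to $\phi_n'$, so $\phi_n'(z) \to 0$ for every $z \in D$ on $\Omega_0$. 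As $\Cb$ was arbitrary, this is exactly $f' \asymp g'$.

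The main obstacle I anticipate is precisely the interchange between the probabilistic and the analytic: almost-sure pointwise convergence alone cannot be differentiated, since the exceptional null set may depend on $z$. The device of passing to a countable dense set to manufacture a single probability-one event, and then upgrading to local uniform convergence via Vitali once uniform boundedness is in place, is what legitimizes differentiating under $\asymp$. Verifying the uniform operator-norm bound for the specific resolvent functions used later is the only point where the structure of the random-matrix problem, rather than pure complex analysis, is needed.
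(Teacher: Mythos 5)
Your proposal is correct, but note that there is no in-paper proof to compare it against: the paper imports this lemma verbatim from \cite{dobriban2021distributed} and uses it as a black box in the proof of Lemma~\ref{lemma:DE_second_order}. Your argument is, in substance, the standard proof underlying the calculus of deterministic equivalents in that reference: fix a trace-norm-bounded sequence $\Cb$, reduce to the scalar analytic functions $\phi_n(z)=\tr[\Cb(f(z,\Tb_n)-g(z,\Qb_n))]$ via the duality $|\tr[\Cb(\Db-\Eb)]|\le \|\Cb\|_{\mathrm{tr}}\,\|\Db-\Eb\|_{\mathrm{op}}$, collect the $z$-dependent null sets over a countable dense subset of $D$, invoke the Vitali--Porter theorem to upgrade almost-sure pointwise convergence to locally uniform convergence, and then differentiate through Cauchy's integral formula. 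The one substantive point — which you correctly identify rather than gloss over — is that the lemma as stated carries no local uniform boundedness hypothesis, yet your Vitali step (and any correct proof) genuinely needs one: almost-sure pointwise convergence of analytic functions alone does not control derivatives, since the limit need not even be holomorphic everywhere. That hypothesis is implicit in the informal statement and is satisfied by the resolvent-type functions to which the lemma is actually applied, e.g.\ $(\hat{\bSigma}_{n}+\theta\Ib_p)^{-1}$ with the spectral parameter bounded away from the spectrum as in Lemma~\ref{lemma:DE_second_order}, so your proof is complete for every use the paper makes of the lemma.
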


\subsection{Proof of Lemma \ref{lemma:DE_second_order}}

We now provide the proof of Lemma \ref{lemma:DE_second_order}. 
\begin{proof}[Proof of Lemma \ref{lemma:DE_second_order}]
Let $\Ab = t \cdot \bSigma$ be an Hermitian nonnegative definite matrix. By Theorem \ref{thm:gen_MP_law},
we have \begin{equation*}
  G(t,\theta):=  (\hat{\bSigma}_{n} + t \bSigma + \theta \Ib_{p})^{-1} \asymp ( \tau_{n}(t,\theta) \bSigma + t\bSigma + \theta \Ib_{p})^{-1}=:D(t,\theta),
\end{equation*}
where $$\tau_{n}^{-1}(t,\theta) = 1+ n^{-1}\tr\left[\bSigma (t\bSigma+\tau_{n}(t,\theta) \bSigma + \theta I_p)^{-1}\right].$$
By Lemma \ref{lemma:diff_deter_equi}, we have 
\begin{equation*}
    \begin{aligned}
     & G(0,\theta) \bSigma G(0,\theta) = - \frac{\partial G}{\partial t}\Big|_{t=0}\asymp -\frac{\partial D}{\partial t}\Big|_{t=0} = D(0,\theta) \left(\frac{\partial \tau_{n}}{\partial t}\Big|_{t=0}\bSigma + \bSigma   \right) D(0,\theta). 
    \end{aligned}
\end{equation*}
Take the derivative of both sides of Equation \eqref{eqn:tau} with respect to $t$, we have 
\begin{equation*}
     \frac{\partial \tau_{n}(t,\theta)}{\partial t} 
     = \frac{\tau_{n}^2}{n}  \tr\left[\bSigma (t\bSigma+\tau_{n} \bSigma+ \theta \Ib_{p})^{-1} \left( \bSigma + \frac{\partial \tau_{n}}{\partial t}\bSigma \right)(t\bSigma+\tau_{n} \bSigma + \theta \Ib_{p})^{-1}\right].
\end{equation*}
It follows that 
\begin{equation*}
    {\frac{\partial \tau_{n}}{\partial t}} \left( 1 + \frac{\partial \tau_{n}}{\partial t} \right)^{-1} = \frac{\tau_{n}^2}{n}\tr\left[\bSigma (t\bSigma+\tau_{n} \bSigma+ \theta \Ib_{p})^{-1} \bSigma (t\bSigma+\tau_{n} \bSigma+ \theta \Ib_{p})^{-1} \right].
\end{equation*}
Let $t = 0$, we have
\begin{align*}
    {\frac{\partial \tau_{n}}{\partial t}\Big|_{t=0}} \left( 1 + \frac{\partial \tau_{n}}{\partial t}\Big|_{t=0} \right)^{-1} 
    =\ & \frac{\tau_{n}^2(\theta)}{n}\tr\left[\bSigma (\tau_{n}(\theta) \bSigma + \theta \Ib_{p})^{-1} \bSigma (\tau_{n}(\theta) \bSigma + \theta \Ib_{p})^{-1} \right]
    \\
    =\ & \frac{\tau_{n}^2(\theta)}{n}\tr\left[(\tau_{n}(\theta) \bSigma + \theta \Ib_{p})^{-2} \bSigma^2 \right].
\end{align*}
Therefore,
\begin{equation*}
    \frac{\partial \tau_{n}}{\partial t}\Big|_{t=0} = \left(1-\frac{\tau_{n}^2(\theta)}{n}\tr\left[(\tau_{n}(\theta) \bSigma + \theta \Ib_{p})^{-2} \bSigma^2 \right]\right)^{-1} \frac{\tau_{n}^2(\theta)}{n}\tr\left[(\tau_{n}(\theta) \bSigma + \theta \Ib_{p})^{-2} \bSigma^2 \right] =: \rho_{n}(\theta).
\end{equation*}
We conclude
\begin{equation*}
\begin{split}
    \left(\hat{\bSigma}_{n} + \theta \Ib_{p}\right)^{-1} \bSigma \left( \hat{\bSigma}_{n} + \theta \Ib_{p}\right)^{-1} 
    \asymp (\rho_{n}(\theta) + 1) \cdot \left( \tau_{n}(\theta)\bSigma + \theta\Ib_{p}\right)^{-1} \bSigma \left(\tau_{n}(\theta)\bSigma + \theta\Ib_{p}\right)^{-1}.
\end{split}
\end{equation*}
When $\bSigma = \Ib_{p}$, $\tau_{n}(t, \theta)$ satisfies the equation
\begin{align*}
    \tau_{n}^{-1}(t,\theta) = 1+ \frac{\gamma}{t +\tau_{n}(t,\theta) + \theta}. 
\end{align*}
Therefore, $\tau_{n}(t, \theta)$ admits the closed-form
\begin{align*}
    \tau_{n}(t, \theta) = \frac{1 - t - \gamma - \theta + \sqrt{(1 - t - \gamma - \theta)^2 + 4(t + \theta)}}{2}.
\end{align*}
Therefore, we have 
\begin{align*}
    \rho_{n}(\theta) = \frac{\partial \tau_{n}}{\partial t}\Big|_{t=0} = \frac{1 + \gamma + \theta - \sqrt{(1 - \theta - \gamma)^2 + 4 \theta}}{2 \sqrt{(1 - \theta - \gamma)^2 + 4 \theta}}.
\end{align*}
\end{proof}

\subsection{Proof of Lemma \ref{lemma:second_moment}}
\label{sec:proof_of_second_mom}

The proof is organized into two parts.
In the first part, we establish the convergence of ${1}/{p} \cdot \tr\left(\Cb \hat{\bSigma}_{n}^2 \right)$.
In the second part, we derive the closed-form expression for the limit.
Notably, there is a key distinction between ${1}/{p} \cdot \tr\left(\Cb \hat{\bSigma}_{n}^2 \right)$ and the trace functional considered in Lemma B.26 in \cite{bai2010spectral}. Lemma B.26 in \cite{bai2010spectral} only guarantees the convergence of the first-order trace ${1}/{p} \cdot \tr\left(\Cb \hat{\bSigma}_{n} \right)$. 
Since our lemma requires a second-order concentration inequality, Lemma B.26 from \cite{bai2010spectral} cannot be directly applied in our proof.

Motivated by the proof of Lemma S.16 in Section S.8.8 of \cite{fu2024uncertainty}, we have the following concentration inequality
\begin{align*}
    \PP \left( \left| \frac{1}{p} \tr\left(\Cb \hat{\bSigma}_{n}^2 \right) - \frac{1}{p} \cdot \EE \tr\left(\Cb \hat{\bSigma}_{n}^2 \right) \right| < O_{p}(p^{-1/2 + \delta}) \right) \geq 1 - O_p(p^{-2 \delta}),
\end{align*}
which provides the concentration inequality for the second moment of the quadratic form with high probability.
Here, the expectation is taken with respect to $\Xb$.
This further implies that, as $n, p \to \infty$ proportionally, ${1}/{p} \cdot \tr\left(\Cb \hat{\bSigma}_{n} \right)$ converges to a constant
\begin{align} \label{eqn:convg}
    {1}/{p} \cdot \tr\left(\Cb \hat{\bSigma}_{n}^{2} \right) - {1}/{p} \cdot \EE \tr\left(\Cb \hat{\bSigma}_{n}^{2} \right) \overset{p}{\to} 0.
\end{align}

Next, to derive the limit of ${1}/{p} \cdot \tr\left(\Cb \hat{\bSigma}_{n}^{2} \right)$, we use 
Lemma S.9 in \cite{fu2024uncertainty}, which computes the closed-form of a more general trace functional. 

\begin{lemma}[Lemma S.9 in \cite{fu2024uncertainty}]
For any deterministic symmetric matrices 
\(\Pb, \Qb, \Rb \in \mathbb{R}^{p \times p}\), for each entry $i,j$, we have 
\begin{align*}
\bigl[\mathbb{E} \bigl(\Pb \bSigma^{-1/2} \hat{\bSigma}_{n} \bSigma^{-1/2}  \,\Qb\, \bSigma^{-1/2} &\hat{\bSigma}_{n} \bSigma^{-1/2} \,\Rb\bigr)\bigr]_{i,j} 
\\
=\ &
\frac{1}{n} \Bigl\{\,
\mathbb{E}\bigl(x_0^4 - 3\bigr)\,\bigl(\Pb\,\mathrm{diag}(\Qb)\,\Rb\bigr)_{i,j}
\;+\;(n+1)\,\bigl(\Pb\,\Qb\,\Rb\bigr)_{i,j}
\;+\;\mathrm{Tr}(\Qb)\,\bigl(\Pb\,\Rb\bigr)_{i,j}
\Bigr\}.
\end{align*}
\end{lemma}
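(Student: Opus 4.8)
The plan is to treat this as a direct fourth-moment computation, since the random matrix enters the left-hand side only quadratically and the claim is entrywise. First I would strip out $\bSigma$: by Condition~\ref{cond-X} we have $\Xb = \Xb_0\bSigma^{1/2}$, hence $\bSigma^{-1/2}\hat{\bSigma}_{n}\bSigma^{-1/2} = \Xb_0^{\T}\Xb_0/n$, so the whole expression depends on $\Xb_0$ alone, whose entries $x_{ak}$ are i.i.d.\ with mean zero, variance one, and fourth moment $\mu_4 := \mathbb{E}(x_0^4)$. Writing $\Xb_0^{\T}\Xb_0/n$ as an average over its $n$ rows and multiplying out the two copies, the $(i,j)$ entry becomes
\begin{align*}
 \frac{1}{n^2}\sum_{a,b=1}^{n}\ \sum_{k,l,m,s}P_{ik}\,Q_{lm}\,R_{sj}\,\mathbb{E}\!\left[x_{ak}x_{al}x_{bm}x_{bs}\right],
\end{align*}
so that everything reduces to the joint fourth moment of two rows, indexed by the row pair $(a,b)$ together with four coordinate indices $k,l,m,s$.

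Next I would split the row sum into the off-diagonal part $a\neq b$ and the diagonal part $a=b$. For $a\neq b$, independence of distinct rows factorizes the moment as $\mathbb{E}[x_{ak}x_{al}]\,\mathbb{E}[x_{bm}x_{bs}] = \delta_{kl}\delta_{ms}$; substituting and contracting the coordinate indices collapses the summand to $(\Pb\Qb\Rb)_{ij}$, and this occurs once for each of the $n(n-1)$ ordered pairs. For $a=b$ I would invoke the standard fourth-moment identity for i.i.d.\ coordinates,
\begin{align*}
 \mathbb{E}[x_{k}x_{l}x_{m}x_{s}] = \delta_{kl}\delta_{ms}+\delta_{km}\delta_{ls}+\delta_{ks}\delta_{lm}+(\mu_4-3)\,\mathbf{1}\{k=l=m=s\},
\end{align*}
which is the three Gaussian (Wick) pairings plus an excess-kurtosis correction supported on the full diagonal.

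The remaining work is to read off which matrix product each of the four pieces yields after summing over $k,l,m,s$. Using that $\Qb$ is symmetric, the pairings $\delta_{kl}\delta_{ms}$ and $\delta_{km}\delta_{ls}$ each contract to $(\Pb\Qb\Rb)_{ij}$; the pairing $\delta_{ks}\delta_{lm}$ frees $\mathrm{Tr}(\Qb)$ and leaves $\mathrm{Tr}(\Qb)\,(\Pb\Rb)_{ij}$; and the diagonal term produces $(\mu_4-3)\,(\Pb\,\mathrm{diag}(\Qb)\,\Rb)_{ij}$, since only the diagonal entries of $\Qb$ survive. Each of these occurs once per diagonal index $a$, i.e.\ with multiplicity $n$. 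Collecting prefactors, the $\Pb\Qb\Rb$ term accumulates $\tfrac{n(n-1)}{n^2}+\tfrac{2n}{n^2}=\tfrac{n+1}{n}$, while the trace term and the diagonal term each carry $\tfrac{1}{n}$, which is exactly the claimed identity with $\mathbb{E}(x_0^4-3)=\mu_4-3$.

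This computation is routine, so the main obstacle is bookkeeping rather than analysis. The delicate points are: isolating the excess-kurtosis contribution on the full diagonal without double-counting it against the three Wick pairings (the identity above already subtracts the overlap via the constant $3$); using the symmetry of $\Qb$ to merge the two crossing pairings into a single $\Pb\Qb\Rb$; and tracking the $n(n-1)$-versus-$n$ combinatorics so that the coefficient lands on $(n+1)/n$ rather than $1$ or $(n-1)/n$. Once these are handled, the entrywise identity follows immediately, and it holds without any Gaussianity assumption on $\Xb_0$.
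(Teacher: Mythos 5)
Your computation is correct, and it checks out line by line: reducing to $\Xb_0^{\T}\Xb_0/n$ via $\Xb=\Xb_0\bSigma^{1/2}$ removes $\bSigma$ from the problem; the split of the double row sum into the $n(n-1)$ ordered pairs $a\neq b$ (each contracting to $(\Pb\Qb\Rb)_{ij}$ by independence of rows) and the $n$ diagonal terms $a=b$ (handled by the Wick-plus-excess-kurtosis identity) is sound; and the final bookkeeping is right, since the pairings $\delta_{kl}\delta_{ms}$ and $\delta_{km}\delta_{ls}$ each give $(\Pb\Qb\Rb)_{ij}$ (the second via symmetry of $\Qb$), $\delta_{ks}\delta_{lm}$ gives $\mathrm{Tr}(\Qb)\,(\Pb\Rb)_{ij}$, the full-diagonal term gives $(\mu_4-3)(\Pb\,\mathrm{diag}(\Qb)\,\Rb)_{ij}$, and $\tfrac{n(n-1)}{n^2}+\tfrac{2n}{n^2}=\tfrac{n+1}{n}$ produces exactly the stated coefficients. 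The main point of comparison is that the paper does not prove this statement at all: it is imported as Lemma S.9 of \cite{fu2024uncertainty} and used as a black box inside the proof of Lemma \ref{lemma:second_moment}, so your elementary fourth-moment expansion supplies a self-contained verification where the paper offers only a citation. Two refinements worth noting: your argument uses only the symmetry of $\Qb$ (the symmetry of $\Pb$ and $\Rb$ is never invoked, so the hypothesis could be weakened), and the vanishing of the mixed cases with three equal indices relies on independence plus mean zero rather than any distributional symmetry, which your stated moment identity already encodes; your closing remark that no Gaussianity is needed is consistent with Condition \ref{cond-X}, which assumes only i.i.d.\ entries with mean zero, unit variance, and finite fourth moment.
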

Let $\Pb = \Rb = \bSigma^{1/2}$ and $\Qb = \bSigma^{1/2} \Cb \bSigma^{1/2}$. As $n, p \to \infty$ proportionally, we obtain
\begin{align*}
    \frac{1}{p} \tr\left(\Cb \hat{\bSigma}_{n}^2 \right)
    =\ & \frac{1}{p} \cdot \frac{1}{n} \Bigl\{\,
    \mathbb{E}\bigl(x_0^4 - 3\bigr)\, \tr \bigl(\Pb\,\mathrm{diag}(\Qb)\,\Rb\bigr)
    \;+\;(n+1)\, \tr \bigl(\Pb\,\Qb\,\Rb\bigr)
    \;+\; \tr(\Qb)\, \tr \bigl(\Pb\,\Rb\bigr)
    \Bigr\}. 
    \\
    =\ & \frac{1}{n} \tr \left(\bSigma \right) \cdot \frac{1}{p} \tr \left(\Cb \bSigma \right) + \frac{1}{p} \tr \left(\Cb \bSigma^2 \right) + o_p(1).
\end{align*}
This completes the proof of Lemma \ref{lemma:second_moment}.

\section{Proof of Theorem \ref{thm:reference_panel}}
\label{sec:proof_thm_reference_panel}

In this section, we present the proof of Theorem \ref{thm:reference_panel}. The proof is organized into two parts: (i) deriving the limit of $R^2_{\rm sum, R}(\theta)$ from Algorithm \ref{alg:sum}, and (ii) calculating the limit of $R^2_{\rm ind, R}(\theta)$ from Algorithm \ref{alg:ind}.

\paragraph{\texorpdfstring{Part I: The limit of $\mathbf{R^2_{\rm sum, R}(\theta)}$.}{Summery}}
Recall that 
\begin{align*}
    R^2_{\rm sum, R}(\theta) = \frac{n^{\valid}}{\|\yb^{\valid}\|_{2}^2} \cdot \frac{\left \langle {\sbb^{\valid}}, (\Wb^{\T} \Wb + \theta n_w \Ib_{p})^{-1} \sbb^{\train} \right \rangle^{2} }{n^{\valid}{}^{2} \cdot \| (\Wb^{\T} \Wb + \theta n_w \Ib_{p})^{-1} \sbb^{\train} \|_{\bSigma}^{2}} = \frac{n^{\valid}}{\|\yb^{\valid}\|_{2}^2} \cdot \frac{\left( \Omega_{\rm sum}^{(1)} \right)^{2}}{\Omega_{\rm sum}^{(2)}}, 
\end{align*}
where $\Omega_{\rm sum}^{(1)}$ and $\Omega_{\rm sum}^{(2)}$ are defined as
\begin{align*}
    \Omega_{\rm sum}^{(1)} = \frac{1}{n^{\valid}} \cdot \left \langle {\sbb^{\valid}}, (\Wb^{\T} \Wb + \theta n_w \Ib_{p})^{-1} \sbb^{\train} \right \rangle
    \quad \mbox{and} \quad
    \Omega_{\rm sum}^{(2)} = \left\| (\Wb^{\T} \Wb + \theta n_w \Ib_{p})^{-1} \sbb^{\train} \right\|_{\bSigma}^2, 
\end{align*}
respectively. For $\Omega_{\rm sum}^{(1)}$, by plugging in the definition of $\sbb^{\train}$ and $\sbb^{\valid}$ and the fact that $n^{\train} + n^{\valid} = n$, we have
\begin{align*}
    \Omega_{\rm sum}^{(1)} =\ & \frac{1}{n^{\valid}} \cdot \left \langle {\sbb^{\valid}}, (\Wb^{\T} \Wb + \theta n_w \Ib_{p})^{-1} \sbb^{\train} \right \rangle
    \\
    =\ & \frac{1}{n^{\valid}} \cdot \left[ \frac{n - n^{\train}}{n} \Xb^{\T}\yb - \sqrt{\frac{n^{\train} (n - n^{\train})}{n^2}} (\Cov(\Xb^{\T} \yb))^{1/2} \hb \right]^{\T}  
    \\
    &\qquad \left( \Wb^{\T} \Wb + \theta n_w \Ib_{p} \right)^{-1} \left[ \frac{n^{\train}}{n} \Xb^{\T}\yb + \sqrt{\frac{n^{\train} (n - n^{\train})}{n^2}} (\Cov(\Xb^{\T} \yb))^{1/2} \hb \right]
    \\
    =\ & \frac{1}{n_w \cdot n^{\valid}} \cdot \left[  \frac{n - n^{\train}}{n} \Xb^{\T}\yb - \sqrt{\frac{n^{\train} (n - n^{\train})}{n^2}} (\Cov(\Xb^{\T} \yb))^{1/2}  \hb \right]^{\T} 
    \\
    &\qquad \left( \Wb^{\T} \Wb/n_w + \theta \Ib_{p} \right)^{-1} \cdot \left[ \frac{n^{\train}}{n} \Xb^{\T}\yb + \sqrt{\frac{n^{\train} (n - n^{\train})}{n^2}} (\Cov(\Xb^{\T} \yb))^{1/2} \hb \right]
    \\
    \overset{(a)}{=}\ & \frac{1}{n_w \cdot n^{\valid}} \cdot \left[  \frac{n - n^{\train}}{n} \Xb^{\T}\yb - \sqrt{\frac{n^{\train} (n - n^{\train})}{n^2}} (\Cov(\Xb^{\T} \yb))^{1/2}  \hb \right]^{\T}
    \\
    &\qquad \left( \tau_{n_w}(\theta) \bSigma + \theta \Ib_p \right)^{-1} \left[ \frac{n^{\train}}{n} \Xb^{\T}\yb + \sqrt{\frac{n^{\train} (n - n^{\train})}{n^2}} (\Cov(\Xb^{\T} \yb))^{1/2} \hb \right] + o_p(1),
\end{align*}
where Equation (a) follows from Lemma \ref{lemma:DE_second_order}. 
We further simplifies $\Omega_{\rm sum}^{(1)}$ as follows
\begin{align*}
    \Omega_{\rm sum}^{(1)} =\ & \frac{n}{n_w} \cdot \left[  \frac{n - n^{\train}}{n} \frac{\Xb^{\T}\Xb \bbeta}{n} + \frac{n - n^{\train}}{n} \frac{\Xb^{\T}\bepsilon}{n} - \sqrt{\frac{n^{\train} (n - n^{\train})}{n^2}} (\Cov(\Xb^{\T} \yb))^{1/2}  \frac{\hb}{n} \right]^{\T} \left( \tau_{n_w}(\theta) \bSigma + \theta \Ib_p \right)^{-1}
    \\
    &\qquad \left[ \frac{n^{\train}}{n^{\valid}} \frac{\Xb^{\T}\Xb \bbeta}{n} + \frac{n^{\train}}{n^{\valid}} \frac{\Xb^{\T} \bepsilon}{n} + \sqrt{\frac{n^{\train}}{n - n^{\train}}} (\Cov(\Xb^{\T} \yb))^{1/2}  \frac{\hb}{n} \right] + o_p(1)
    \\
    \overset{(b)}{=}\ & \frac{n}{n_w} \cdot \left(  \frac{n - n^{\train}}{n} \frac{\Xb^{\T}\Xb \bbeta}{n} \right)^{\T} \left( \tau_{n_w}(\theta) \bSigma + \theta \Ib_p \right)^{-1} \left( \frac{n^{\train}}{n^{\valid}} \frac{\Xb^{\T}\Xb \bbeta}{n} \right)
    \\
    &+ \frac{n}{n_w} \cdot \left( \frac{n - n^{\train}}{n} \frac{\Xb^{\T}\bepsilon}{n} \right)^{\T} \left( \tau_{n_w}(\theta) \bSigma + \theta \Ib_p \right)^{-1} \left( \frac{n^{\train}}{n^{\valid}} \frac{\Xb^{\T} \bepsilon}{n} \right)
    \\
    &- \frac{n}{n_w} \cdot \left[ \sqrt{\frac{n^{\train} (n - n^{\train})}{n^2}} (\Cov(\Xb^{\T} \yb))^{1/2}  \frac{\hb}{n} \right]^{\T} \left( \tau_{n_w}(\theta) \bSigma + \theta \Ib_p \right)^{-1} \left[ \sqrt{\frac{n^{\train}}{n - n^{\train}}} (\Cov(\Xb^{\T} \yb))^{1/2}  \frac{\hb}{n} \right] \\
    & + o_p(1)
    \\
    =\ & \Omega_{\rm sum}^{(2)} + \Omega_{\rm sum}^{(3)} - \Omega_{\rm sum}^{(4)} + o_p(1),
\end{align*}
where Equation (b) follows from Lemma B.26 in \cite{bai2010spectral} and Lemma 5 in \cite{zhao2024estimating}, so do Equations (c), (d), (e), and (f) below. 
We have 
\begin{align*}
    \Omega_{\rm sum}^{(2)} =\ & \frac{n}{n_w} \cdot \left(  \frac{n - n^{\train}}{n} \frac{\Xb^{\T}\Xb \bbeta}{n} \right)^{\T} \left( \tau_{n_w}(\theta) \bSigma + \theta \Ib_p \right)^{-1} \left( \frac{n^{\train}}{n^{\valid}} \frac{\Xb^{\T}\Xb \bbeta}{n} \right)
    \\
    =\ & \frac{n^{\train}}{n_w} \cdot \left(  \frac{\Xb^{\T}\Xb \bbeta}{n} \right)^{\T} \left( \tau_{n_w}(\theta) \bSigma + \theta \Ib_p \right)^{-1} \left( \frac{\Xb^{\T}\Xb \bbeta}{n} \right)
    \\
    \overset{(c)}{=}\ & \frac{n^{\train}}{n_w} \cdot \frac{\kappa \sigma_{\bbeta}^2}{p} \cdot \tr \left[ \hat{\bSigma}_{n}^2 \left( \tau_{n_w}(\theta) \bSigma + \theta \Ib_p \right)^{-1} \right] + o_p(1),
\end{align*}
\begin{align*}
    \Omega_{\rm sum}^{(3)} =\ & \frac{n}{n_w} \cdot \left( \frac{n - n^{\train}}{n} \frac{\Xb^{\T}\bepsilon}{n} \right)^{\T} \left( \tau_{n_w}(\theta) \bSigma + \theta \Ib_p \right)^{-1} \left( \frac{n^{\train}}{n^{\valid}} \frac{\Xb^{\T} \bepsilon}{n} \right)
    \\
    =\ & \frac{n^{\train}}{n_w} \cdot \frac{1}{n^2} \bepsilon^{\T} \Xb \left( \tau_{n_w}(\theta) \bSigma + \theta \Ib_p \right)^{-1} \Xb^{\T} \bepsilon
    \\
    \overset{(d)}{=}\ & \frac{n^{\train}}{n_w} \cdot \sigma_{\bepsilon}^2 \cdot \frac{1}{n^2} \tr \left[ \Xb \left( \tau_{n_w}(\theta) \bSigma + \theta \Ib_p \right)^{-1} \Xb^{\T} \right] + o_p(1)
    \\
    =\ & \frac{n^{\train}}{n_w} \cdot  \frac{\sigma_{\bepsilon}^2}{n} \cdot \tr \left[ \hat{\bSigma}_{n} \left( \tau_{n_w}(\theta) \bSigma + \theta \Ib_p \right)^{-1} \right] + o_p(1),
\end{align*}
and 
\begin{align*}
    \Omega_{\rm sum}^{(4)} =\ & \frac{n}{n_w} \cdot \left[ \sqrt{\frac{n^{\train} (n - n^{\train})}{n^2}} (\Cov(\Xb^{\T} \yb))^{1/2}  \frac{\hb}{n} \right]^{\T} \left( \tau_{n_w}(\theta) \bSigma + \theta \Ib_p \right)^{-1} \left[ \sqrt{\frac{n^{\train}}{n - n^{\train}}} (\Cov(\Xb^{\T} \yb))^{1/2}  \frac{\hb}{n} \right]
    \\
    =\ & \frac{n}{n_w} \cdot \frac{n^{\train}}{n} \cdot \left[ (\Cov(\Xb^{\T} \yb))^{1/2}  \frac{\hb}{n} \right]^{\T} \left( \tau_{n_w}(\theta) \bSigma + \theta \Ib_p \right)^{-1} \left[ (\Cov(\Xb^{\T} \yb))^{1/2}  \frac{\hb}{n} \right]
    \\
    \overset{(e)}{=}\ & \frac{n^{\train}}{n_w} \cdot \tr\left[ \left( \tau_{n_w}(\theta) \bSigma + \theta \Ib_p \right)^{-1} \frac{1}{n^2} \Cov(\Xb^{\T} \yb)\right] + o_p(1).
\end{align*}
Plugging in the closed-form of the covariance matrix 
\begin{align} \label{eqn:covariance}
    \Cov(\Xb^{\T} \yb) = \left(\Xb^{\T} \yb - n \bSigma \bbeta \right) \left(\Xb^{\T} \yb - n \bSigma \bbeta \right)^{\T}
\end{align}
into the equation above, we have
\begin{align*}
    \Omega_{\rm sum}^{(4)} =\ & \frac{n^{\train}}{n_w} \tr\left[ \left( \tau_{n_w}(\theta) \bSigma + \theta \Ib_p \right)^{-1}\left( \frac{1}{n} \Xb^{\T} \yb - \bSigma \bbeta \right) \left( \frac{1}{n} \Xb^{\T} \yb - \bSigma \bbeta \right)^{\T}\right] + o_p(1)
    \\
    =\ & \frac{n^{\train}}{n_w} \tr\left[ \left( \frac{1}{n} \Xb^{\T} \Xb \bbeta + \frac{1}{n} \Xb^{\T} \bepsilon- \bSigma \bbeta \right)^{\T} \left( \tau_{n_w}(\theta) \bSigma + \theta \Ib_p \right)^{-1}\left( \frac{1}{n} \Xb^{\T} \Xb \bbeta + \frac{1}{n} \Xb^{\T} \bepsilon- \bSigma \bbeta \right) \right] + o_p(1)
    \\
    \overset{(f)}{=}\ & \frac{n^{\train}}{n_w} \cdot \frac{\kappa \sigma_{\bbeta}^2}{p} \cdot \tr\left[ \left( \frac{1}{n} \Xb^{\T} \Xb - \bSigma \right)^{\T} \left( \tau_{n_w}(\theta) \bSigma + \theta \Ib_p \right)^{-1}\left( \frac{1}{n} \Xb^{\T} \Xb - \bSigma \right) \right]
    \\
    &+ \frac{n^{\train}}{n_w} \cdot \sigma_{\bepsilon}^2 \cdot \frac{1}{n^2} \tr\left[ \Xb \left( \tau_{n_w}(\theta) \bSigma + \theta \Ib_p \right)^{-1} \Xb^{\T}  \right] + o_p(1). 
\end{align*}
We further simplify $\Omega_{\rm sum}^{(4)}$ as follows
\begin{align*}
    \Omega_{\rm sum}^{(4)} =\ & \frac{n^{\train}}{n_w} \cdot \frac{\kappa \sigma_{\bbeta}^2}{p} \cdot \tr\left[ \frac{1}{n} \Xb^{\T} \Xb \left( \tau_{n_w}(\theta) \bSigma + \theta \Ib_p \right)^{-1} \frac{1}{n} \Xb^{\T} \Xb \right]
    \\
    &- 2 \cdot \frac{n^{\train}}{n_w} \cdot \frac{\kappa \sigma_{\bbeta}^2}{p} \cdot \tr\left[ \left( \tau_{n_w}(\theta) \bSigma + \theta \Ib_p \right)^{-1} \frac{1}{n} \Xb^{\T} \Xb \bSigma \right]
    \\
    &+ \frac{n^{\train}}{n_w} \cdot \frac{\kappa \sigma_{\bbeta}^2}{p} \cdot \tr \left[ \left( \tau_{n_w}(\theta) \bSigma + \theta \Ib_p \right)^{-1}\bSigma^2 \right]
    \\
    &+ \frac{n^{\train}}{n_w} \cdot \frac{\sigma_{\bepsilon}^2}{n} \cdot \tr\left[ \left( \tau_{n_w}(\theta) \bSigma + \theta \Ib_p \right)^{-1} \hat{\bSigma}_{n}  \right] + o_p(1)
    \\
    =\ & \frac{n^{\train}}{n_w} \cdot \frac{\kappa \sigma_{\bbeta}^2}{p} \cdot \tr\left[ \left( \tau_{n_w}(\theta) \bSigma + \theta \Ib_p \right)^{-1} \hat{\bSigma}_{n}^2 \right] - 2 \cdot \frac{n^{\train}}{n_w} \cdot \frac{\kappa \sigma_{\bbeta}^2}{p} \cdot \tr\left[ \left( \tau_{n_w}(\theta) \bSigma + \theta \Ib_p \right)^{-1} \hat{\bSigma}_{n} \bSigma \right]
    \\
    &+ \frac{n^{\train}}{n_w} \cdot \frac{\kappa \sigma_{\bbeta}^2}{p} \cdot \tr \left[ \left( \tau_{n_w}(\theta) \bSigma + \theta \Ib_p \right)^{-1} \bSigma^2 \right] + \frac{n^{\train}}{n_w} \cdot \sigma_{\bepsilon}^2 \cdot \frac{1}{n} \tr\left[ \left( \tau_{n_w}(\theta) \bSigma + \theta \Ib_p \right)^{-1} \hat{\bSigma}_{n} \right] + o_p(1).
\end{align*}
It follows that 
\begin{align*}
    \Omega_{\rm sum}^{(1)} 
    =\ & \Omega_{\rm sum}^{(2)} + \Omega_{\rm sum}^{(3)} - \Omega_{\rm sum}^{(4)} + o_p(1)
    \\
    =\ & \frac{n^{\train}}{n_w} \cdot \frac{\kappa \sigma_{\bbeta}^2}{p} \cdot \tr \left[ \left( \tau_{n_w}(\theta) \bSigma + \theta \Ib_p \right)^{-1} \hat{\bSigma}_{n}^2 \right] + \frac{n^{\train}}{n_w} \cdot \sigma_{\bepsilon}^2 \cdot \frac{1}{n} \tr \left[ \left( \tau_{n_w}(\theta) \bSigma + \theta \Ib_p \right)^{-1} \hat{\bSigma}_{n} \right]
    \\
    &- \frac{n^{\train}}{n_w} \cdot \frac{\kappa \sigma_{\bbeta}^2}{p} \cdot \tr\left[ \left( \tau_{n_w}(\theta) \bSigma + \theta \Ib_p \right)^{-1} \hat{\bSigma}_{n}^2 \right] + 2 \cdot \frac{n^{\train}}{n_w} \cdot \frac{\kappa \sigma_{\bbeta}^2}{p} \cdot \tr\left[ \left( \tau_{n_w}(\theta) \bSigma + \theta \Ib_p \right)^{-1} \hat{\bSigma}_{n} \bSigma \right]
    \\
    &- \frac{n^{\train}}{n_w} \cdot \frac{\kappa \sigma_{\bbeta}^2}{p} \cdot \tr \left[ \left( \tau_{n_w}(\theta) \bSigma + \theta \Ib_p \right)^{-1} \bSigma^2 \right] - \frac{n^{\train}}{n_w} \cdot \sigma_{\bepsilon}^2 \cdot \frac{1}{n} \tr\left[ \left( \tau_{n_w}(\theta) \bSigma + \theta \Ib_p \right)^{-1} \hat{\bSigma}_{n} \right] + o_p(1)
    \\
    =\ & 2 \cdot \frac{n^{\train}}{n_w} \cdot \frac{\kappa \sigma_{\bbeta}^2}{p} \cdot \tr\left[ \left( \tau_{n_w}(\theta) \bSigma + \theta \Ib_p \right)^{-1} \hat{\bSigma}_{n} \bSigma \right] - \frac{n^{\train}}{n_w} \cdot \frac{\kappa \sigma_{\bbeta}^2}{p} \cdot \tr \left[ \left( \tau_{n_w}(\theta) \bSigma + \theta \Ib_p \right)^{-1} \bSigma^2 \right] + o_p(1)
    \\
    =\ & \frac{n^{\train}}{n_w} \cdot \frac{\kappa \sigma_{\bbeta}^2}{p} \cdot \tr\left[ \left( \tau_{n_w}(\theta) \bSigma + \theta \Ib_p \right)^{-1} \bSigma^2 \right] + o_p(1),
\end{align*}
where the last equation follows the fact that $\hat{\bSigma}_{n} \asymp \bSigma$. 
This completes the computation of $\Omega_{\rm sum}^{(1)}$, we now proceed to compute $\Omega_{\rm sum}^{(2)}$. Note that  
\begin{align*}
    \Omega_{\rm sum}^{(2)}
    =\ &  \left[ \frac{n^{\train}}{n} \Xb^{\T}\yb + \sqrt{\frac{n^{\train} (n - n^{\train})}{n^2}} (\Cov(\Xb^{\T} \yb))^{1/2} \hb \right]^{\T} 
    \\
    & \qquad \qquad \left( \Wb^{\T} \Wb + \theta n_w \Ib_{p} \right)^{-1} \bSigma \left( \Wb^{\T} \Wb + \theta n_w \Ib_{p} \right)^{-1} \left[ \frac{n^{\train}}{n} \Xb^{\T}\yb + \sqrt{\frac{n^{\train} (n - n^{\train})}{n^2}} (\Cov(\Xb^{\T} \yb))^{1/2} \hb \right]
    \\
    =\ &  \frac{1}{n_w^2} \cdot \left[ \frac{n^{\train}}{n} \Xb^{\T}\yb + \sqrt{\frac{n^{\train} (n - n^{\train})}{n^2}} (\Cov(\Xb^{\T} \yb))^{1/2} \hb \right]^{\T} 
    \\
    & \qquad \qquad \left( \Wb^{\T} \Wb + \theta n_w \Ib_{p} \right)^{-1} \bSigma \left( \Wb^{\T} \Wb + \theta n_w \Ib_{p} \right)^{-1} \left[ \frac{n^{\train}}{n} \Xb^{\T}\yb + \sqrt{\frac{n^{\train} (n - n^{\train})}{n^2}} (\Cov(\Xb^{\T} \yb))^{1/2} \hb \right].
\end{align*}
By Lemma \ref{lemma:DE_second_order}, we have
\begin{align*}
    \Omega_{\rm sum}^{(2)} =\ & \frac{1}{n_w^2} \cdot \left[ \frac{n^{\train}}{n} \Xb^{\T}\yb + \sqrt{\frac{n^{\train} (n - n^{\train})}{n^2}} (\Cov(\Xb^{\T} \yb))^{1/2} \hb \right]^{\T} 
    \\
    & \qquad \qquad \Bb(\theta) \left[ \frac{n^{\train}}{n} \Xb^{\T}\yb + \sqrt{\frac{n^{\train} (n - n^{\train})}{n^2}} (\Cov(\Xb^{\T} \yb))^{1/2} \hb \right] + o_p(1),
\end{align*}
where 
\begin{align} \label{eqn:B}
    \Bb(\theta) = (\rho_{n_w}(\theta) + 1) \cdot \left( \tau_{n_w}(\theta)\bSigma  + \theta \Ib_{p}\right)^{-1} \bSigma \left(\tau_{n_w}(\theta)\bSigma  + \theta \Ib_{p}\right)^{-1}. 
\end{align}
Plugging in the Equation \eqref{eqn:covariance}, we have
\begin{align*}
    \Omega_{\rm sum}^{(2)} =\ & \frac{1}{n_w^2} \cdot \left[ \frac{n^{\train}}{n} \Xb^{\T}\yb + \sqrt{\frac{n^{\train} (n - n^{\train})}{n^2}} (\Cov(\Xb^{\T} \yb))^{1/2} \hb \right]^{\T} 
    \\
    & \qquad \qquad \Bb(\theta) \left[ \frac{n^{\train}}{n} \Xb^{\T}\yb + \sqrt{\frac{n^{\train} (n - n^{\train})}{n^2}} (\Cov(\Xb^{\T} \yb))^{1/2} \hb \right] + o_p(1)
    \\
    =\ & \frac{1}{n_w^2} \cdot \left[ \frac{n^{\train}}{n} \Xb^{\T}\Xb \bbeta + \frac{n^{\train}}{n} \Xb^{\T} \bepsilon + \sqrt{\frac{n^{\train} (n - n^{\train})}{n^2}} (\Cov(\Xb^{\T} \yb))^{1/2} \hb \right]^{\T} 
    \\
    & \qquad \qquad \Bb(\theta) \left[ \frac{n^{\train}}{n} \Xb^{\T}\Xb \bbeta + \frac{n^{\train}}{n} \Xb^{\T} \bepsilon + \sqrt{\frac{n^{\train} (n - n^{\train})}{n^2}} (\Cov(\Xb^{\T} \yb))^{1/2} \hb \right] + o_p(1)
    \\
    =\ & \frac{1}{n_w^2} \cdot \left( \frac{n^{\train}}{n} \right)^2 \bbeta^{\T} \Xb^{\T} \Xb \Bb(\theta) \Xb^{\T} \Xb \bbeta 
    + \frac{1}{n_w^2} \cdot \left( \frac{n^{\train}}{n} \right)^2 \bepsilon^{\T} \Xb \Bb(\theta) \Xb^{\T} \bepsilon
    \\
    \qquad &+ \frac{n^{\train} (n - n^{\train})}{n^2 \cdot n_w^2} \hb^{\T} (\Cov(\Xb^{\T} \yb))^{1/2} \Bb(\theta) (\Cov(\Xb^{\T} \yb))^{1/2} \hb + o_p(1).
\end{align*}
Lemma B.26 in \cite{bai2010spectral} and Lemma 5 in \cite{zhao2024estimating} imply that
\begin{align*}
    \Omega_{\rm sum}^{(2)} =\ & \frac{1}{n_w^2} \cdot \left( \frac{n^{\train}}{n} \right)^2 \frac{\kappa \sigma_{\bbeta}^2}{p} \cdot  \tr \left( \Xb^{\T} \Xb \Bb(\theta) \Xb^{\T} \Xb \right)
    + \frac{1}{n_w^2} \cdot \left( \frac{n^{\train}}{n} \right)^2 \sigma_{\bepsilon}^2 \cdot \tr \left( \Xb \Bb(\theta) \Xb^{\T} \right)
    \\
    &+ \frac{n^{\train} (n - n^{\train})}{n^2 \cdot n_w^2} \tr \left[ (\Cov(\Xb^{\T} \yb))^{1/2} \Bb(\theta) (\Cov(\Xb^{\T} \yb))^{1/2} \right] + o_p(1)
    \\
    =\ & \left( \frac{n^{\train}}{n_w} \right)^2 \frac{\kappa \sigma_{\bbeta}^2}{p} \cdot  \tr \left( \Bb(\theta) \hat{\bSigma}_{n}^2 \right)
    + \frac{1}{n} \cdot \left( \frac{n^{\train}}{n_w} \right)^2 \sigma_{\bepsilon}^2 \cdot \tr \left( \Bb(\theta) \hat{\bSigma}_{n} \right)
    \\
    &+ \frac{n^{\train} (n - n^{\train})}{n_w^2} \tr \left[ \Bb(\theta) \left( \frac{1}{n} \Xb^{\T} \yb - \bSigma \bbeta \right) \left( \frac{1}{n} \Xb^{\T} \yb - \bSigma \bbeta \right)^{\T} \right] + o_p(1).
\end{align*}
Similarly, we have
\begin{align*}
    &\tr \left[ \Bb(\theta) \left( \frac{1}{n} \Xb^{\T} \yb - \bSigma \bbeta \right) \left( \frac{1}{n} \Xb^{\T} \yb - \bSigma \bbeta \right)^{\T} \right]
    \\
    =\ & \frac{\kappa \sigma_{\bbeta}^2}{p} \cdot \tr\left( \Bb(\theta) \hat{\bSigma}_{n}^2 \right) - 2 \cdot \frac{\kappa \sigma_{\bbeta}^2}{p} \cdot \tr\left( \Bb(\theta) \hat{\bSigma}_{n} \bSigma \right)
    + \frac{\kappa \sigma_{\bbeta}^2}{p} \cdot \tr \left( \Bb(\theta) \bSigma^2 \right) + \sigma_{\bepsilon}^2 \cdot \frac{1}{n} \tr\left( \Bb(\theta) \hat{\bSigma}_{n} \right).
\end{align*}
It follows that 
\begin{align*}
    \Omega_{\rm sum}^{(2)} =\ & \left( \frac{n^{\train}}{n_w} \right)^2 \frac{\kappa \sigma_{\bbeta}^2}{p} \cdot  \tr \left( \Bb(\theta) \hat{\bSigma}_{n}^2 \right)
    + \frac{1}{n} \cdot \left( \frac{n^{\train}}{n_w} \right)^2 \sigma_{\bepsilon}^2 \cdot \tr \left( \Bb(\theta) \hat{\bSigma}_{n} \right)
    \\
    &+ \frac{n^{\train} (n - n^{\train})}{n_w^2} \left[ \frac{\kappa \sigma_{\bbeta}^2}{p} \cdot \tr\left( \Bb(\theta) \hat{\bSigma}_{n}^2 \right) - 2 \cdot \frac{\kappa \sigma_{\bbeta}^2}{p} \cdot \tr\left( \Bb(\theta) \hat{\bSigma}_{n} \bSigma \right) \right.
    \\
    &+ \left. \frac{\kappa \sigma_{\bbeta}^2}{p} \cdot \tr \left( \Bb(\theta) \bSigma^2 \right) + \sigma_{\bepsilon}^2 \cdot \frac{1}{n} \tr\left( \Bb(\theta) \hat{\bSigma}_{n} \right) \right] + o_p(1)
    \\
    =\ & \frac{n^{\train}}{n_w} \cdot \frac{n}{n_w} \cdot \frac{\kappa \sigma_{\bbeta}^2}{p} \cdot  \tr \left( \Bb(\theta) \hat{\bSigma}_{n}^2 \right) + \frac{n^{\train}}{n_w} \cdot \frac{n}{n_w} \cdot \frac{\sigma_{\bepsilon}^2}{n} \cdot \tr\left( \Bb(\theta) \hat{\bSigma}_{n} \right)
    \\
    &- \frac{n^{\train} (n - n^{\train})}{n_w^2} \cdot \frac{\kappa \sigma_{\bbeta}^2}{p} \cdot \tr\left( \Bb(\theta) \bSigma^2 \right) + o_p(1)
    \\
    \overset{(g)}{=}\ & \frac{n^{\train}}{n_w} \cdot \frac{n}{n_w} \cdot \frac{\kappa \sigma_{\bbeta}^2}{p} \cdot  \left[\frac{p}{n} \cdot \frac{1}{p} \tr \left(\bSigma \right) \cdot \tr \left(\Bb(\theta) \bSigma \right) + \tr \left(\Bb(\theta) \bSigma^2 \right) \right] + \frac{n^{\train}}{n_w^2} \cdot \sigma_{\bepsilon}^2 \cdot \tr\left(\Bb(\theta) \hat{\bSigma}_{n} \right)
    \\
    &- \frac{n^{\train} (n - n^{\train})}{n_w^2} \cdot \frac{\kappa \sigma_{\bbeta}^2}{p} \cdot \tr\left( \Bb(\theta) \bSigma^2 \right) + o_p(1)
    \\
    =\ & \left( \frac{n^{\train}}{n_w} \right)^2 \cdot \kappa \sigma_{\bbeta}^2 \cdot \frac{p}{n^{\train}} \cdot \frac{1}{p} \tr \left(\bSigma \right) \cdot \frac{1}{p}\tr \left(\Bb(\theta) \bSigma \right) + \left( \frac{n^{\train}}{n_w} \right)^2 \cdot \frac{\kappa \sigma_{\bbeta}^2}{p}\tr \left(\Bb(\theta) \bSigma^2 \right)
    \\
    &+ \frac{n^{\train}}{n_w^2} \cdot \sigma_{\bepsilon}^2 \cdot \tr\left(\Bb(\theta) {\bSigma} \right) + o_p(1),
\end{align*}
where Equation (g) follows from Lemma \ref{lemma:second_moment}.
By the continuous mapping theorem, we conclude that 
\begin{align*}
    R^2_{\rm sum, R}(\theta) 
    = \frac{n^{\valid}}{\|\yb^{\valid}\|_{2}^2} \cdot \frac{n^{\train}}{p} \cdot \kappa \sigma_{\bbeta}^2 \cdot \frac{\left( \tr \left[ \left( \tau_{n_w}(\theta) \bSigma + \theta \Ib_p \right)^{-1} \bSigma^2 \right]\right)^2}{\tr \left(\bSigma \right) \cdot \tr \left(\Bb(\theta) \bSigma \right)/h^2 + n^{\train} \cdot \tr \left(\Bb(\theta) \bSigma^2 \right)} + o_p(1),
\end{align*}
with $\Bb(\theta)$ and $h^2$ being defined in Equations \eqref{eqn:B} and \eqref{eqn:h^2}, respectively. 

\paragraph{\texorpdfstring{Part II: The limit of $\mathbf{R^2_{\rm ind, R}(\theta)}$.}{Individual}} 
Recall that 
\begin{align*}
    R^2_{\rm ind, R}(\theta) = \frac{n^{\valid}}{\|\yb^{\valid}\|_{2}^2} \cdot \frac{\left\langle {\Xb^{\valid}{}^{\T}\yb^{\valid}}, (\Wb^{\T} \Wb + \theta n_w \Ib_{p})^{-1} \Xb^{\train}{}^{\T} \yb^{\train} \right\rangle^2}{n^{\valid}{}^{2} \cdot \| (\Wb^{\T} \Wb + \theta n_w \Ib_{p})^{-1} \Xb^{\train}{}^{\T} \yb^{\train} \|_{\bSigma}^2} = \frac{n^{\valid}}{\|\yb^{\valid}\|_{2}^2} \cdot \frac{\left( \Omega_{\rm ind}^{(1)} \right)^{2}}{\Omega_{\rm ind}^{(2)}},
\end{align*}
where $\Omega_{\rm ind}^{(1)}$ and $\Omega_{\rm ind}^{(2)}$ are defined as
\begin{align*}
    \Omega_{\rm ind}^{(1)} =\ & \frac{1}{n^{\valid}} \cdot \left\langle {\Xb^{\valid}{}^{\T}\yb^{\valid}}, (\Wb^{\T} \Wb + \theta n_w \Ib_{p})^{-1} \Xb^{\train}{}^{\T} \yb^{\train} \right\rangle \quad \mbox{and}
    \\
    \Omega_{\rm ind}^{(2)} =\ & \left\| (\Wb^{\T} \Wb + \theta n_w \Ib_{p})^{-1} \Xb^{\train}{}^{\T} \yb^{\train} \right\|_{\bSigma}^2, 
\end{align*}
respectively. For $\Omega_{\rm ind}^{(1)}$, we have 
\begin{align*}
    \Omega_{\rm ind}^{(1)}
    =\ & \frac{1}{n^{\valid}} \left(\Xb^{\train}{}^{\T} \yb^{\train} \right)^{\T} \left( \Wb^{\T} \Wb + \theta n_w \Ib_{p} \right)^{-1}  \Xb^{\valid}{}^{\T} \yb^{\valid}
    \\
    =\ & \frac{n^{\train}}{n_w} \cdot \left(\frac{\Xb^{\train}{}^{\T} \yb^{\train}}{n^{\train}} \right)^{\T} \left( \Wb^{\T} \Wb/n_w + \theta \Ib_{p} \right)^{-1}  \frac{\Xb^{\valid}{}^{\T} \yb^{\valid}}{n^{\valid}}.
\end{align*}
Lemma \ref{lemma:DE_second_order} indicates that 
\begin{align*}
    \Omega_{\rm ind}^{(1)} =\ & \frac{n^{\train}}{n_w} \cdot \left(\frac{\Xb^{\train}{}^{\T} \yb^{\train}}{n^{\train}} \right)^{\T} \left( \tau_{n_w}(\theta) \bSigma + \theta \Ib_p \right)^{-1}  \frac{\Xb^{\valid}{}^{\T} \yb^{\valid}}{n^{\valid}} + o_p(1).
\end{align*}
Define independent $p-$dimensional vectors $\bepsilon^{\train}$ and $\bepsilon^{\valid}$ satisfying the following equations and plugging into the equation above
\begin{align*}
    \yb^{\train} = \Xb^{\train} \bbeta + \bepsilon^{\train} \quad \mbox{and} \quad \yb^{\valid} = \Xb^{\valid} \bbeta + \bepsilon^{\valid}. 
\end{align*}
Lemma B.26 in \cite{bai2010spectral} and Lemma 5 in \cite{zhao2024estimating} indicate that 
\begin{align*}
    \Omega_{\rm ind}^{(1)} =\ & \frac{n^{\train}}{n_w} \cdot \left(\frac{\Xb^{\train}{}^{\T} \Xb^{\train} \bbeta}{n^{\train}} \right)^{\T} \left( \tau_{n_w}(\theta) \bSigma + \theta \Ib_p \right)^{-1}  \frac{\Xb^{\valid}{}^{\T} \Xb^{\valid} \bbeta}{n^{\valid}} + o_p(1)
    \\
    =\ & \frac{n^{\train}}{n_w} \cdot \frac{\kappa \sigma_{\bbeta}^2}{p} \cdot \tr \left[ \frac{\Xb^{\train}{}^{\T} \Xb^{\train}}{n^{\train}}  \left( \tau_{n_w}(\theta) \bSigma + \theta \Ib_p \right)^{-1}  \frac{\Xb^{\valid}{}^{\T} \Xb^{\valid}}{n^{\valid}} \right] + o_p(1).
\end{align*}
Based on the fact that $\hat{\bSigma}_{n^{\train}} = {\Xb^{\train}{}^{\T} \Xb^{\train}}/{n^{\train}} \asymp \bSigma$ and $\hat{\bSigma}_{n^{\valid}} = {\Xb^{\valid}{}^{\T} \Xb^{\valid}}/{n^{\valid}} \asymp \bSigma$, we have
\begin{align*}
    \Omega_{\rm ind}^{(1)} = \frac{n^{\train}}{n_w} \cdot \frac{\kappa \sigma_{\bbeta}^2}{p} \cdot \tr \left[ \left( \tau_{n_w}(\theta) \bSigma + \theta \Ib_p \right)^{-1} \bSigma^2 \right] + o_p(1). 
\end{align*}
This demonstrates the limit of $\Omega_{\rm ind}^{(1)}$. For $\Omega_{\rm ind}^{(2)}$, we have 
\begin{align*}
    \Omega_{\rm ind}^{(2)} 
    =\ & \left(\Xb^{\train}{}^{\T} \yb^{\train} \right)^{\T} \left( \Wb^{\T} \Wb + \theta n_w \Ib_{p} \right)^{-1} \bSigma \left( \Wb^{\T} \Wb + \theta n_w \Ib_{p} \right)^{-1}  \Xb^{\train}{}^{\T} \yb^{\train}
    \\
    =\ & \left( \frac{\Xb^{\train}{}^{\T} \yb^{\train}}{n_w} \right)^{\T} \left( \Wb^{\T} \Wb/n_w + \theta \Ib_{p} \right)^{-1} \bSigma \left( \Wb^{\T} \Wb/n_w + \theta \Ib_{p} \right)^{-1} \frac{\Xb^{\train}{}^{\T} \yb^{\train}}{n_w}.
\end{align*}
Lemma \ref{lemma:DE_second_order} indicates that
\begin{align*}
    \Omega_{\rm ind}^{(2)} =\ & \left( \frac{\Xb^{\train}{}^{\T} \yb^{\train}}{n_w} \right)^{\T} \Bb(\theta) \frac{\Xb^{\train}{}^{\T} \yb^{\train}}{n_w} + o_p(1)
    \\
    =\ & \left( \frac{n^{\train}}{n_w} \right)^2 \cdot \left( \frac{\Xb^{\train}{}^{\T} \yb^{\train}}{n^{\train}} \right)^{\T} \Bb(\theta) \frac{\Xb^{\train}{}^{\T} \yb^{\train}}{n^{\train}} + o_p(1),
\end{align*}
with $\Bb(\theta)$ being defined in \eqref{eqn:B}. 
Moreover, plugging in $\bepsilon^{\train}$ and $\bepsilon^{\valid}$ defined above, we have
\begin{align*}
    \Omega_{\rm ind}^{(2)} =\ & \left( \frac{n^{\train}}{n_w} \right)^2 \cdot \left( \frac{\Xb^{\train}{}^{\T} \yb^{\train}}{n^{\train}} \right)^{\T} \Bb(\theta) \frac{\Xb^{\train}{}^{\T} \yb^{\train}}{n^{\train}} + o_p(1)
    \\
    =\ & \left( \frac{n^{\train}}{n_w} \right)^2 \cdot \left( \frac{\Xb^{\train}{}^{\T} \Xb^{\train} \bbeta + \Xb^{\train} \bepsilon^{\train}}{n^{\train}} \right)^{\T} \Bb(\theta) \frac{\Xb^{\train}{}^{\T} \Xb^{\train} \bbeta + \Xb^{\train})^{\T} \bepsilon^{\train}}{n^{\train}} + o_p(1)
    \\
    \overset{(h)}{=}\ & \left( \frac{n^{\train}}{n_w} \right)^2 \cdot \left( \bbeta^{\T} \frac{\Xb^{\train}{}^{\T} \Xb^{\train} \Bb(\theta) \Xb^{\train}{}^{\T} \Xb^{\train}}{n^{\train}{}^2} \bbeta + \frac{1}{n^{\train}{}^2} \cdot \bepsilon^{\train}{}^{\T} \Xb^{\train} \Bb(\theta) \Xb^{\train}{}^{\T} \bepsilon^{\train} \right) + o_p(1)
    \\
    \overset{(i)}{=}\ & \left( \frac{n^{\train}}{n_w} \right)^2 \cdot \frac{\kappa \sigma_{\bbeta}^2}{p} \cdot \tr \left(\frac{\Xb^{\train}{}^{\T} \Xb^{\train} \Bb(\theta) \Xb^{\train}{}^{\T} \Xb^{\train}}{n^{\train}{}^2} \right) + \left( \frac{n^{\train}}{n_w} \right)^2 \cdot \frac{\sigma_{\bepsilon}^2}{n^{\train}} \cdot \tr\left( \Bb(\theta) \frac{\Xb^{\train}{}^{\T} \Xb^{\train}}{n^{\train}} \right) + o_p(1),
\end{align*}
where Equations (h) and (i) follow from Lemma B.26 in \cite{bai2010spectral} and Lemma 5 in \cite{zhao2024estimating}. 
Note that $\Omega_{\rm ind}^{(2)}$ can be equivalently written as
\begin{align*}
    \Omega_{\rm ind}^{(2)} =\ & \left( \frac{n^{\train}}{n_w} \right)^2 \cdot \frac{\kappa \sigma_{\bbeta}^2}{p} \cdot \tr \left( \Bb(\theta) \hat{\bSigma}_{n^{\train}}^2 \right) + \left( \frac{n^{\train}}{n_w} \right)^2 \cdot \frac{\sigma_{\bepsilon}^2}{n^{\train}} \cdot \tr\left( \Bb(\theta) \hat{\bSigma}_{n^{\train}} \right) + o_p(1)
    \\
    =\ & \left( \frac{n^{\train}}{n_w} \right)^2 \cdot \frac{\kappa \sigma_{\bbeta}^2}{p} \cdot \tr \left( \Bb(\theta) \hat{\bSigma}_{n^{\train}}^2 \right) +  \frac{n^{\train}}{n_w^2} \cdot \sigma_{\bepsilon}^2 \cdot \tr\left( \Bb(\theta) \hat{\bSigma}_{n^{\train}} \right) + o_p(1).
\end{align*}
Lemma \ref{lemma:second_moment} implies that
\begin{align*}
    \Omega_{\rm ind}^{(2)} =\ & \left( \frac{n^{\train}}{n_w} \right)^2 \cdot \frac{\kappa \sigma_{\bbeta}^2}{p} \cdot \left[\frac{p}{n^{\train}} \cdot \frac{1}{p} \tr \left(\bSigma \right) \cdot \tr \left(\Bb(\theta) \bSigma \right) + \tr \left(\Bb(\theta) \bSigma^2 \right) \right] 
    \\
    & +  \frac{n^{\train}}{n_w^2} \cdot \sigma_{\bepsilon}^2 \cdot \tr\left( \Bb(\theta) \hat{\bSigma}_{n^{\train}} \right) + o_p(1)
    \\
    =\ & \left( \frac{n^{\train}}{n_w} \right)^2 \cdot \kappa \sigma_{\bbeta}^2 \cdot \frac{p}{n^{\train}} \cdot \frac{1}{p} \tr \left(\bSigma \right) \cdot \frac{1}{p}\tr \left(\Bb(\theta) \bSigma \right) + \left( \frac{n^{\train}}{n_w} \right)^2 \cdot \frac{\kappa \sigma_{\bbeta}^2}{p}\tr \left(\Bb(\theta) \bSigma^2 \right)
    \\
    &+ \frac{n^{\train}}{n_w^2} \cdot \sigma_{\bepsilon}^2 \cdot \tr\left(\Bb(\theta) \bSigma \right) + o_p(1). 
\end{align*}
By the continuous mapping theorem, we have $R^2_{\rm ind, R}(\theta)$ as follows.
\begin{align*}
    R^2_{\rm ind, R}(\theta) =\ & \frac{n^{\valid}}{\|\yb^{\valid}\|_{2}^2} \cdot \frac{\left( {\kappa \sigma_{\bbeta}^2}/{p} \cdot \tr \left[ \left( \tau_{n_w}(\theta) \bSigma + \theta \Ib_p \right)^{-1} \bSigma^2 \right]\right)^2}{\kappa \sigma_{\bbeta}^2/p \cdot {p}/{n^{\train}} \cdot \tr \left(\bSigma \right) \cdot {1}/{p} \cdot \tr \left(\Bb(\theta) \bSigma \right) + {\kappa \sigma_{\bbeta}^2}/{p} \tr \left(\Bb(\theta) \bSigma^2 \right) + \sigma_{\bepsilon}^2/n^{\train} \cdot \tr\left( \Bb(\theta) \bSigma \right)} \\
    &+ o_p(1) 
    \\
    =\ & \frac{n^{\valid}}{\|\yb^{\valid}\|_{2}^2} \cdot \frac{n^{\train}}{p} \cdot \kappa \sigma_{\bbeta}^2 \cdot \frac{\left( \tr \left[ \left( \tau_{n_w}(\theta) \bSigma + \theta \Ib_p \right)^{-1} \bSigma^2 \right]\right)^2}{\tr \left(\bSigma \right) \cdot \tr \left(\Bb(\theta) \bSigma \right)/h^2 + n^{\train} \cdot \tr \left(\Bb(\theta) \bSigma^2 \right)} + o_p(1).
\end{align*}
Therefore, the resulting $R^2_{\rm sum, R}(\theta)$ is asymptotically equivalent to $R^2_{\rm ind, R}(\theta)$ for all $\theta \in \RR_{+}$.

\section{Proof of Theorem \ref{thm:marginal_screen}}
\label{sec:proof_thm_marginal_screen}

In this section, we outline the proof of Theorem \ref{thm:marginal_screen}. As the proof closely parallels that of Theorem \ref{thm:reference_panel}, we omit most of the details and focus only on the key differences.

\paragraph{\texorpdfstring{Part I: The limit of $\mathbf{R^2_{\rm sum, M}(\theta)}$.}{Summery}}

Similar to previous section, we first derive the $R^2_{\rm sum, M}(\Theta)$
\begin{align*}
    R^2_{\rm sum, M}(\Theta) = \frac{n^{\valid}}{\|\yb^{\valid}\|_{2}^2} \cdot \frac{\left \langle {\sbb^{\valid}}, \Ab(\Theta) \sbb^{\train} \right \rangle^{2} }{n^{\valid}{}^{2} \cdot \| \Ab(\Theta) \sbb^{\train} \|_{\bSigma}^{2}} 
\end{align*}
with $\Ab(\Theta)$ being defined in Equation \eqref{eqn:special_linear_MG}. 
Following similar steps in the previous section by replacing both $\left( \Wb^{\T}\Wb + \theta \Ib_{p} \right)^{-1}$ and $\left( \tau_{n_w}(\theta) \bSigma + \theta \Ib_p \right)^{-1}$ to the deterministic matrix $\Ab(\Theta)$, we have
\begin{align*}
    R^2_{\rm sum, M}(\Theta)
    =\ & \frac{n^{\valid}}{\|\yb^{\valid}\|_{2}^2} \cdot \frac{n^{\train}}{p} \cdot \kappa \sigma_{\bbeta}^2 \cdot \frac{\left[ \tr \left( \Ab(\Theta) \bSigma^2 \right)\right]^2}{\tr \left(\bSigma \right) \cdot \tr \left(\Ab(\Theta) \bSigma \right)/h^2 + n^{\train} \cdot \tr \left(\Ab(\Theta) \bSigma^2 \right)} + o_p(1).
\end{align*}

\paragraph{\texorpdfstring{Part II: The limit of $\mathbf{R^2_{\rm ind, M}(\theta)}$.}{Individual}}
We compute 
\begin{align*}
    R^2_{\rm ind, M}(\theta) = \frac{n^{\valid}}{\|\yb^{\valid}\|_{2}^2} \cdot \frac{\left\langle {\Xb^{\valid}{}^{\T}\yb^{\valid}}, \Ab(\Theta) \Xb^{\train}{}^{\T} \yb^{\train} \right\rangle^2}{n^{\valid}{}^{2} \cdot \| \Ab(\Theta) \Xb^{\train}{}^{\T} \yb^{\train} \|_{\bSigma}^2}.
\end{align*}
Following similar steps in the previous section as above, we have 
\begin{align*}
    R^2_{\rm ind, M}(\Theta)
    =\ & \frac{n^{\valid}}{\|\yb^{\valid}\|_{2}^2} \cdot \frac{n^{\train}}{p} \cdot \kappa \sigma_{\bbeta}^2 \cdot \frac{\left[ \tr \left( \Ab(\Theta) \bSigma^2 \right)\right]^2}{\tr \left(\bSigma \right) \cdot \tr \left(\Ab(\Theta) \bSigma \right)/h^2 + n^{\train} \cdot \tr \left(\Ab(\Theta) \bSigma^2 \right)} + o_p(1).
\end{align*}

\section{Proof of Theorems \ref{thm:general_linear} and \ref{thm:ensemble}}
\label{sec:proof_thm_general_linear}

In this section, we present the proof of Theorem \ref{thm:general_linear}, emphasizing the differences from the proof of Theorem \ref{thm:reference_panel}. 
We also clarify the reasoning behind Equation \eqref{eqn:cond-A}. 
The proof of Theorem \ref{thm:ensemble} then follows the same structure, with $\Ab\left( \Wb^{\T}\Wb, \Theta \right)$ in Theorem \ref{thm:general_linear}  replaced by $\sum_{j=1}^{k} \omega_{j} \Ab_{j} \left( \Wb^{\T}\Wb, \Theta_{j} \right)$.

\paragraph{\texorpdfstring{Part I: The limit of $\mathbf{R^2_{\rm sum, G}(\theta)}$.}{Summery}}
Recall that 
\begin{align*}
    R^2_{\rm sum, G}(\theta) = \frac{n^{\valid}}{\|\yb^{\valid}\|_{2}^2} \cdot \frac{\left \langle {\sbb^{\valid}}, \Ab\left( \Wb^{\T}\Wb, \Theta \right) \sbb^{\train} \right \rangle^{2} }{n^{\valid}{}^{2} \cdot \| \Ab\left( \Wb^{\T}\Wb, \Theta \right) \sbb^{\train} \|_{\bSigma}^{2}} = \frac{n^{\valid}}{\|\yb^{\valid}\|_{2}^2} \cdot \frac{\left( \Psi_{\rm sum}^{(1)} \right)^{2}}{\Psi_{\rm sum}^{(2)}}, 
\end{align*}
where $\Psi_{\rm sum}^{(1)}$ and $\Psi_{\rm sum}^{(2)}$ are defined as
\begin{align*}
    \Psi_{\rm sum}^{(1)} 
    = \frac{1}{n^{\valid}} \cdot \left \langle {\sbb^{\valid}}, \Ab\left( \Wb^{\T}\Wb, \Theta \right) \sbb^{\train} \right \rangle
    \quad \mbox{and} \quad
    \Psi_{\rm sum}^{(2)} = \left\| \Ab\left( \Wb^{\T}\Wb, \Theta \right) \sbb^{\train} \right\|_{\bSigma}^2, 
\end{align*}
respectively. We can derive the limit of $\Psi_{\rm sum}^{(1)}$ as in previous sections. By Equation \eqref{eqn:cond-A}, we have
\begin{align*}
    \Psi_{\rm sum}^{(1)}
    = \frac{1}{n^{\valid} \cdot n_w} \cdot \left \langle {\sbb^{\valid}}, \Db(\Theta) \sbb^{\train} \right \rangle + o_p(1)
    = \frac{n^{\train}}{n_w} \cdot \frac{\kappa \sigma_{\bbeta}^2}{p} \cdot \tr\left( \Db(\Theta) \bSigma^2 \right)+ o_p(1).
\end{align*}
For $\Psi_{\rm sum}^{(2)}$, we have 
\begin{align*}
    \Psi_{\rm sum}^{(2)}
    =\ &  \left[ \frac{n^{\train}}{n} \Xb^{\T}\yb + \sqrt{\frac{n^{\train} (n - n^{\train})}{n^2}} (\Cov(\Xb^{\T} \yb))^{1/2} \hb \right]^{\T} 
    \\
    & \qquad \qquad \Ab\left( \Wb^{\T}\Wb, \Theta \right) \bSigma \Ab\left( \Wb^{\T}\Wb, \Theta \right) \left[ \frac{n^{\train}}{n} \Xb^{\T}\yb + \sqrt{\frac{n^{\train} (n - n^{\train})}{n^2}} (\Cov(\Xb^{\T} \yb))^{1/2} \hb \right]
    \\
    =\ &  \frac{1}{n_w^2} \cdot \left[ \frac{n^{\train}}{n} \Xb^{\T}\yb + \sqrt{\frac{n^{\train} (n - n^{\train})}{n^2}} (\Cov(\Xb^{\T} \yb))^{1/2} \hb \right]^{\T} 
    \\
    & \qquad \qquad \Eb(\Theta) \left[ \frac{n^{\train}}{n} \Xb^{\T}\yb + \sqrt{\frac{n^{\train} (n - n^{\train})}{n^2}} (\Cov(\Xb^{\T} \yb))^{1/2} \hb \right] + o_p(1),
\end{align*}
where the last equation follows from \eqref{eqn:cond-A}. 
Following similar steps in the previous sections, we have
\begin{align*}
    \Psi_{\rm sum}^{(2)} 
    =\ & \left( \frac{n^{\train}}{n_w} \right)^2 \cdot \kappa \sigma_{\bbeta}^2 \cdot \frac{p}{n^{\train}} \cdot \frac{1}{p} \tr \left(\bSigma \right) \cdot \frac{1}{p}\tr \left(\Eb(\Theta) \bSigma \right) + \left( \frac{n^{\train}}{n_w} \right)^2 \cdot \frac{1}{p}\tr \left(\Eb(\Theta) \bSigma^2 \right) 
    \\
    &+  \frac{n^{\train}}{n_w^2} \cdot \sigma_{\bepsilon}^2 \cdot \tr\left( \Eb(\Theta) {\bSigma} \right) + o_p(1).
\end{align*}
Therefore, the limit of $R^2_{\rm sum, G}(\theta)$ is given by
\begin{align*}
    R^2_{\rm sum, G}(\theta) = \frac{n^{\valid}}{\|\yb^{\valid}\|_{2}^2} \cdot \frac{n^{\train}}{p} \cdot \kappa \sigma_{\bbeta}^2 \cdot \frac{\left[ \tr \left( \Db(\Theta) \bSigma^2 \right)\right]^2}{\tr \left(\bSigma \right) \cdot \tr \left(\Eb(\Theta) \bSigma \right)/h^2 + n^{\train} \cdot \tr \left(\Eb(\Theta) \bSigma^2 \right)} + o_p(1).
\end{align*}

\paragraph{\texorpdfstring{Part II: The limit of $\mathbf{R^2_{\rm ind, G}(\theta)}$.}{Individual}}
Recall that 
\begin{align*}
    R^2_{\rm ind, G}(\theta) = \frac{n^{\valid}}{\|\yb^{\valid}\|_{2}^2} \cdot \frac{\left\langle {\Xb^{\valid}{}^{\T}\yb^{\valid}}, \Ab\left( \Wb^{\T}\Wb, \Theta \right) \Xb^{\train}{}^{\T} \yb^{\train} \right\rangle^2}{n^{\valid}{}^{2} \cdot \| \Ab\left( \Wb^{\T}\Wb, \Theta \right) \Xb^{\train}{}^{\T} \yb^{\train} \|_{\bSigma}^2} = \frac{n^{\valid}}{\|\yb^{\valid}\|_{2}^2} \cdot \frac{\left( \Psi_{\rm ind}^{(1)} \right)^{2}}{\Psi_{\rm ind}^{(2)}}
\end{align*}
where $\Psi_{\rm ind}^{(1)}$ and $\Psi_{\rm ind}^{(2)}$ are defined as
\begin{align*}
    \Psi_{\rm ind}^{(1)} =\ & \frac{1}{n^{\valid}} \cdot \left\langle {\Xb^{\valid}{}^{\T}\yb^{\valid}}, \Ab\left( \Wb^{\T}\Wb, \Theta \right) \Xb^{\train}{}^{\T} \yb^{\train} \right\rangle \quad \mbox{and}
    \\
    \Psi_{\rm ind}^{(2)} =\ & \left\| \Ab\left( \Wb^{\T}\Wb, \Theta \right) \Xb^{\train}{}^{\T} \yb^{\train} \right\|_{\bSigma}^2, 
\end{align*}
respectively. We apply Equation \eqref{eqn:cond-A} and obtain
\begin{align*}
    \Psi_{\rm ind}^{(1)} =\ & \frac{n_w}{n^{\valid}} \cdot \left\langle {\Xb^{\valid}{}^{\T}\yb^{\valid}}, \Db\left( \Theta \right) \Xb^{\train}{}^{\T} \yb^{\train} \right\rangle + o_p(1) \quad \mbox{and}
    \\
    \Psi_{\rm ind}^{(2)} =\ & \left(\Xb^{\train}{}^{\T} \yb^{\train} \right)^{\T} \Ab\left( \Wb^{\T}\Wb, \Theta \right) \bSigma \Ab\left( \Wb^{\T}\Wb, \Theta \right)  \Xb^{\train}{}^{\T} \yb^{\train} 
    \\
    =\ & \frac{1}{n_w^2} \cdot \left(\Xb^{\train}{}^{\T} \yb^{\train} \right)^{\T} \Eb(\Theta) \Xb^{\train}{}^{\T} \yb^{\train}  + o_p(1).
\end{align*}
Similar to Theorem \ref{thm:reference_panel}, we have
\begin{align*}
    R^2_{\rm ind, G}(\theta) = \frac{n^{\valid}}{\|\yb^{\valid}\|_{2}^2} \cdot \frac{n^{\train}}{p} \cdot \kappa \sigma_{\bbeta}^2 \cdot \frac{\left[ \tr \left(\Db(\Theta) \bSigma^2 \right)\right]^2}{\tr \left(\bSigma \right) \cdot \tr \left(\Eb(\Theta) \bSigma \right)/h^2 + n^{\train} \cdot \tr \left(\Eb(\Theta) \bSigma^2 \right)} + o_p(1).
\end{align*}

\section{Proof of Theorem \ref{thm:multi} and Corollary \ref{cor:optimal_weight}}
In this section, we provide a detailed proof of Theorem \ref{thm:multi}. 
The proof is organized into three parts. First, similar to the previous section, we compute $R^2_{\rm sum, MA}(\Theta)$ from Algorithm \ref{alg:multi_sum}, followed by computing $R^2_{\rm ind, MA}(\Theta)$ from Algorithm \ref{alg:multi_ind}. We then demonstrate that the resulting $R^2_{\rm sum, MA}(\Theta)$ is asymptotically equivalent to $R^2_{\rm ind, MA}(\Theta)$.

The final part addresses the case where $K = 2$, focusing on finding the optimal weights $\omega_{j}, j=1, 2$, that maximize $R^2_{\rm sum, MA}(\Theta)$, and consequently $R^2_{\rm ind, MA}(\Theta)$. 
Recall that the proof is based on the condition that, for $1 \leq j \leq K$, we have 
\begin{align*}
    n_w \cdot \Ab_{j}(\Wb_{j}^{\T}\Wb_{j}, \Theta_{j}) \asymp \Db_{j}(\bSigma_{j}, \Theta_{j}) \quad\mbox{and} \quad n_w^2 \cdot \Ab_{j}(\Wb_{j}^{\T}\Wb_{j}, \Theta_{j})^{\T} \bSigma_{1} \Ab_{j}(\Wb_{j}^{\T}\Wb_{j}, \Theta_{j}) \asymp \Eb_{j}(\bSigma_{j}, \Theta_{j}).
\end{align*}
For convenience, we will abbreviate $\Db_{j}(\bSigma_{j}, \Theta_{j})$ as $\Db_{j}$ and $\Eb_{j}(\bSigma_{j}, \Theta_{j})$ as $\Eb_{j}$.

\paragraph{\texorpdfstring{Part I: The limit of $\mathbf{R^2_{\rm sum, MA}(\theta)}$.}{Summery}}
Recall that 
\begin{align*}
    R^2_{\rm sum, MA}(\theta) =\ & \frac{n^{\valid}}{\|\yb^{\valid}\|_{2}^2} \cdot \frac{\left \langle {\sbb_{1}^{\valid}}, \sum_{j=1}^{K} \omega_j \Ab_{j}(\Wb_{j}^{\T} \Wb_{j}, \Theta_{j}) \sbb_{j}^{\train} \right \rangle^{2} }{n^{\valid}{}^{2} \cdot \| \sum_{j=1}^{K} \omega_j \Ab_{j}(\Wb_{j}^{\T} \Wb_{j}, \Theta_{j}) \sbb_{j}^{\train} \|_{\bSigma_{1}}^{2}}
    = \frac{n^{\valid}}{\|\yb^{\valid}\|_{2}^2} \cdot \frac{\left( \Lambda_{\rm sum}^{(1)} \right)^{2}}{\Lambda_{\rm sum}^{(2)}},  
\end{align*}
where $\Lambda_{\rm sum}^{(1)}$ and $\Lambda_{\rm sum}^{(2)}$ are defined as
\begin{align*}
    \Lambda_{\rm sum}^{(1)} = \frac{1}{n^{\valid}} \cdot \left \langle {\sbb_{1}^{\valid}}, \sum_{j=1}^{K} \omega_j \Ab_{j}(\Wb_{j}^{\T} \Wb_{j}, \Theta_{j}) \sbb_{j}^{\train} \right \rangle
    \quad \mbox{and} \quad
    \Lambda_{\rm sum}^{(2)} = \left\| \sum_{j=1}^{K} \omega_j \Ab_{j}(\Wb_{j}^{\T} \Wb_{j}, \Theta_{j}) \sbb_{j}^{\train} \right\|_{\bSigma_{1}}^2, 
\end{align*}
respectively. 
For $\Lambda_{\rm sum}^{(1)}$, plugging in the definition of $\sbb^{\train}$ and $\sbb^{\valid}$ and the fact that $n^{\train} + n^{\valid} = n$, we have
\begin{align*}
    \Lambda_{\rm sum}^{(1)} =\ & \frac{1}{n^{\valid}} \cdot \left \langle {\sbb_{1}^{\valid}}, \sum_{j=1}^{K} \omega_j \Ab_{j}(\Wb_{j}^{\T} \Wb_{j}, \Theta_{j}) \sbb_{j}^{\train} \right \rangle
    \\
    =\ & \frac{1}{n^{\valid}} \cdot \omega_1 \left \langle {\sbb_{1}^{\valid}}, \Ab_{1}(\Wb_{1}^{\T} \Wb_{1}, \Theta_{1}) \sbb_{1}^{\train} \right \rangle + \frac{1}{n^{\valid}} \cdot \left \langle {\sbb_{1}^{\valid}}, \sum_{j=2}^{K} \omega_j \Ab_{j}(\Wb_{j}^{\T} \Wb_{j}, \Theta_{j}) \sbb_{j}^{\train} \right \rangle
    \\
    =\ & \frac{1}{n^{\valid} \cdot n_w} \cdot \omega_1 \left \langle {\sbb_{1}^{\valid}}, \Db_{1} \sbb_{1}^{\train} \right \rangle + \frac{1}{n^{\valid} \cdot n_w} \cdot \left \langle {\sbb_{1}^{\valid}}, \sum_{j=2}^{K} \omega_j \Db_{j} \sbb_{j}^{\train} \right \rangle + o_p(1)
    \\
    =\ & \Lambda_{\rm sum}^{(3)} + \Lambda_{\rm sum}^{(4)} + o_p(1).
\end{align*}
Following similar steps in  Section \ref{sec:proof_thm_general_linear}, for the first term $\Lambda_{\rm sum}^{(3)}$, we have  
\begin{align*}
    \frac{1}{n^{\valid} \cdot n_w} \cdot \omega_1 \left \langle {\sbb_{1}^{\valid}}, \Db_{1} \sbb_{1}^{\train} \right \rangle = \omega_1 \cdot \frac{n^{\train}}{n_w} \cdot \frac{\kappa_{1} \sigma_{1,1}^2}{p} \cdot \tr\left( \Db_{1} \bSigma_{1}^2 \right).
\end{align*}
For the second term $\Lambda_{\rm sum}^{(4)}$, we have  
\begin{align*}
    \Lambda_{\rm sum}^{(4)} =\ &\frac{1}{n^{\valid} \cdot n_w} \cdot \left \langle {\sbb_{1}^{\valid}}, \sum_{j=2}^{K} \omega_j \Db_{j} \sbb_{j}^{\train} \right \rangle
    \\
    =\ & \frac{1}{n^{\valid} \cdot n_w} \cdot \sum_{j=2}^{K} \omega_j \left[ \frac{n - n^{\train}}{n} \Xb_{1}^{\T}\yb_{1} - \sqrt{\frac{n^{\train} (n - n^{\train})}{n^2}} (\Cov(\Xb_{1}^{\T} \yb_{1}))^{1/2} \hb_{1} \right]^{\T}  
    \\
    &\qquad \Db_{j} \left[ \frac{n^{\train}}{n} \Xb_{j}^{\T}\yb_{j} + \sqrt{\frac{n^{\train} (n - n^{\train})}{n^2}} (\Cov(\Xb_{j}^{\T} \yb_{j}))^{1/2} \hb_{j} \right].
\end{align*}
Note that $\hb_1$ is independent with $\hb_{j}$. Therefore, by Lemma B.26 in \cite{bai2010spectral} and Lemma 5 in \cite{zhao2024estimating}, we have 
\begin{align*}
    \Lambda_{\rm sum}^{(4)} =\ &\frac{1}{n^{\valid} \cdot n_w} \cdot \left \langle {\sbb_{1}^{\valid}}, \sum_{j=2}^{K} \omega_j \Db_{j} \sbb_{j}^{\train} \right \rangle
    \\
    =\ & \sum_{j=2}^{K} \omega_j \frac{1}{n^{\valid} \cdot n_w} \cdot \left( \frac{n - n^{\train}}{n} \Xb_{1}^{\T}\yb_{1} \right)^{\T} \Db_{j} \left(  \frac{n^{\train}}{n} \Xb_{j}^{\T}\yb_{j} \right)
    \\
    =\ & \sum_{j=2}^{K} \omega_j \frac{n^{\train}}{n^2 \cdot n_w}  \cdot \left( \Xb_{1}^{\T}\Xb_{1} \bbeta_1 \right)^{\T} \Db_{j} \left( \Xb_{j}^{\T} \Xb_{j} \bbeta_{j} \right)
    \\
    =\ & \sum_{j=2}^{K} \omega_j \frac{n^{\train}}{n^2 \cdot n_w}  \cdot \frac{\kappa_{1} \kappa_{j} \sigma_{1,j}^2}{p} \cdot \tr \left( \Xb_{1}^{\T}\Xb_{1} \Db_{j} \Xb_{j}^{\T} \Xb_{j} \right)
    \\
    =\ &  \sum_{j=2}^{K} \omega_j \frac{n^{\train}}{n_w} \cdot \frac{\kappa_{1} \kappa_{j} \sigma_{1,j}^2}{p} \cdot \tr \left( \frac{\Xb_{1}^{\T}\Xb_{1}}{n} \Db_{j} \frac{\Xb_{j}^{\T} \Xb_{j}}{n} \right).
\end{align*}
We further simplify $\Lambda_{\rm sum}^{(4)}$ as follows 
\begin{align*}
    \Lambda_{\rm sum}^{(4)} =\ & \sum_{j=2}^{K} \omega_j \frac{n^{\train}}{n_w} \cdot \frac{\kappa_{1} \kappa_{j} \sigma_{1,j}^2}{p} \cdot \tr \left( \bSigma_{1} \Db_{j} \bSigma_{j} \right). 
\end{align*}
It follows that  
\begin{align*}
    \Lambda_{\rm sum}^{(1)} 
    =\ & \Lambda_{\rm sum}^{(3)} + \Lambda_{\rm sum}^{(4)}
    \\
    =\ & \omega_1 \cdot \frac{n^{\train}}{n_w} \cdot \frac{\kappa_{1} \sigma_{1,1}^2}{p} \cdot \tr\left( \Db_{1} \bSigma_{1}^2 \right) + \sum_{j=2}^{K} \omega_j \frac{n^{\train}}{n_w}  \cdot \frac{\kappa_{1} \kappa_{j} \sigma_{1,j}^2}{p} \cdot \tr \left( \bSigma_{1} \Db_{j} \bSigma_{j} \right).
\end{align*}
For $\Lambda_{\rm sum}^{(2)}$, we have  
\begin{align*}
    \Lambda_{\rm sum}^{(2)} =\ & \left\| \sum_{j=1}^{K} \omega_j \Ab_{j}(\Wb_{j}^{\T} \Wb_{j}, \Theta_{j}) \sbb_{j}^{\train} \right\|_{\bSigma_{1}}^2
    \\
    =\ & \sum_{1 \leq i, j \leq K} \omega_i \omega_{j} \left[ \frac{n^{\train}}{n} \Xb_{i}^{\T}\yb_{i} + \sqrt{\frac{n^{\train} (n - n^{\train})}{n^2}} (\Cov(\Xb_{i}^{\T} \yb_{i}))^{1/2} \hb_{i} \right]^{\T} 
    \\
    & \qquad \qquad \Ab_{i}(\Wb_{i}^{\T} \Wb_{i}, \Theta_{i}) \bSigma_{1} \Ab_{j}(\Wb_{j}^{\T} \Wb_{j}, \Theta_{j}) \left[ \frac{n^{\train}}{n} \Xb_{j}^{\T}\yb_{j} + \sqrt{\frac{n^{\train} (n - n^{\train})}{n^2}} (\Cov(\Xb_{j}^{\T} \yb_{j}))^{1/2} \hb_{j} \right]
    \\
    =\ & \sum_{i \neq j} \omega_i \omega_{j} \left[ \frac{n^{\train}}{n} \Xb_{i}^{\T}\yb_{i} + \sqrt{\frac{n^{\train} (n - n^{\train})}{n^2}} (\Cov(\Xb_{i}^{\T} \yb_{i}))^{1/2} \hb_{i} \right]^{\T} 
    \\
    & \qquad \qquad \Ab_{i}(\Wb_{i}^{\T} \Wb_{i}, \Theta_{i}) \bSigma_{1} \Ab_{j}(\Wb_{j}^{\T} \Wb_{j}, \Theta_{j}) \left[ \frac{n^{\train}}{n} \Xb_{j}^{\T}\yb_{j} + \sqrt{\frac{n^{\train} (n - n^{\train})}{n^2}} (\Cov(\Xb_{j}^{\T} \yb_{j}))^{1/2} \hb_{j} \right]
    \\
    &+ \sum_{1 \leq j \leq K} \omega_{j}^{2} \left[ \frac{n^{\train}}{n} \Xb_{j}^{\T}\yb_{j} + \sqrt{\frac{n^{\train} (n - n^{\train})}{n^2}} (\Cov(\Xb_{j}^{\T} \yb_{j}))^{1/2} \hb_{j} \right]^{\T} 
    \\
    & \qquad \qquad \Ab_{j}(\Wb_{j}^{\T} \Wb_{j}, \Theta_{j}) \bSigma_{1} \Ab_{j}(\Wb_{j}^{\T} \Wb_{j}, \Theta_{j}) \left[ \frac{n^{\train}}{n} \Xb_{j}^{\T}\yb_{j} + \sqrt{\frac{n^{\train} (n - n^{\train})}{n^2}} (\Cov(\Xb_{j}^{\T} \yb_{j}))^{1/2} \hb_{j} \right].
\end{align*}
By Equation \eqref{eqn:cond-A}, we have
\begin{align*}
    \Lambda_{\rm sum}^{(2)} =\ & \sum_{i \neq j} \frac{\omega_i \omega_{j}}{n_w^2} \left[ \frac{n^{\train}}{n} \Xb_{i}^{\T}\yb_{i} + \sqrt{\frac{n^{\train} (n - n^{\train})}{n^2}} (\Cov(\Xb_{i}^{\T} \yb_{i}))^{1/2} \hb_{i} \right]^{\T}      \\
    & \qquad \qquad \Db_{i} \bSigma_{1} \Db_{j} \left[ \frac{n^{\train}}{n} \Xb_{j}^{\T}\yb_{j} + \sqrt{\frac{n^{\train} (n - n^{\train})}{n^2}} (\Cov(\Xb_{j}^{\T} \yb_{j}))^{1/2} \hb_{j} \right]
    \\
    &+ \sum_{1 \leq j \leq K} \frac{\omega_{j}^{2}}{n_w^2} \left[ \frac{n^{\train}}{n} \Xb_{j}^{\T}\yb_{j} + \sqrt{\frac{n^{\train} (n - n^{\train})}{n^2}} (\Cov(\Xb_{j}^{\T} \yb_{j}))^{1/2} \hb_{j} \right]^{\T} 
    \\
    & \qquad \qquad \Eb_{j} \left[ \frac{n^{\train}}{n} \Xb_{j}^{\T}\yb_{j} + \sqrt{\frac{n^{\train} (n - n^{\train})}{n^2}} (\Cov(\Xb_{j}^{\T} \yb_{j}))^{1/2} \hb_{j} \right]
    \\
    =\ & \Lambda_{\rm sum}^{(5)} + \Lambda_{\rm sum}^{(6)}. 
\end{align*}
By Lemma B.26 in \cite{bai2010spectral} and Lemma 5 in \cite{zhao2024estimating}, for $\Lambda_{\rm sum}^{(5)}$ we have 
\begin{align*}
    \Lambda_{\rm sum}^{(5)} =\ & \sum_{i \neq j}  \omega_i \omega_{j} \left( \frac{n^{\train}}{n_w} \right)^2 \frac{\kappa_{i} \kappa_{j} \sigma_{i,j}^2}{p} \cdot \tr \left( \bSigma_{i} \Db_{i} \bSigma_{1} \Db_{j} \bSigma_{j} \right).
\end{align*}
Following similar steps in Section \ref{sec:proof_thm_reference_panel}, we have
\begin{align*}
    \Lambda_{\rm sum}^{(6)} =\ & \sum_{1 \leq j \leq K} \omega_{j}^{2} \cdot \left[  \left( \frac{n^{\train}}{n_w} \right)^2 \cdot \kappa_{j} \sigma_{j, j}^2 \cdot \frac{p}{n^{\train}} \cdot \frac{1}{p} \tr \left(\bSigma_{j} \right) \cdot \frac{1}{p}\tr \left(\Eb_{j} \bSigma_{j} \right) + \left( \frac{n^{\train}}{n_w} \right)^2 \cdot \frac{\kappa_{j} \sigma_{j,j}^2}{p}\tr \left(\Eb_{j} \bSigma_{j}^2 \right) \right.
    \\
    & \qquad \qquad + \left. \frac{n^{\train}}{n_w^2} \cdot \sigma_{\bepsilon}^2 \cdot \tr\left( \Eb_{j} {\bSigma_{j}} \right) \right] + o_p(1).
\end{align*}
Therefore, we have 
\begin{align*}
    \Lambda_{\rm sum}^{(2)} =\ & \sum_{1 \leq i < j \leq K}  2 \omega_i \omega_{j} \left( \frac{n^{\train}}{n_w} \right)^2 \frac{\kappa_{i} \kappa_{j} \sigma_{i,j}^2}{p} \cdot \tr \left( \bSigma_{i} \Db_{i} \bSigma_{1} \Db_{j} \bSigma_{j} \right)
    \\
    &+ \sum_{1 \leq j \leq K} \omega_{j}^{2} \cdot \left[  \frac{n^{\train}}{n_w^2} \cdot \frac{\kappa_{j} \sigma_{j, j}^2}{p} \cdot \frac{1}{h_{j}^2} \tr \left(\bSigma_{j} \right) \cdot \tr \left(\Eb_{j} \bSigma_{j} \right) + \left( \frac{n^{\train}}{n_w} \right)^2 \cdot \frac{\kappa_{j} \sigma_{j,j}^2}{p}\tr \left(\Eb_{j} \bSigma_{j}^2 \right) \right] + o_p(1).
\end{align*}
It follows that 
\begin{align*}
    R^2_{\rm sum, MA}(\theta)
    = \frac{n^{\valid}}{\|\yb^{\valid}\|_{2}^2} \cdot \frac{\left( \Lambda_{\rm sum}^{(1)} \right)^{2}}{\Lambda_{\rm sum}^{(2)}},
\end{align*}
where
\begin{align*}
    \Lambda_{\rm sum}^{(1)} 
    =\ & \omega_1 \cdot \frac{n^{\train}}{n_w} \cdot \frac{\kappa_{1} \sigma_{\bbeta}^2}{p} \cdot \tr\left( \Db_{1} \bSigma_{1}^2 \right) + \sum_{j=2}^{K} \omega_j \frac{n^{\train}}{n_w}  \cdot \frac{\kappa_{1} \kappa_{j} \sigma_{1,j}^2}{p} \cdot \tr \left( \bSigma_{1} \Db_{j} \bSigma_{j} \right) + o_p(1) \quad \mbox{and}
    \\
    \Lambda_{\rm sum}^{(2)} =\ & \sum_{1 \leq i < j \leq K}  2 \omega_i \omega_{j} \left( \frac{n^{\train}}{n_w} \right)^2 \frac{\kappa_{i} \kappa_{j} \sigma_{i,j}^2}{p} \cdot \tr \left( \bSigma_{i} \Db_{i} \bSigma_{1} \Db_{j} \bSigma_{j} \right)
    \\
    &+ \sum_{1 \leq j \leq K} \omega_{j}^{2} \cdot \left[  \frac{n^{\train}}{n_w^2} \cdot \frac{\kappa_{j} \sigma_{j, j}^2}{p} \cdot \frac{1}{h_{j}^2} \tr \left(\bSigma_{j} \right) \cdot \tr \left(\Eb_{j} \bSigma_{j} \right) + \left( \frac{n^{\train}}{n_w} \right)^2 \cdot \frac{\kappa_{j} \sigma_{j,j}^2}{p} \tr \left(\Eb_{j} \bSigma_{j}^2 \right) \right] + o_p(1).
\end{align*}

\paragraph{\texorpdfstring{Part II: The limit of $\mathbf{R^2_{\rm ind, MA}(\theta)}$.}{Individual}}
Recall that 
\begin{align*}
    R^2_{\rm ind, MA}(\theta) = \frac{n^{\valid}}{\|\yb^{\valid}\|_{2}^2} \cdot \frac{\left\langle {\Xb_{1}^{\valid}{}^{\T}\yb_{1}^{\valid}}, \sum_{j=1}^{K} \omega_j \Ab_{j}(\Wb_{j}^{\T} \Wb_{j}, \Theta_{j}) \Xb_{j}^{\train}{}^{\T} \yb_{j}^{\train} \right\rangle^2}{n^{\valid}{}^{2} \cdot \| \sum_{j=1}^{K} \omega_j \Ab_{j}(\Wb_{j}^{\T} \Wb_{j}, \Theta_{j}) \Xb_{j}^{\train}{}^{\T} \yb_{j}^{\train} \|_{\bSigma_{1}}^2} = \frac{n^{\valid}}{\|\yb^{\valid}\|_{2}^2} \cdot \frac{\left( \Lambda_{\rm ind}^{(1)} \right)^{2}}{\Lambda_{\rm ind}^{(2)}},
\end{align*}
where $\Lambda_{\rm ind}^{(1)}$ and $\Lambda_{\rm ind}^{(2)}$ are defined as
\begin{align*}
    \Lambda_{\rm ind}^{(1)} =\ & \frac{1}{n^{\valid}} \cdot \left\langle {\Xb_{1}^{\valid}{}^{\T}\yb_{1}^{\valid}}, \sum_{j=1}^{K} \omega_j \Ab_{j}(\Wb_{j}^{\T} \Wb_{j}, \Theta_{j}) \Xb_{j}^{\train}{}^{\T} \yb_{j}^{\train} \right\rangle \quad \mbox{and}
    \\
    \Lambda_{\rm ind}^{(2)} =\ & \left\| \sum_{j=1}^{K} \omega_j \Ab_{j}(\Wb_{j}^{\T} \Wb_{j}, \Theta_{j}) \Xb_{j}^{\train}{}^{\T} \yb_{j}^{\train} \right\|_{\bSigma_{1}}^2. 
\end{align*}
Consider $\Lambda_{\rm ind}^{(1)}$ first, following similar steps  as above, we have
\begin{align*}
    \Lambda_{\rm ind}^{(1)}
    =\ & \frac{1}{n^{\valid} \cdot n_w} \sum_{j=1}^{K} \omega_j  \left(\Xb_{1}^{\valid}{}^{\T} \yb_{1}^{\valid} \right)^{\T} \Db_{j} \Xb_{j}^{\train}{}^{\T} \yb_{j}^{\train}
    \\
    =\ & \omega_1 \cdot \frac{n^{\train}}{n_w} \cdot \frac{\kappa_{1} \sigma_{\bbeta}^2}{p} \cdot \tr\left( \Db_{1} \bSigma_{1}^2 \right) + \sum_{j=2}^{K} \omega_j \frac{n^{\train}}{n_w}  \cdot \frac{\kappa_{1} \kappa_{j} \sigma_{1,j}^2}{p} \cdot \tr \left( \bSigma_{1} \Db_{j} \bSigma_{j} \right) + o_p(1). 
\end{align*}
For $\Lambda_{\rm ind}^{(2)}$, we have  
\begin{align*}
    \Lambda_{\rm ind}^{(2)} =\ &  \left\| \sum_{j=1}^{K} \omega_j \Ab_{j}(\Wb_{j}^{\T} \Wb_{j}, \Theta_{j}) \Xb_{j}^{\train}{}^{\T} \yb_{j}^{\train} \right\|_{\bSigma_{1}}^2
    \\
    =\ & \sum_{1 \leq i, j \leq K} \omega_{i} \omega_j \left( \Xb_{i}^{\train}{}^{\T} \yb_{i}^{\train} \right)^{\T} \Ab_{i}(\Wb_{i}^{\T} \Wb_{i}, \Theta_{i}) \bSigma_{1} \Ab_{j}(\Wb_{j}^{\T} \Wb_{j}, \Theta_{j}) \Xb_{j}^{\train}{}^{\T} \yb_{j}^{\train}
    \\
    =\ & \sum_{i \neq j} \omega_{i} \omega_j \left( \Xb_{i}^{\train}{}^{\T} \yb_{i}^{\train} \right)^{\T} \Ab_{i}(\Wb_{i}^{\T} \Wb_{i}, \Theta_{i}) \bSigma_{1} \Ab_{j}(\Wb_{j}^{\T} \Wb_{j}, \Theta_{j}) \Xb_{j}^{\train}{}^{\T} \yb_{j}^{\train}
    \\
    &+ \sum_{1 \leq j \leq K} \omega_j^2 \left( \Xb_{j}^{\train}{}^{\T} \yb_{j}^{\train} \right)^{\T} \Ab_{j}(\Wb_{j}^{\T} \Wb_{j}, \Theta_{j}) \bSigma_{1} \Ab_{j}(\Wb_{j}^{\T} \Wb_{j}, \Theta_{j}) \Xb_{j}^{\train}{}^{\T} \yb_{j}^{\train}.
\end{align*}
By Equation \eqref{eqn:cond-A}, for $\Lambda_{\rm ind}^{(2)}$ we have 
\begin{align*}
    \Lambda_{\rm ind}^{(2)} =\ & \sum_{i \neq j} \frac{\omega_{i} \omega_j}{n_w^2} \left( \Xb_{i}^{\train}{}^{\T} \yb_{i}^{\train} \right)^{\T} \Db_{i} \bSigma_{1} \Db_{j} \Xb_{j}^{\train}{}^{\T} \yb_{j}^{\train}
    + \sum_{1 \leq j \leq K} \frac{\omega_j^2}{n_w^2} \left( \Xb_{j}^{\train}{}^{\T} \yb_{j}^{\train} \right)^{\T} \Eb_{j} \Xb_{j}^{\train}{}^{\T} \yb_{j}^{\train} + o_p(1)
    \\
    =\ & \Lambda_{\rm ind}^{(3)} + \Lambda_{\rm ind}^{(4)} + o_p(1).
\end{align*}
By Lemma B.26 in \cite{bai2010spectral} and Lemma 5 in \cite{zhao2024estimating}, for $\Lambda_{\rm ind}^{(3)}$ we have  
\begin{align*}
    \Lambda_{\rm ind}^{(3)} =\ & \sum_{1 \leq i < j \leq K}  2 \omega_i \omega_{j} \left( \frac{n^{\train}}{n_w} \right)^2 \frac{\kappa_{i} \kappa_{j} \sigma_{i,j}^2}{p} \cdot \tr \left( \bSigma_{i} \Db_{i} \bSigma_{1} \Db_{j} \bSigma_{j} \right).
\end{align*}
Applying similar steps as in Section \ref{sec:proof_thm_reference_panel}, we have
\begin{align*}
    & \frac{1}{n_w^2} \left( \Xb_{j}^{\train}{}^{\T} \yb_{j}^{\train} \right)^{\T} \Eb_{j} \Xb_{j}^{\train}{}^{\T} \yb_{j}^{\train}
    \\
    =\ & \left( \frac{n^{\train}}{n_w} \right)^2 \cdot \kappa_{j} \sigma_{j,j}^2 \cdot \frac{p}{n^{\train}} \cdot \frac{1}{p} \tr \left(\bSigma_{j} \right) \cdot \frac{1}{p}\tr \left(\Eb_{j} \bSigma_{j} \right) + \left( \frac{n^{\train}}{n_w} \right)^2 \cdot \frac{\kappa_{j} \sigma_{j,j}^2}{p}\tr \left(\Eb_{j} \bSigma_{j}^2 \right)
    \\
    & \qquad \qquad + \frac{n^{\train}}{n_w^2} \cdot \sigma_{\bepsilon}^2 \cdot \tr\left( \Eb_{j} {\bSigma_{j}} \right).
\end{align*}
Therefore, we have  
\begin{align*}
    \Lambda_{\rm ind}^{(2)} =\ & \sum_{1 \leq i < j \leq K}  2 \omega_i \omega_{j} \left( \frac{n^{\train}}{n_w} \right)^2 \frac{\kappa_{i} \kappa_{j} \sigma_{i,j}^2}{p} \cdot \tr \left( \bSigma_{i} \Db_{i} \bSigma_{1} \Db_{j} \bSigma_{j} \right)
    \\
    &+ \sum_{1 \leq j \leq K} \omega_{j}^{2} \cdot \left[  \frac{n^{\train}}{n_w^2} \cdot \frac{\kappa_{j} \sigma_{j, j}^2}{p} \cdot \frac{1}{h_{j}^2} \tr \left(\bSigma_{j} \right) \cdot \tr \left(\Eb_{j} \bSigma_{j} \right) + \left( \frac{n^{\train}}{n_w} \right)^2 \cdot \frac{\kappa_{j} \sigma_{j,j}^2}{p}\tr \left(\Eb_{j} \bSigma_{j}^2 \right) \right] + o_p(1).
\end{align*}
It follows that 
\begin{align*}
    R^2_{\rm ind, MA}(\theta)
    = \frac{n^{\valid}}{\|\yb^{\valid}\|_{2}^2} \cdot \frac{\left( \Lambda_{\rm ind}^{(1)} \right)^{2}}{\Lambda_{\rm ind}^{(2)}},
\end{align*}
where
\begin{align*}
    \Lambda_{\rm ind}^{(1)} 
    =\ & \omega_1 \cdot \frac{n^{\train}}{n_w} \cdot \frac{\kappa_{1} \sigma_{\bbeta}^2}{p} \cdot \tr\left( \Db_{1} \bSigma_{1}^2 \right) + \sum_{j=2}^{K} \omega_j \frac{n^{\train}}{n_w}  \cdot \frac{\kappa_{1} \kappa_{j} \sigma_{1,j}^2}{p} \cdot \tr \left( \bSigma_{1} \Db_{j} \bSigma_{j} \right) + o_p(1) \quad \mbox{and}
    \\
    \Lambda_{\rm ind}^{(2)} =\ & \sum_{1 \leq i < j \leq K}  2 \omega_i \omega_{j} \left( \frac{n^{\train}}{n_w} \right)^2 \frac{\kappa_{i} \kappa_{j} \sigma_{i,j}^2}{p} \cdot \tr \left( \bSigma_{i} \Db_{i} \bSigma_{1} \Db_{j} \bSigma_{j} \right)
    \\
    &+ \sum_{1 \leq j \leq K} \omega_{j}^{2} \cdot \left[  \frac{n^{\train}}{n_w^2} \cdot \frac{\kappa_{j} \sigma_{j, j}^2}{p} \cdot \frac{1}{h_{j}^2} \tr \left(\bSigma_{j} \right) \cdot \tr \left(\Eb_{j} \bSigma_{j} \right) + \left( \frac{n^{\train}}{n_w} \right)^2 \cdot \frac{\kappa_{j} \sigma_{j,j}^2}{p}\tr \left(\Eb_{j} \bSigma_{j}^2 \right) \right] + o_p(1).
\end{align*}

\paragraph{Proof of Corollary \ref{cor:optimal_weight}.}

When $K = 2$, we have
\begin{align*}
    R^2_{\rm sum, MA}(\theta)
    = \frac{n^{\valid}}{\|\yb^{\valid}\|_{2}^2} \cdot \frac{ \left[ \omega_1 \cdot N_1(\Theta) + (1 - \omega_1) N_2(\Theta) \right]^2}{\omega_1^2 \cdot D_1(\Theta) + (1 - \omega_1)^2 \cdot D_2(\Theta) + 2 \omega_1 (1 - \omega_1) \cdot D_3(\Theta)}, 
\end{align*}
where
\begin{align*}
    N_1(\Theta) =\ & \frac{\kappa_{1} \sigma_{\bbeta}^2}{p} \cdot \tr\left( \Db_{1} \bSigma_{1}^2 \right),
    \\
    N_2(\Theta) =\ & \frac{\kappa_{1} \kappa_{2} \sigma_{1,2}^2}{p} \cdot \tr \left( \bSigma_{1} \Db_{2} \bSigma_{2} \right), 
    \\
    D_1(\Theta) =\ & \frac{1}{n^{\train}} \cdot \frac{\kappa_{1} \sigma_{1, 1}^2}{p} \cdot \frac{1}{h_{1}^2} \tr \left(\bSigma_{1} \right) \cdot \tr \left(\Eb_{1} \bSigma_{1} \right) + \frac{\kappa_{1} \sigma_{1,1}^2}{p}\tr \left(\Eb_{1} \bSigma_{1}^2 \right), 
    \\
    D_2(\Theta) =\ & \frac{1}{n^{\train}} \cdot \frac{\kappa_{2} \sigma_{2, 2}^2}{p} \cdot \frac{1}{h_{2}^2} \tr \left(\bSigma_{2} \right) \cdot \tr \left(\Eb_{2} \bSigma_{2} \right) + \frac{\kappa_{2} \sigma_{2,2}^2}{p}\tr \left(\Eb_{2} \bSigma_{2}^2 \right), \quad \mbox{and}
    \\
    D_3(\Theta) =\ & \frac{\kappa_{1} \kappa_{2} \sigma_{1,2}^2}{p} \cdot \tr \left( \bSigma_{1} \Db_{1} \bSigma_{1} \Db_{2} \bSigma_{2} \right).
\end{align*}
We abbreviate $N_1(\Theta), N_2(\Theta), D_1(\Theta), D_2(\Theta), D_3(\Theta)$ by $N_1, N_2, D_1, D_2, D_3$, respectively. 
Taking derivative with respect to $\omega_1$, we have
\begin{align*}
    \frac{\partial R^2_{\rm sum, MA}(\theta)}{\partial \omega_1} = -\frac{2 \left[N_1 \omega_1 + N_2 (1 - \omega_1)\right] \cdot 
    \left[- D_2 N_1 (1 - \omega_1)+D_1 N_2 \omega_1 +D_3
    \left(- N_1 \omega_1 - N_2 \omega_1
    +N_2\right)\right]}{\left[D_1 \omega_1 ^2 + (1 - \omega_1)
    \left(D_2 (1 - \omega_1) + 2 D_3 \omega_1
    \right)\right]^2}.
\end{align*}
The denominator above is always positive and $N_1 \omega_1 + N_2 (1 - \omega_1) > 0$ as $\omega_1 \in [0,1]$. 
Therefore, the only solution to ${\partial R^2_{\rm sum}(\theta)}/{\partial \omega_1} = 0$ is given by
\begin{align*}
    \omega_1 = \frac{D_2 N_1-D_3 N_2}{D_2 N_1-D_3 N_1+D_1 N_2-D_3 N_2}. 
\end{align*}
Consider the second derivative at $\omega_1$, we have
\begin{align*}
    \frac{\partial R^2_{\rm sum, MA}(\theta)}{\partial \omega_1}& \Bigg|_{\omega_1 = \frac{D_2 N_1-D_3 N_2}{D_2 N_1-D_3 N_1+D_1 N_2-D_3 N_2}} 
    \\
    =\ & -\frac{2 \left[D_2 N_1+D_1 N_2-D_3
   \left(N_1+N_2\right)\right]^4}{\left(D_3^2-D_1
   D_2\right)^2 \left(D_2 N_1^2+ D_1 N_2^2 -2
   D_3 N_1 N_2 \right)} < 0,
\end{align*}
where the last inequality follows from the fact that $D_2 N_1^2+ D_1 N_2^2 - 2 D_3 N_1 N_2 > 0$.
Therefore, we conclude that the optimal $R^2_{\rm sum, MA}(\theta)$ is obtained when
\begin{align*}
\begin{split}
    \omega_1 =\ & \min \left\{1, \frac{D_2(\Theta) N_1(\Theta)-D_3(\Theta) N_2(\Theta)}{D_2(\Theta) N_1(\Theta) - D_3(\Theta) N_1(\Theta) +D_1(\Theta) N_2(\Theta) - D_3(\Theta) N_2(\Theta)} \right\} 
    \quad \mbox{and} \quad
    \\
    \omega_2 =\ & \max  \left\{0, 1 - \frac{D_2(\Theta) N_1(\Theta) -D_3(\Theta) N_2(\Theta)}{D_2(\Theta) N_1(\Theta) -D_3(\Theta) N_1(\Theta)+D_1(\Theta) N_2(\Theta)-D_3(\Theta) N_2(\Theta)} \right\}.
\end{split}
\end{align*}

\newpage

{
\scriptsize
\setlength{\tabcolsep}{8pt}
\renewcommand{\arraystretch}{1.1}
\begin{longtable}{lccccccc}
    \caption{ {\bf Out-of-sample $R^2$ estimates from various methods across 71 DXA imaging traits.} The heritability is estimated by using LDSC (\url{https://github.com/bulik/ldsc}), and more information on these imaging traits can be found at \url{https://biobank.ndph.ox.ac.uk/ukb/label.cgi?id=124}. The third to fifth columns report out-of-sample $R^2$ for resampling-based self-training methods (Lassosum2-pseudo, Ensemble-pseudo, and LDpred2-pseudo), while the last three columns present out-of-sample $R^2$ for individual-level data training of LDpred2 with varying validation sample sizes ($n^{(v)} = 1000$, $500$, and $100$). The final three rows summarize the mean, median, and standard deviation (Std.) of heritability and out-of-sample $R^2$ values across all DXA traits.} \label{tab:h2_R2} \\
    \toprule
    Trait ID & Heritability & \begin{tabular}{@{}c@{}} Lassosum2- \\ pseudo \end{tabular} & \begin{tabular}{@{}c@{}} Ensemble- \\ pseudo \end{tabular} & \begin{tabular}{@{}c@{}} LDpred2- \\ pseudo \end{tabular} & \begin{tabular}{@{}c@{}} LDpred2 \\ $(n^{(v)} = 1000)$ \end{tabular} & \begin{tabular}{@{}c@{}} LDpred2 \\ $(n^{(v)} = 500)$ \end{tabular} & \begin{tabular}{@{}c@{}} LDpred2 \\ $(n^{(v)} = 100)$ \end{tabular}   \\
    \midrule
    \endfirsthead

    \toprule
    Trait ID & Heritability & \begin{tabular}{@{}c@{}} Lassosum2- \\ pseudo \end{tabular} & \begin{tabular}{@{}c@{}} Ensemble- \\ pseudo \end{tabular} & \begin{tabular}{@{}c@{}} LDpred2- \\ pseudo \end{tabular} & \begin{tabular}{@{}c@{}} LDpred2 \\ $(n^{(v)} = 1000)$ \end{tabular} & \begin{tabular}{@{}c@{}} LDpred2 \\ $(n^{(v)} = 500)$ \end{tabular} & \begin{tabular}{@{}c@{}} LDpred2 \\ $(n^{(v)} = 100)$ \end{tabular}   \\
    \midrule
    \endhead

    \midrule
    \multicolumn{8}{r}{\textit{Continued on next page...}} \\
    \midrule
    \endfoot

    \bottomrule
    \endlastfoot

        21110 & 0.2067 & 0.0016 & 0.0025 & 0.0035 & 0.0015 & 0.0013 & 0.0013 \\
        21111 & 0.1028 & 0.0007 & 0.0008 & 0.0010 & 0.0007 & 0.0005 & 0.0004\\
        21112 & 0.3388 & 0.0063 & 0.0092 & 0.0089 & 0.0050 & 0.0034 & 0.0033\\
        21116 & 0.3441 & 0.0053 & 0.0109 & 0.0087 & 0.0091 & 0.0051 & 0.0038\\
        21113 & 0.1412 & 0.0066 & 0.0079 & 0.0060 & 0.0026 & 0.0007 & 0.0011\\
        21117 & 0.1526 & 0.0035 & 0.0055 & 0.0059 & 0.0068 & 0.0022 & 0.0018\\
        21114 & 0.0074 & 0.0035 & 0.0044 & 0.0044 & 0.0040 & 0.0007 & 0.0002\\
        21118 & 0.0216 & 0.0040 & 0.0041 & 0.0037 & 0.0048 & 0.0009 & 0.0002\\
        21119 & 0.3467 & 0.0096 & 0.0109 & 0.0092 & 0.0054 & 0.0053 & 0.0048\\
        21120 & 0.1857 & 0.0034 & 0.0021 & 0.0020 & 0.0010 & 0.0013 & 0.0016\\
        21121 & 0.0959 & 0.0014 & 0.0013 & 0.0011 & 0.0008 & 0.0006 & 0.0005\\
        21123 & 0.2649 & 0.0024 & 0.0010 & 0.0032 & 0.0018 & 0.0021 & 0.0022\\
        21124 & 0.1249 & 0.0004 & 0.0006 & 0.0004 & 0.0007 & 0.0004 & 0.0003\\
        21125 & 0.2688 & 0.0082 & 0.0123 & 0.0105 & 0.0116 & 0.0097 & 0.0075\\
        21128 & 0.2599 & 0.0102 & 0.0132 & 0.0125 & 0.0104 & 0.0068 & 0.0055\\
        21126 & 0.2564 & 0.0036 & 0.0044 & 0.0040 & 0.0027 & 0.0010 & 0.0009\\
        21129 & 0.2320 & 0.0041 & 0.0037 & 0.0036 & 0.0045 & 0.0014 & 0.0012\\
        21127 & 0.1820 & 0.0034 & 0.0035 & 0.0034 & 0.0048 & 0.0013 & 0.0004\\
        21130 & 0.1896 & 0.0041 & 0.0031 & 0.0027 & 0.0039 & 0.0013 & 0.0006\\
        21131 & 0.2934 & 0.0102 & 0.0115 & 0.0124 & 0.0121 & 0.0098 & 0.0074\\
        21132 & 0.2463 & 0.0013 & 0.0008 & 0.0016 & 0.0012 & 0.0007 & 0.0005\\
        21133 & 0.1647 & 0.0025 & 0.0013 & 0.0010 & 0.0005 & 0.0005 & 0.0004\\
        21122 & 0.3780 & 0.0128 & 0.0213 & 0.0178 & 0.0179 & 0.0146 & 0.0099\\
        21134 & 0.3471 & 0.0099 & 0.0187 & 0.0145 & 0.0149 & 0.0129 & 0.0082\\
        21135 & 0.0778 & 0.0011 & 0.0005 & 0.0026 & 0.0013 & 0.0012 & 0.0013\\
        23244 & 0.3277 & 0.0153 & 0.0316 & 0.0277 & 0.0271 & 0.0252 & 0.0166\\
        23245 & 0.1669 & 0.0006 & 0.0003 & 0.0004 & 0.0005 & 0.0004 & 0.0003\\
        23246 & 0.1923 & 0.0005 & 0.0005 & 0.0011 & 0.0007 & 0.0004 & 0.0003\\
        23247 & 0.1946 & 0.0019 & 0.0026 & 0.0059 & 0.0055 & 0.0039 & 0.0025\\
        23248 & 0.1167 & 0.0021 & 0.0005 & 0.0003 & 0.0004 & 0.0003 & 0.0003\\
        23249 & 0.1160 & 0.0021 & 0.0017 & 0.0017 & 0.0014 & 0.0006 & 0.0004\\
        23253 & 0.1386 & 0.0025 & 0.0016 & 0.0018 & 0.0021 & 0.0008 & 0.0003\\
        23250 & 0.1374 & 0.0037 & 0.0052 & 0.0051 & 0.0026 & 0.0016 & 0.0017\\
        23254 & 0.1422 & 0.0034 & 0.0050 & 0.0045 & 0.0052 & 0.0028 & 0.0022\\
        23251 & 0.2300 & 0.0012 & 0.0020 & 0.0011 & 0.0016 & 0.0005 & 0.0001\\
        23255 & 0.2164 & 0.0016 & 0.0036 & 0.0020 & 0.0014 & 0.0008 & 0.0006\\
        23252 & 0.0389 & 0.0015 & 0.0036 & 0.0014 & 0.0009 & 0.0003 & 0.0003\\
        23256 & 0.0544 & 0.0023 & 0.0022 & 0.0011 & 0.0012 & 0.0007 & 0.0006\\
        23257 & 0.1606 & 0.0007 & 0.0010 & 0.0004 & 0.0007 & 0.0004 & 0.0003\\
        23258 & 0.1855 & 0.0028 & 0.0036 & 0.0036 & 0.0028 & 0.0027 & 0.0025\\
        23259 & 0.2261 & 0.0003 & 0.0025 & 0.0020 & 0.0012 & 0.0009 & 0.0008\\
        23260 & 0.0967 & 0.0013 & 0.0033 & 0.0016 & 0.0006 & 0.0006 & 0.0006\\
        23261 & 0.3114 & 0.0076 & 0.0200 & 0.0142 & 0.0124 & 0.0101 & 0.0088\\
        23262 & 0.1745 & 0.0010 & 0.0025 & 0.0009 & 0.0007 & 0.0007 & 0.0007\\
        23263 & 0.2528 & 0.0006 & 0.0020 & 0.0023 & 0.0014 & 0.0015 & 0.0017\\
        23264 & 0.2321 & 0.0017 & 0.0026 & 0.0026 & 0.0039 & 0.0030 & 0.0023\\
        23265 & 0.1194 & 0.0005 & 0.0008 & 0.0003 & 0.0006 & 0.0004 & 0.0004\\
        23266 & 0.2121 & 0.0017 & 0.0038 & 0.0022 & 0.0019 & 0.0009 & 0.0005\\
        23270 & 0.2173 & 0.0031 & 0.0025 & 0.0018 & 0.0024 & 0.0010 & 0.0004\\
        23267 & 0.2302 & 0.0020 & 0.0015 & 0.0021 & 0.0017 & 0.0014 & 0.0012\\
        23271 & 0.2140 & 0.0026 & 0.0014 & 0.0016 & 0.0023 & 0.0012 & 0.0010\\
        23268 & 0.2492 & 0.0017 & 0.0037 & 0.0023 & 0.0011 & 0.0010 & 0.0010\\
        23272 & 0.2531 & 0.0021 & 0.0051 & 0.0026 & 0.0013 & 0.0009 & 0.0009\\
        23269 & 0.1161 & 0.0013 & 0.0039 & 0.0009 & 0.0012 & 0.0004 & 0.0002\\
        23273 & 0.1160 & 0.0012 & 0.0016 & 0.0014 & 0.0008 & 0.0004 & 0.0004\\
        23274 & 0.1912 & 0.0010 & 0.0029 & 0.0014 & 0.0011 & 0.0008 & 0.0007\\
        23275 & 0.2399 & 0.0011 & 0.0004 & 0.0017 & 0.0010 & 0.0010 & 0.0008\\
        23276 & 0.2550 & 0.0024 & 0.0035 & 0.0035 & 0.0043 & 0.0037 & 0.0030\\
        23277 & 0.1448 & 0.0002 & 0.0011 & 0.0003 & 0.0002 & 0.0002 & 0.0001\\
        23278 & 0.1570 & 0.0012 & 0.0003 & 0.0003 & 0.0006 & 0.0004 & 0.0002\\
        23279 & 0.2498 & 0.0005 & 0.0013 & 0.0028 & 0.0013 & 0.0010 & 0.0008\\
        23280 & 0.2461 & 0.0003 & 0.0014 & 0.0026 & 0.0012 & 0.0009 & 0.0007\\
        23281 & 0.2178 & 0.0004 & 0.0017 & 0.0031 & 0.0047 & 0.0030 & 0.0017\\
        23282 & 0.0052 & 0.0008 & 0.0007 & 0.0007 & 0.0006 & 0.0003 & 0.0002\\
        23283 & 0.0163 & 0.0011 & 0.0011 & 0.0006 & 0.0003 & 0.0002 & 0.0003\\
        23284 & 0.1647 & 0.0005 & 0.0003 & 0.0011 & 0.0008 & 0.0006 & 0.0006\\
        23285 & 0.2425 & 0.0005 & 0.0022 & 0.0034 & 0.0022 & 0.0015 & 0.0014\\
        23286 & 0.2012 & 0.0007 & 0.0042 & 0.0046 & 0.0062 & 0.0040 & 0.0029\\
        23287 & 0.0818 & 0.0007 & 0.0003 & 0.0004 & 0.0007 & 0.0006 & 0.0005\\
        23288 & 0.1822 & 0.0005 & 0.0059 & 0.0021 & 0.0013 & 0.0010 & 0.0008\\
        23289 & 0.1822 & 0.0005 & 0.0012 & 0.0020 & 0.0014 & 0.0010 & 0.0008\\
        \hline \hline
        \text{Mean} & 0.1894 & 0.0029 & 0.0043 & 0.0038 & 0.0035 & 0.0024 & 0.0018 \\
        \text{Median} & 0.1912 & 0.0017 & 0.0025 & 0.0022 & 0.0014 & 0.0010 & 0.0008\\
        \text{Std.} & 0.0843 & 0.0031 & 0.0055 & 0.0047 & 0.0046 & 0.0040 & 0.0028\\
\end{longtable}
}

\end{document}